\newcommand{\apt}{i-FWER test\xspace}
\newcommand{\one}{\mathbbm{1}\xspace}
\DeclareMathOperator*{\argmax}{argmax}
\DeclareMathOperator*{\argmin}{argmin}
\newtheorem{theorem}{Theorem}
\newtheorem{lemma}{Lemma}
\newtheorem{corollary}{Corollary}
\newtheorem*{setting}{Setting}
\newtheorem{remark}{Remark}
\begin{document}

\begin{center}

{\bf{\LARGE{Familywise Error Rate Control by \\ Interactive Unmasking}}}

\vspace*{.2in}

{\large{
\begin{tabular}{cccc}
Boyan Duan &  Aaditya Ramdas & Larry Wasserman \\
\end{tabular}
\texttt{\{boyand,aramdas,larry\}@stat.cmu.edu}
}}

\vspace*{.2in}

\begin{tabular}{c}
Department of Statistics and Data Science,\\
Carnegie Mellon University,  Pittsburgh, PA  15213
\end{tabular}

\vspace*{.2in}

\today

\vspace*{.2in}

\begin{abstract}
We propose a method for multiple hypothesis testing with familywise error rate (FWER) control, called the \apt. Most testing methods are predefined algorithms that do not allow modifications after observing the data. However, in practice, analysts tend to choose a promising algorithm after observing the data; unfortunately, this violates the validity of the conclusion. The \apt allows much flexibility: a human (or a computer program acting on the human's behalf) may adaptively guide the algorithm in a data-dependent manner. We prove that our test controls FWER if the analysts adhere to a particular protocol of “masking” and “unmasking”. We demonstrate via numerical experiments the power of our test under structured non-nulls, and then explore new forms of masking.
\end{abstract}
\end{center}

\section{Introduction} \label{sec::intro}

Hypothesis testing is a critical instrument in scientific research to quantify the significance of a discovery. For example, suppose an observation $Z \in \mathbb{R}$ follows a Gaussian distribution with mean $\mu$ and unit variance. We wish to distinguish between the following null and alternative hypotheses regarding the mean value:
\begin{align} \label{eq:normal_hp}
    H_0: \mu \leq 0 \quad \text{versus} \quad H_1: \mu > 0.
\end{align}
A test decides whether to reject the null hypothesis, usually by calculating a \emph{$p$-value}: the probability of observing an outcome at least as extreme as the observed data under the null hypothesis. In the above example, the $p$-value is $P = 1 - \Phi(Z)$, where $\Phi$ is the cumulative distribution function (CDF) of a standard Gaussian. When the true mean~$\mu$ is exactly zero, the $p$-value is uniformly distributed; when $\mu < 0$, it has nondecreasing density. A low $p$-value suggests evidence to reject the null hypothesis.

Recent work on testing focuses on a large number of hypotheses, referred to as \emph{multiple testing}, driven by various applications in Genome-wide Association Studies, medicine, brain imaging, etc. (see \citep{farcomeni2008review, goeman2014multiple} and references therein). In such a setup, we are given $n$ null hypotheses $\{H_i\}_{i = 1}^n$ and their $p$-values $P_1, \ldots, P_n$. A multiple testing method examines the $p$-values, possibly together with some side/prior information, and decides whether to reject each hypothesis (i.e., infers which ones are the non-nulls). Let $\mathcal{H}_0$ be the set of hypotheses that are truly null and $\mathcal{R}$ be the set of rejected hypotheses, then $V = |\mathcal{H}_0 \cap \mathcal{R}|$ is the number of erroneous rejections. This paper considers a classical error metric, \textit{familywise error rate}:
\[
\text{FWER} := \mathbb{P}(V \geq 1),
\]
which is the probability of making any false rejection. Given a fixed level $\alpha \in (0,1)$, a good test should have valid error control that FWER $\leq \alpha$, and high \emph{power}, defined as the expected proportion of rejected non-nulls:
\[
\text{power} := \mathbb{E}\left(\frac{|\mathcal{R} \backslash \mathcal{H}_0|}{|[n] \backslash \mathcal{H}_0|}\right),
\]
where $[n] := \{1,\ldots, n\}$ denotes the set of all hypotheses. 

Most methods with FWER control follow a prespecified algorithm (see, for instance, \citep{holm1979simple, hochberg1988sharper, bretz2009graphical, goeman2011multiple, tamhane2018advances} and references therein). However, in practice, analysts tend to try out several algorithms or parameters on the same dataset until results are ``satisfying''. When a second group repeats the same experiments, the outcomes are often not as good. This problem in reproducibility comes from the bias in selecting the analysis tool: researchers choose a promising method after observing the data, which violates the validity of error control. Nonetheless, data would greatly help us understand the problem and choose an appropriate method if it were allowed. This motivates us to propose an interactive method called the \emph{\apt}, that (a)~can use observed data in the design of testing algorithm, and (b)~is a multi-step procedure such that a human can monitor the performance of the current algorithm and is allowed to adjust it at any step interactively; and still controls FWER.

\begin{figure*}[t]
    \centering
\tikzstyle{op} = [rectangle, dashed, minimum width=1.5cm, minimum height=0.6cm, text centered, draw=black, fill=none]
\tikzstyle{rec} = [rectangle, minimum width=1.5cm, minimum height=0.6cm, text centered, draw=black, fill=none]
\tikzstyle{rrec} = [rectangle, rounded corners, minimum width=1.5cm, minimum height=0.6cm, text centered, draw=purple]
\tikzstyle{arrow} = [thick,->,>=stealth]

\begin{tikzpicture}[node distance=2cm]
\label{flow:exclude}

\node (pval)[rec] {$p$-values $\{P_i\}$};
\node (gp) [rec, right of=pval, xshift = 1.5cm, yshift = 1.5cm]{$\{g(P_i)\}$};
\node (hp) [rec, right of=pval, xshift = 1.5cm, yshift = -1.5cm]{$\{h(P_i)\}$};
\node (side) [op, above of= gp, xshift = 2.3cm, yshift = -1cm]{Prior/side information, covariates $\{x_i\}$};
\node (rejset) [rec, right of=gp, xshift = 4cm]{Rejection set $\mathcal{R}_t$};
\node (error) [rec, right of=hp, xshift = 4cm]{Estimate $\widehat{\text{FWER}}_t$};
\node (reject) [right of=error, xshift = 2.5cm]{Report $\mathcal{R}_t$};

\node (select) [above of=rejset, xshift = 0.85cm, yshift = -0.2cm,  color = cyan]{Selection};
\node (control) [below of=error, xshift = 0.6cm, yshift = 0.9cm, color = red]{Error control};

\draw [arrow] (pval) -- node[rrec, anchor= south, pos = 0.5, yshift = 0.1cm, color = purple] {Masking} (2.5,0) -- (gp);
\draw [arrow] (2.5,0) -- (hp);
\draw [arrow, ultra thick] (gp) -- node[anchor= south, pos = 0.5, xshift = 1cm] {Shrink} (rejset);
\draw [arrow] (rejset) -- (error);
\draw [arrow] ([yshift=-0.2em]hp.east) -- ([yshift=-0.2em]error.west);
\draw [arrow] (error) -- node[anchor= south, pos = 0.6] {If $\widehat{\text{FWER}}_t \leq \alpha$} (reject);
\draw [arrow, dashed] (side) -- (5,1.6);
\draw [arrow, dashed] ([yshift=0.2em]error.west) --node[anchor= south, pos = 0.5] {If $\widehat{\text{FWER}}_t > \alpha$} ([yshift=0.2em]hp.east);
\draw [arrow, dashed] (hp) -- node[anchor= west, pos = 0.5] {Unmasking} (5,1.4);

\draw[rounded corners=15pt, color=cyan, thick]
  (2.5,0.8) rectangle ++(8.6,2.2);
\draw[rounded corners=15pt, color=red, thick]
  (2.5,-0.5) rectangle ++(8.6,-1.8);
\end{tikzpicture}
    \caption{A schematic of the \apt. All $p$-values are initially `masked': all $\{g(P_i)\}$ are revealed to the analyst/algorithm, while all $\{h(P_i)\}$ remain hidden, and the initial rejection set is $\mathcal{R}_0=[n]$. If $\widehat{\text{FWER}}_t > \alpha$, the analyst chooses a $p$-value to `unmask' (observe the masked $h(P)$-value), effectively removing it from the proposed rejection set $\mathcal R_t$; importantly, using any available side information and/or covariates and/or working model, the analyst can shrink $\mathcal R_t$ in any manner. This process continues until $\widehat{\text{FWER}}_t \leq \alpha$ (or $\mathcal R_t = \emptyset$).}
    \label{fig:flow_apt}
\end{figure*}
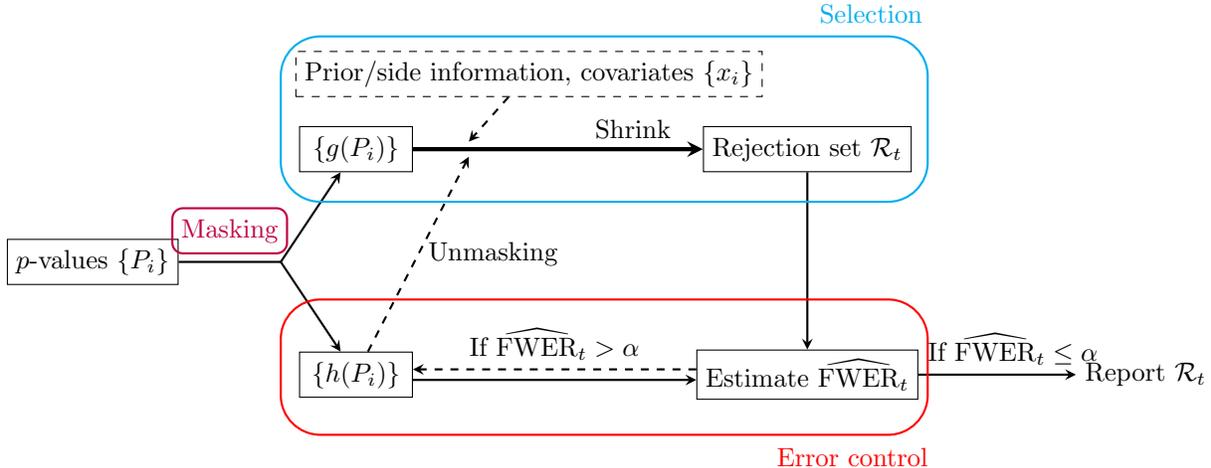

The word ``interactive'' is used in many contexts in machine learning and statistics. Specifically, multi-armed bandits, active learning, online learning, reinforcement learning, differential privacy, adaptive data analysis, and post-selection inference all involve some interaction. Each of these paradigms has a different goal, a different model of interaction, and different mathematical tools to enable and overcome the statistical dependencies created by data-dependent interaction. The type of interaction proposed in this paper is different from the above. Here, the goal is to control FWER in multiple testing. The model of interaction involves ``masking'' of $p$-values followed by progressive unmasking (details in the next paragraph). The technical tools used are (a) for $p$-values of the true nulls (\emph{null $p$-values}), the masked and revealed information are independent, (b) an empirical upper bound on the FWER that can be continually updated using the revealed information. 

The key idea that permits interaction while ensuring FWER control is ``masking and unmasking'', proposed by \citet{lei2018adapt, lei2017star}.  In our method, it has three main steps and alternates between the last two (Figure~\ref{fig:flow_apt}):
\begin{enumerate}
   \item \textbf{Masking}. Given a parameter $p_* \in (0,1)$, each $p$-value $P_i$ is decomposed into two parts by functions $h:[0,1] \to \{-1,1\}$ and $g:[0,1] \to (0,p_*)$:
   \begin{align}
   \label{eq:decomp_varyp}
    h(P_i; p_*) =~& 2\cdot\one\{P_i < p_*\} - 1;\nonumber {}\\
    \text{and } g(P_i; p_*) =~& \min\left\{P_i, \frac{p_*}{1-p_*}(1-P_i) \right\},
   \end{align}
   where $g(P_i)$, the \textit{masked $p$-value}, is used to interactively adjust the algorithm, and $h(P_i)$, the \textit{revealed missing bit}, is used for error control. Note that $h(P_i)$ and $g(P_i)$ are independent if $H_i$ is null ($P_i$ is uniformly distributed); this fact permits interaction with an analyst without any risk of violating FWER control.
   
   \begin{figure}[t]
    \centering
    \hspace{3cm}
    \begin{subfigure}[t]{0.23\textwidth}
        \centering
        \includegraphics[trim=0cm 0.5cm 1cm 2cm, clip, width=\linewidth]{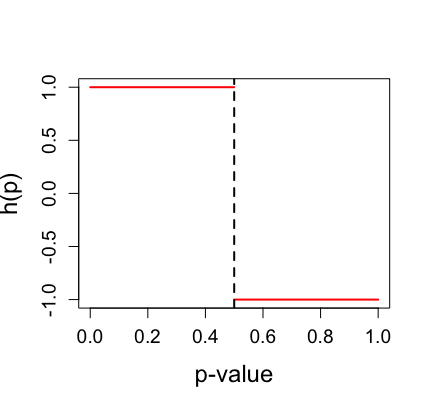}
    \end{subfigure}
    \hfill
    \begin{subfigure}[t]{0.23\textwidth}
        \centering
        \includegraphics[trim=0cm 0.5cm 1cm 2cm, clip, width=\linewidth]{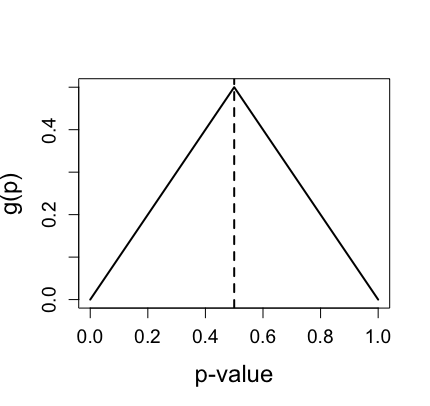}
    \end{subfigure}
    \hspace{3cm}
    \caption{Functions for masking~\eqref{eq:decomp_varyp}: missing bits~$h$ (left) and masked $p$-values~$g$ (right) when $p_* = 0.5$. For uniform $p$-values, $g(P)$ and $h(P)$ are independent.}
    \label{fig:hg}
\end{figure}

   \item \textbf{Selection}. Consider a set of candidate hypotheses to be rejected (rejection set), denoted as $\mathcal{R}_t$ for iteration~$t$. We start with all the hypotheses included, $\mathcal{R}_0 = [n]$. At each iteration, the analyst excludes possible nulls from the previous $\mathcal{R}_{t-1}$, using all the available information (masked $p$-values, progressively unmasked $h(P_i)$ from step 3 and possible prior information). Note that our method does not automatically use prior information and masked $p$-values. The analyst is free to use any black-box prediction algorithm or Bayesian working model that uses the available information, and orders the hypotheses possibly using an estimated likelihood of being non-null. This step is where a human is allowed to incorporate their subjective choices.
   
   \item \textbf{Error control (and unmasking)}. The FWER is estimated using $h(P_i)$. If the estimation $\widehat{\text{FWER}}_t > \alpha$, the analyst goes back to step 2, provided with additional information: unmasked (reveal) $h(P_i)$ of the excluded hypotheses, which improves her understanding of the data and guides her choices in the selection step.
\end{enumerate}

The rest of the paper is organized as follows. In Section~\ref{sec::apt}, we describe the \apt in detail. In Section~\ref{sec::sim}, we implement the interactive test under a clustered non-null structure. In Section~\ref{sec::mask}, we propose two alternative ways of masking $p$-values and explore their advantages.

\section{An interactive test with FWER control} \label{sec::apt}
Interaction shows its power mostly when there is prior knowledge. We first introduce the \emph{side information}, which is available before the test in the form of covariates $x_i$ as an arbitrary vector (mix of binary, real-valued, categorical, etc.) for each hypothesis $i$. For example, if the hypotheses are arranged in a rectangular grid (such as when processing an image), then $x_i$ could be the coordinate of hypothesis $i$ on the grid. Side information can help the analyst to exclude possible nulls, for example, when the non-nulls are believed to form a cluster on the grid by some domain knowledge. Here, we state the algorithm and error control with the side information treated as fixed values, but side information can be random variables, like the bodyweight of patients when testing whether each patient reacts to a certain medication. Our test also works for random side information $X_i$ by considering the conditional behavior of $p$-values given~$X_i$. 

The \apt proceeds as progressively shrinking a candidate rejection set $\mathcal{R}_t$ at step $t$, 
\[
[n] = \mathcal{R}_0 \supseteq \mathcal{R}_1 \supseteq \ldots \supseteq \mathcal{R}_n = \emptyset,
\]
where recall $[n]$ denotes the set of all the hypotheses. We assume without loss of generality that one hypothesis is excluded in each step. Denote the hypothesis excluded at step $t$ as $i_t^*$. The choice of $i_t^*$ can use the information available to the analyst before step $t$, formally defined as a filtration (sequence of nested $\sigma$-fields) \footnote{ This filtration denotes the information used for choosing~$i_t^*$. The filtration with respect to which the stopping time in Algorithm~\ref{alg:apt} is measurable includes the scale of $R_t^-$:
${\mathcal{G}_{t-1} 
    := \sigma\Big(\mathcal{F}_{t-1}, |i\in R_t: h(P_i) = -1|\Big)}$.}:
\begin{align} \label{eq:sigma}
    \mathcal{F}_{t-1} 
    := \sigma\Big(\{x_i, g(P_i)\}_{i = 1}^n, \{P_i\}_{i \notin \mathcal{R}_{t-1}} \Big),
\end{align}
where we unmask the $p$-values for the hypotheses that are excluded from the rejection set $\mathcal{R}_{t-1}$.

To control FWER, the number of false discoveries $V$ is estimated using only the binary missing bits $h(P_i)$. The idea is to partition the candidate rejection set $\mathcal{R}_t$ into $\mathcal{R}_t^+$ and $\mathcal{R}_t^-$ by the value of $h(P_i)$: 
\begin{align*}
    \mathcal{R}_t^+ :=~& \{i\in \mathcal{R}_t: h(P_i) = 1\} \equiv  \{i\in \mathcal{R}_t: P_i < p_*\},{}\\
    \mathcal{R}_t^- :=~& \{i\in \mathcal{R}_t: h(P_i) = -1\} \equiv \{i\in \mathcal{R}_t: P_i \geq p_*\};
\end{align*}
recall that $p_*$ is the prespecified parameter for masking~\eqref{eq:decomp_varyp}. Instead of rejecting every hypothesis in $\mathcal{R}_t$, note that the test only rejects the ones in $\mathcal{R}_t^+$, whose $p$-values are smaller than~$p_*$ in $\mathcal{R}_t$. Thus, the number of false rejection $V$ is ${|\mathcal{H}_0 \cap \mathcal{R}_t^+|}$ and we want to control FWER, ${\mathbb{P}(V \geq 1)}$. The distribution of ${|\mathcal{H}_0 \cap \mathcal{R}_t^+|}$ can be estimated by $|\mathcal{H}_0 \cap \mathcal{R}_t^-|$ using the fact that $h(P_i)$ is a (biased) coin flip. But $\mathcal{H}_0$ (the set of true nulls) is unknown, so we use $|\mathcal{R}_t^-|$ to upper bound $|\mathcal{H}_0 \cap \mathcal{R}_t^-|$, and propose an estimator of FWER:
\begin{align} \label{eq:fwer_hat}
    \widehat{\text{FWER}}_t = 1 - (1 - p_*)^{|\mathcal{R}_t^-| + 1}.
\end{align}
Overall, the \apt shrinks $\mathcal{R}_t$ until $\widehat{\text{FWER}}_t \leq \alpha$ and rejects only the hypotheses in $\mathcal{R}_t^+$ (Algorithm~\ref{alg:apt}). 

\begin{algorithm}[t]
   \caption{The \apt}
   \label{alg:apt}
\begin{algorithmic}
   \STATE {\bfseries Input:} Side information and $p$-values $\{x_i, P_i\}_{i = 1}^n$, target FWER level~$\alpha$, and parameter~$p_*$;
   \STATE {\bfseries Procedure:} 
   \STATE Initialize $\mathcal{R}_0 = [n]$;
   \FOR{$t=1$ {\bfseries to} $n$}
   \STATE 1.~Pick any $i_t^* \in \mathcal{R}_{t-1}$, using $\{x_i, g(P_i)\}_{i=1}^n$ and progressively unmasked $\{h(P_i)\}_{i \notin \mathcal{R}_{t-1}}$;
   \STATE 2.~Exclude $i_t^*$ and update ${\mathcal{R}_t = \mathcal{R}_{t-1}\backslash \{i_t^*\}}$; 
   \IF{$\widehat{\text{FWER}}_t \equiv 1 - (1 - p_*)^{|\mathcal{R}_t^-| + 1} \leq \alpha$}
   \STATE Reject $\{H_i: i \in \mathcal{R}_t, h(P_i) = 1\}$ and exit;
   \ENDIF
   \ENDFOR
\end{algorithmic}
\end{algorithm}

\begin{remark}
\label{rmk:p*}
The parameter $p_*$ should be chosen in $(0,\alpha]$, because otherwise $\widehat{\text{FWER}}_t$ is always larger than $\alpha$ and no rejection would be made. In principle, because $|\mathcal{R}_t^-|$ only takes integer values, we should $p_*$ such that $\frac{\log (1 - \alpha)}{\log (1 - p_*)}$ is an integer; otherwise, the estimated FWER at the stopping time, $\widehat{\text{FWER}}_\tau$, would be strictly smaller than $\alpha$ rather than equal. Our numerical experiments suggest that the power is relatively robust to the choice of $p_*$. A default choice can be $p_* \approx \alpha/2$ (see detailed discussion in Appendix~\ref{apd::para}).
\end{remark}

\begin{remark} \label{rmk:k-fwer}
The above procedure can be easily extended to control $k$-FWER:
\begin{align} \label{eq:k-fwer}
    k\text{-FWER} := \mathbb{P}(V \geq k),
\end{align}
by estimating $k$-FWER as
\begin{align*} 
    \widehat{k\text{-FWER}}_t = 1 - \sum_{i=0}^{k-1}\binom{|\mathcal{R}_t^-|+i}{i}(1 - p_*)^{|\mathcal{R}_t^-| + 1} p_*^i.
\end{align*}
\end{remark}

The error control of \apt uses an observation that at the stopping time, the number of false rejections is stochastically dominated by a negative binomial distribution. The complete proof is in Appendix~\ref{apd:thm1}.
\begin{theorem}
\label{thm:fwer}
Suppose the null $p$-values are mutually independent and they are independent of the non-nulls, then the \apt controls FWER at level $\alpha$.
\end{theorem}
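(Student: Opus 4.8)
The plan is to establish the stronger claim flagged just before the statement: after conditioning on everything except the hidden bits of the true nulls, the number of false rejections at the stopping time is stochastically dominated by a negative binomial random variable, from which $\text{FWER}\le\alpha$ follows from the constraint on $p_*$. Concretely, I would first condition on the $\sigma$-field $\mathcal C:=\sigma\big(\{x_i\}_{i=1}^n,\ \{g(P_i)\}_{i\in\mathcal H_0},\ \{P_i\}_{i\notin\mathcal H_0}\big)$ — all covariates, the masked $p$-values of the true nulls, and the entire non-null $p$-value vector. Under the theorem's independence hypothesis together with the mask's defining property that $h(P)\perp g(P)$ for uniform $P$ (Figure~\ref{fig:hg}), the bits $\{h(P_i)\}_{i\in\mathcal H_0}$ are, given $\mathcal C$, mutually independent with $\mathbb P(h(P_i)=1\mid\mathcal C)=p_*$ (this probability is $\le p_*$ if a null is only stochastically larger than uniform, which is all that is needed below), while everything else the test uses — every $g(P_i)$, every $x_i$, and $P_i$ for an already-excluded hypothesis, since $P_i$ is recoverable from the pair $(g(P_i),h(P_i))$ — is measurable with respect to $\mathcal C$ and the bits revealed so far. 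I would also continue the loop past $\tau$, excluding one further $\mathcal F_{t-1}$-measurable hypothesis per step until $\mathcal R_n=\emptyset$; this changes neither $\mathcal R_\tau$ nor $V=|\mathcal H_0\cap\mathcal R_\tau^+|$.

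Next I would order the true nulls as $\ell_1,\dots,\ell_{n_0}$ by \emph{decreasing} exclusion step ($\ell_1$ excluded last, $\ell_{n_0}$ first; $n_0=|\mathcal H_0|$), and show that $\big(h(P_{\ell_1}),\dots,h(P_{\ell_{n_0}})\big)$ is, conditionally on $\mathcal C$, i.i.d.\ $\mathrm{Bernoulli}(p_*)$ — a statement about the joint law of the (random) labels and their bits. The reason is that when $\ell_k$ is excluded, its identity is chosen using only $\mathcal F_{t-1}$ from \eqref{eq:sigma}, which does not contain $h(P_{\ell_k})$ (that hypothesis is still in the candidate set at that moment) but does contain $\ell_{k+1},\dots,\ell_{n_0}$ together with $h(P_{\ell_{k+1}}),\dots,h(P_{\ell_{n_0}})$ (those nulls were excluded earlier). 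Since a null bit is independent of all covariates, all masked $p$-values, and all other $p$-values, this gives $h(P_{\ell_k})\mid\big(\mathcal C,h(P_{\ell_{k+1}}),\dots,h(P_{\ell_{n_0}})\big)\sim\mathrm{Bernoulli}(p_*)$, and backward induction on $k$ completes the claim. This is exactly where the protocol's separation matters: the selection step may use only $\mathcal F_{t-1}$, whereas $|\mathcal R_t^-|$ is visible only to the stopping rule.

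Then I would put $\bar c:=\big\lfloor\log(1-\alpha)/\log(1-p_*)\big\rfloor-1\ (\ge0$ since $p_*\le\alpha)$, so that $\widehat{\text{FWER}}_t\le\alpha$ holds precisely when $|\mathcal R_t^-|\le\bar c$. If $\tau=n$ then $\mathcal R_\tau=\emptyset$ and $V=0$; otherwise $|\mathcal R_\tau^-|\le\bar c$. Because the surviving nulls $\mathcal H_0\cap\mathcal R_\tau$ are exactly those excluded latest, they form a prefix $\{\ell_1,\dots,\ell_m\}$ of the reverse list, and at most $|\mathcal R_\tau^-|\le\bar c$ of their bits equal $-1$. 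Hence the $(\bar c+1)$-st occurrence of $-1$ in the i.i.d.\ sequence $h(P_{\ell_1}),h(P_{\ell_2}),\dots$ (extended by fresh i.i.d.\ bits if it does not occur within $n_0$ terms) lies beyond position $m$, so $V$ — the number of $+1$'s among the first $m$ bits — is at most the number $G$ of $+1$'s that appear before that $(\bar c+1)$-st $-1$. This is a \emph{pointwise} inequality $V\le G$, and $G$ is negative binomial with $\mathbb P(G=0)=(1-p_*)^{\bar c+1}\ge1-\alpha$; therefore $\mathbb P(V\ge1\mid\mathcal C)\le\mathbb P(G\ge1)=1-(1-p_*)^{\bar c+1}\le\alpha$, and taking expectations over $\mathcal C$ finishes. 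Replacing $\{G\ge1\}$ by $\{G\ge k\}$ yields the $k$-FWER bound of Remark~\ref{rmk:k-fwer}.

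The one genuinely delicate point — the part I expect to require the most care — is that $\tau$ is \emph{not} a stopping time for the reverse-order filtration generated by $h(P_{\ell_1}),\dots,h(P_{\ell_k})$: unlike the selection step, the stopping rule depends on $|\mathcal R_t^-|$, which counts hidden bits of hypotheses \emph{still in} the candidate set, so $\tau$ — hence $m$ — can depend on bits that sit late in the reverse order. A naive optional-stopping argument therefore fails, and the crux is to observe that Step 3 never needs $m$ to be a stopping time: it only uses that the realized surviving nulls form a reverse-order prefix containing at most $\bar c$ bits equal to $-1$, which is what upgrades $V\le G$ to a deterministic inequality; after that, the i.i.d.\ structure from Step 2 alone closes the bound. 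The remaining pieces — verifying that the conditioning in Step 1 preserves the $\mathrm{Bernoulli}(p_*)$ (resp.\ $\mathrm{Bernoulli}(\le p_*)$) law of the null bits, monotonicity of the negative-binomial tail in its success probability to handle super-uniform nulls, and the $\tau=n$ bookkeeping — are routine.
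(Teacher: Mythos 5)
Your proof is correct, and it shares the paper's overall skeleton---first show that the null missing bits, read in (reverse) exclusion order, are conditionally i.i.d.\ $\mathrm{Bernoulli}(p_*)$, then dominate the number of false rejections at the stopping time by a negative binomial---but the second half is executed by a genuinely different and more elementary route. Your Step 2 is, modulo the extra conditioning, the paper's Lemma~\ref{lm:seq}/Corollary~\ref{col:seq}. For the domination step, the paper (Appendix~\ref{apd:thm1}) handles the interleaving of non-nulls with a weighted-Bernoulli argument (Corollary~\ref{col:nb_weighted}) and handles the ``backward'' stopping rule by identifying it with a forward negative-binomial stopping time (Lemma~\ref{lm:basic_nb}, Corollary~\ref{col:nb}); you instead condition once and for all on $\mathcal{C}$ (covariates, masked null $p$-values, and the entire non-null vector), so that the reverse-ordered null bits are the only remaining randomness, and then bypass all stopping-time considerations with the pointwise coupling $V\le G$: since at the stopping time at most $v-1=\lfloor\log(1-\alpha)/\log(1-p_*)\rfloor-1$ of the surviving bits equal $-1$, the $v$-th $-1$ of the i.i.d.\ sequence lies beyond the surviving prefix, so every false rejection is counted in $G\sim\mathrm{NB}(v,p_*)$, giving $\mathbb{P}(V\ge 1\mid\mathcal{C})\le 1-(1-p_*)^{v}\le\alpha$. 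Your diagnosis of the delicate point is exactly right: $\tau$ is not a stopping time for the reverse-order bit filtration, and your realization-by-realization inequality does the work that the paper's Corollaries~\ref{col:nb}--\ref{col:nb_weighted} accomplish with more machinery. What your route buys is brevity and transparency (no weighted sums, no backward/forward stopping-time identification); what the paper's route buys is an estimator-process formulation that carries over directly to $k$-FWER and, via Corollaries~\ref{col:seq_cons} and~\ref{col:nb_cons}, to conservative nulls.

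One parenthetical claim should be corrected: for a null that is merely stochastically larger than uniform it is \emph{not} true that $\mathbb{P}\left(h(P_i)=1\mid\mathcal{C}\right)\le p_*$, because $\mathcal{C}$ contains $g(P_i)$; a super-uniform null whose density concentrates just below $p_*$ makes this conditional probability exceed $p_*$. The condition actually needed is mirror-conservativeness~\eqref{cond:mirror_consv}, i.e.\ exactly the bound~\eqref{eq:cond_cons}, and Appendix~\ref{apd:mirror-cons} notes that neither notion implies the other. This does not affect the theorem as stated (uniform nulls), for which your argument is complete.
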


\begin{remark} \label{rmk:cons}
The null $p$-values need not be exactly uniformly distributed. For example, FWER control also holds when the null $p$-values have nondecreasing densities. Appendix~\ref{apd:mirror-cons} presents the detailed technical condition for the distribution of the null $p$-values. 
\end{remark}

\paragraph{Related work.} 
The \apt mainly combines and generalizes two sets of work: (a)~we use the idea of masking from \citet{lei2018adapt, lei2017star} and extend it to a more stringent error metric, FWER; (b)~we use the method of controlling FWER from \citet{janson2016familywise} by converting a one-step procedure in the context of ``knockoff'' statistics in regression problem to a multi-step (interactive) procedure in our context of $p$-values. 

\citet{lei2018adapt} and \citet{lei2017star} introduce the idea of masking and propose interactive tests that control \emph{false discovery rate} (FDR):
\[
\text{FDR} := \mathbb{E}\left(\frac{V}{|\mathcal{R}| \vee 1}\right),
\]
the expected proportion of false discoveries. It is less stringent than FWER, the probability of making \emph{any} false discovery. Their method uses the special case of masking~\eqref{eq:decomp_varyp} when $p_* = 0.5$, and estimate $V$ by $\sum_{i \in \mathcal{R}_t} \one \{h(P_i) = -1\}$, or equivalently $\sum_{i \in \mathcal{R}_t} \one \{P_i < 0.5\}$. While it provides a good estimation on the \emph{proportion} of false discoveries, the indicator $\one\{P_i < 0.5\}$ has little information on the correctness of \emph{individual} rejections. To see this, suppose there is one rejection, then FWER is the probability of this rejection being false. Even if $h(P_i) = 1$, which indicates the $p$-value is on the smaller side, the tightest upper bound on FWER is as high as $0.5$. Thus, our method uses masking~\eqref{eq:decomp_varyp} with small $p_*$, so that $h(P_i) = 1$, or equivalently $P_i < p_*$, suggests a low chance of false rejection.

In the context of a regression problem to select significant covariates, \citet{janson2016familywise} proposes a one-step method with control on $k$-FWER; recall definition in~\eqref{eq:k-fwer}. The FWER is a special case of $k$-FWER when $k = 1$, and as $k$ grows larger, $k$-FWER is a less stringent error metric. Their method decomposes statistics called ``knockoff'' \citep{barber2015controlling} into the magnitudes for ordering covariates (without interaction) and signs for estimating $k$-FWER, which corresponds to decomposing $p$-values into $g(P_i)$ and $h(P_i)$ when $p_* = 0.5$. However, the decomposition as magnitude and sign restricts the corresponding $p$-value decomposition with a single choice of $p_*$ as $0.5$, making the $k$-FWER control conservative and power low when $k = 1$; yet our method shows high power in experiments. Their error control uses the connection between $k$-FWER and a negative binomial distribution, based on which we propose the estimator $\widehat{\text{FWER}}_t$ for our multi-step procedure, and prove the error control even when interaction is allowed. As far as we know, this estimator viewpoint of the FWER procedure is also new in the literature.

Jelle Goeman (private communication) pointed out that the \apt can be interpreted from the perspective of closed testing \citep{marcus1976closed}. Our method is also connected with the fallback procedure \citep{wiens2005fallback}, which allows for arbitrary dependence but is not interactive and combine covariate information with $p$-values to determine the ordering. See Appendix~\ref{apd:closed_testing} for details. 

\vspace{1cm}
\paragraph{The \apt in practice.}
Technically in a fully interactive procedure, a human can examine all the information in $\mathcal{F}_{t-1}$ and pick $i_t^*$ subjectively or by any other principle, but doing so for every step could be tedious and unnecessary. 
Instead, the analyst can design an automated version of the \apt, and still keeps the flexibility to change it at any iteration.  For example, the analyst can implement an automated algorithm to first exclude $80\%$ hypotheses (say). If $\widehat{\text{FWER}}_t$ is still larger than level $\alpha$, the analyst can pause the procedure manually to look at the unmasked $p$-value information, update her prior knowledge, and modify the current algorithm. The next section presents an automated implementation of the \apt that takes into account the structure on the non-nulls.

\section{An instantiation of an automated algorithm, and numerical experiments} \label{sec::sim}
One main advantage of the \apt is the flexibility to include prior knowledge and human guidance. The analyst might have an intuition about what structural constraints the non-nulls have. For example, we consider 
two structures: (a) a grid of hypotheses where the non-nulls are in a cluster (of some size/shape, at some location; see Figure~\ref{fig:true_cluster}), which is a reasonable prior belief when one wants to identify a tumor in a brain image; and (b) a tree of hypotheses where a child can be non-null only if its parent is non-null, as may be the case in applications involving wavelet decompositions.

\begin{figure}[t]
    \centering
    \hspace{1.5cm}
    \begin{subfigure}[t]{0.2\textwidth}
        \centering
        \includegraphics[width=1\linewidth]{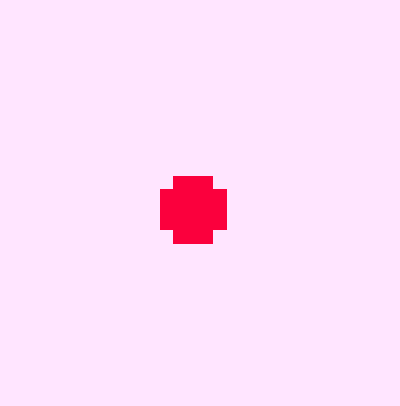}
        \caption{True non-nulls (21 hypotheses).}
        \label{fig:true_cluster}
    \end{subfigure}
    \hfill
    \begin{subfigure}[t]{0.2\textwidth}
        \centering
        \includegraphics[width=1\linewidth]{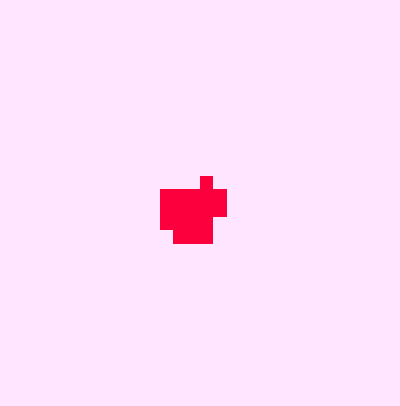}
        \caption{18 rejections by the \apt.}
    \end{subfigure}
    \hfill
    \begin{subfigure}[t]{0.2\textwidth}
        \centering
        \includegraphics[width=1\linewidth]{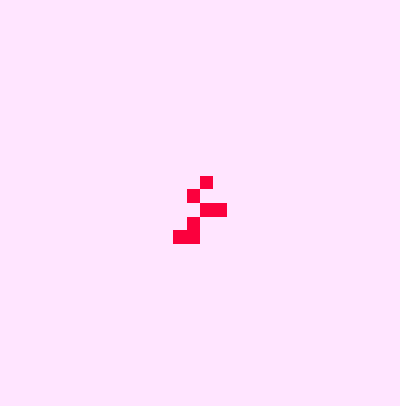}
        \caption{7 rejections by the {\v{S}}id{\'a}k correction}
    \end{subfigure}
    \hspace{1.5cm}
    \caption{An instance of rejections by the \apt and the {\v{S}}id{\'a}k correction \citep{vsidak1967rectangular}. Clustered non-nulls are simulated from the setting in Section~\ref{set:cluster} with a fixed alternative mean $\mu = 3$.}
    \label{fig:visual_cluster}
\end{figure}

\subsection{An example of an automated algorithm under clustered non-null structure} \label{sec::auto_alg}

We propose an automated algorithm of the \apt that incorporates the structure of clustered non-nulls. Here, the side information $x_i$ is the coordinates of each hypothesis~$i$. The idea is that at each step of excluding possible nulls, we peel off the boundary of the current $\mathcal{R}_t$, such that the rejection set stays connected (see Figure~\ref{fig:visual_exclude}). 

\begin{figure}[h!]
    \centering
    \hspace{1cm}
    \begin{subfigure}[t]{0.15\textwidth}
        \centering
        \includegraphics[ width=1\linewidth]{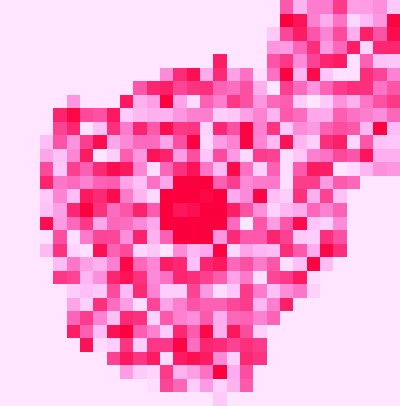}
    \end{subfigure}
    \hfill
    \begin{subfigure}[t]{0.15\textwidth}
        \centering
        \includegraphics[width=1\linewidth]{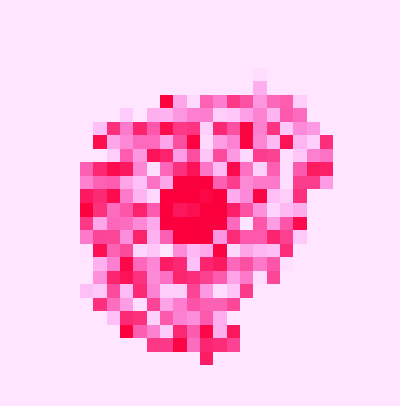}
    \end{subfigure}
    \hfill
    \begin{subfigure}[t]{0.15\textwidth}
        \centering
        \includegraphics[width=1\linewidth]{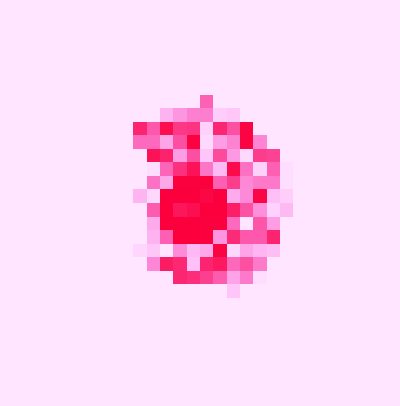}
    \end{subfigure}
    \hfill
    \begin{subfigure}[t]{0.15\textwidth}
        \centering
        \includegraphics[width=1\linewidth]{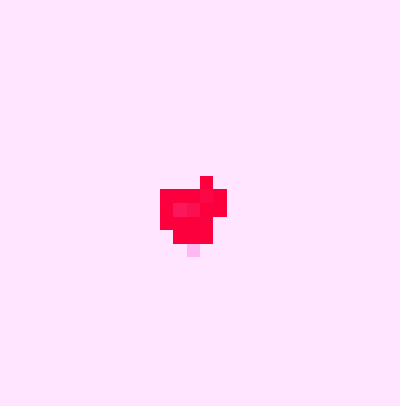}
    \end{subfigure}
    \hspace{1cm}
    \caption{An illustration of $\mathcal{R}_t$ generated by the automated algorithm described in Section~\ref{sec::auto_alg}, at $t = 50, 100, 150$ and $t = 220$ when the algorithm stops. The $p$-values in $\mathcal{R}_t$ are plotted.}
    \label{fig:visual_exclude}
\end{figure}

Suppose each hypothesis $H_i$ has a score $S_i$ to measure the likelihood of being non-null (\emph{non-null likelihood}). A simple example is $S_i = -g(P_i)$ since larger $g(P_i)$ indicates less chance of being a non-null (more details on $S_i$ to follow). We now describe an explicit fixed procedure to shrink $\mathcal{R}_t$. Given two parameters $d$ and $\delta$ (eg. $d = 5, \delta = 5\%$), it replaces step 1 and 2 in  Algorithm~\ref{alg:apt} as follows:
\begin{enumerate}
    \item[(a)] Divide $\mathcal{R}_{t-1}$ from its center to $d$ cones (like slicing a pizza); in each cone, consider a fraction $\delta$ of hypotheses farthest from the center, denoted $\smash{\mathcal{R}_{t-1}^1, \ldots, \mathcal{R}_{t-1}^d}$;
    \item[(b)]  Compute ${\bar S^j = \frac{1}{|\mathcal{R}_{t-1}^j|} \sum_{i \in \mathcal{R}_{t-1}^j} S_i}$ for $j = 1, \ldots, d$;
    \item[(c)] Update ${\mathcal{R}_t = \mathcal{R}_{t-1}\backslash \mathcal{R}_{t-1}^k}$, where ${k = \argmin_{j} \bar S^j}$.
\end{enumerate}

The score $S_i$ that estimates the non-null likelihood can be computed with the aid of a working statistical model. For example, consider a mixture model where each $p$-value $P_i$ is drawn from a mixture of a null distribution $F_0$ (eg: uniform) with probability $1 - \pi_i$ and an alternative distribution $F_1$ (eg: beta distribution) with probability $\pi_i$, or equivalently, 
\begin{align} \label{eq:s_model}
    P_i \overset{d}{=} (1 - \pi_i) F_0 + \pi_i F_1.
\end{align}
To account for the clustered structure of non-nulls, we may further assume a model that treats $\pi_i$ as a smooth function of the covariates $x_i$. The hidden missing bits $\{h(P_i)\}_{i\in R_t}$ can be inferred from $g(P_i)$ and the unmasked $h(P_i)$ by the EM algorithm (see details in Appendix~\ref{apd:em}). As $R_t$ shrinks, progressively unmasked missing bits improve the estimation of non-null likelihood and increase the power. Importantly, the FWER is controlled regardless of the correctness of the above model or any other heuristics to shrink $R_t$.

The above algorithm is only one automated example and there are many possibilities of what we can do to shrink $R_t$.
%, some of which are as follows.
\begin{enumerate}
    \item A different algorithm can be developed for a different structure. For example, when hypotheses have a hierarchical structure and the non-nulls only appear on a subtree, an algorithm can gradually cut branches.
    
    \item The score $S_i$ for non-null likelihood is not exclusive for the above algorithm -- it can be used in any heuristics such as directly ordering hypotheses by $S_i$. 
    
    \item Human interaction can help the automated procedure: the analyst can stop and modify the automated algorithm at any iteration. It is a common case where prior knowledge might not be accurate, or there exist several plausible structures. The analyst may try different algorithms and improve their understanding of the data as the test proceeds. In the example of clustered non-nulls, the underlying truth might have two clustered non-nulls instead of one. After several iterations of the above algorithm that is designed for a single cluster, the shape of $\mathcal{R}_t$ could look like a dumbbell, so the analyst can split $\mathcal{R}_t$ into two subsets if they wish.
    %The analyst may explore data by trying different algorithms and improve their understanding of the data as the test proceeds. In the above example of clustered non-nulls, the underlying truth might have two clustered non-nulls instead of one. After several iterations of the above algorithm that is designed for a single cluster, the shape of $\mathcal{R}_t$ could look like a dumbbell, so the analyst can decide to split $\mathcal{R}_t$ into two subsets if they wish.
\end{enumerate}

Note that there is no universally most powerful test in nonparametric settings since we do not make assumptions on the distribution of non-null $p$-values, or how informative the covariates are. It is possible that the classical Bonferroni-Holm procedure \citep{holm1979simple} might have high power if applied with appropriate weights. Likewise, the power of our own test might be improved by changing the working model or choosing some other heuristic to shrink~$\mathcal{R}_t$. The main advantage of our method is that it can accommodate structural and covariate information and revise the modeling on the fly (as $p$-values are unmasked) while other methods commit to one type of structure without looking at the data.

%Note that we do not claim that the automated \apt is the ``best'' in any given setting. 
Next, we demonstrate via experiments that the \apt can improve power over the {\v{S}}id{\'a}k correction, a baseline method that does not take side information into account\footnote{In all experiments, the Hommel method has similar power to the {\v{S}}id{\'a}k correction, and was hence omitted.}. We chose a clustered non-null structure for visualization and intuition, though our test can utilize any covariates, structural constraints, domain knowledge, etc.

\subsection{Numerical experiments for clustered non-nulls}
For most simulations in this paper, we use the setting below,
\begin{setting} \label{set:cluster}
Consider 900 hypotheses arranged in a $30 \times 30$ grid with a disc of 21 non-nulls. Each hypothesis tests the mean value of a univariate Gaussian as described in~\eqref{eq:normal_hp}. The true nulls are generated from $N(0,1)$ and non-nulls from $N(\mu,1)$, where we varied $\mu$ as $(1,2,3,4,5)$. For all experiments in the paper, the FWER control is set at level $\alpha = 0.2$, and the power is averaged over 500 repetitions\footnote{The standard error of FWER and averaged power are less than 0.02, thus ignored from the plots in this paper.}.
\end{setting}

\begin{figure*}[t]
    \centering
    \hspace{2cm}
    \begin{subfigure}[t]{0.32\textwidth}
        \centering
        \includegraphics[trim=0cm 0cm 0cm 0cm, clip, width=1\linewidth]{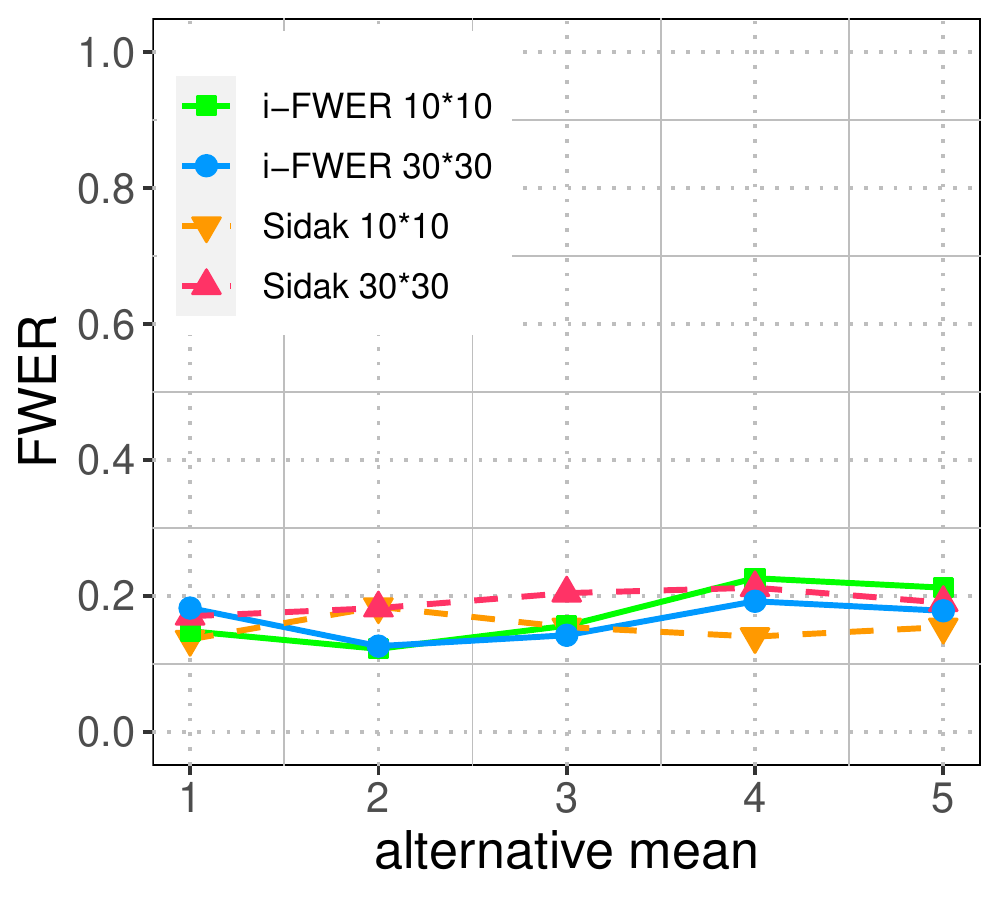}
    \end{subfigure}
    \hfill
    \begin{subfigure}[t]{0.32\textwidth}
        \centering
        \includegraphics[trim=0cm 0cm 0cm 0cm, clip, width=1\linewidth]{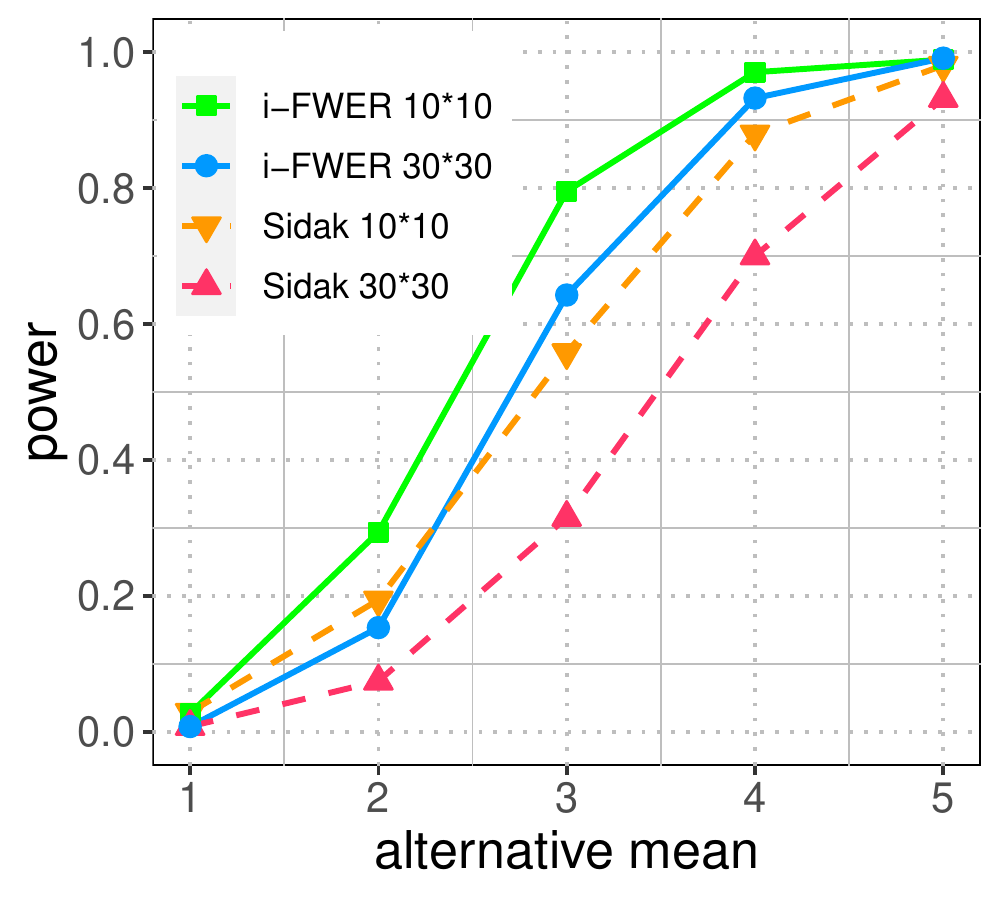}
    \end{subfigure}
    \hspace{2cm}
    \caption{The \apt versus {\v{S}}id{\'a}k for clustered non-nulls. The experiments are described in Section~\ref{set:cluster} where we tried two sizes of hypotheses grid: $10 \times 10$ and $30 \times 30$ (the latter is a harder problem since the number of nulls increases while the number of non-nulls remains fixed). Both methods show valid FWER control (left). The \apt has higher power under both grid sizes (right).}
    \label{fig:tent_power}
\end{figure*}

The \apt has higher power than the {\v{S}}id{\'a}k correction, which does not use the non-null structure (see Figure~\ref{fig:tent_power}). It is hard for most existing methods to incorporate the knowledge that non-nulls are clustered without knowing the position or the size of this cluster. By contrast, such information can be learned in the \apt by looking at the masked $p$-values and the progressively unmasked missing bits. This advantage of the \apt is more evident as the number of nulls increases (by increasing the grid size from ${10\times 10}$ to $30\times 30$ with the number of non-nulls fixed). Note that the power of both methods decreases, but the \apt seems less sensitive. This robustness to nulls is expected as the \apt excludes most nulls before rejection, whereas the {\v{S}}id{\'a}k correction treats all hypotheses equally.

\subsection{An example of an automated algorithm under a hierarchical structure of hypotheses}
When the hypotheses form a tree, the side information $x_i$ encodes the parent-child relationship (the set of indices of the children nodes for each hypothesis~$i$). Suppose we have prior knowledge that a node cannot be non-null if its parent is null, meaning that the non-nulls form a subtree with the same root. We now develop an automated algorithm that prunes possible nulls among the leaf nodes of current~$\mathcal{R}_t$, such that the rejection set has such a subtree shape. Like the algorithm for clustered non-nulls, we use a score $S_i$ to choose which leaf nodes to exclude. For example, the score $S_i$ can be the estimated non-null likelihood learned from model~\eqref{eq:s_model}, where we account for the hierarchical structure by further assuming a partial order constraint on $\pi_i$ that $\pi_i \geq \pi_j$ if $j \in x_i$ (i.e., $i$ is the parent of $j$).

We simulate a tree of five levels (the root has twenty children and three children
for each parent node after that) with 801 nodes in total and 7 of them being non-nulls. The non-nulls gather in one of the twenty subtrees of the root. Individual $p$-values are generated by the hypotheses of testing zero-mean Gaussian, same as for the clustered structure, where we varied the non-null mean values $\mu$ as $(1,2,3,4,5)$.

\begin{figure}[t]
    \centering
    \begin{subfigure}[t]{0.4\textwidth}
        \centering
        \includegraphics[width=0.7\linewidth]{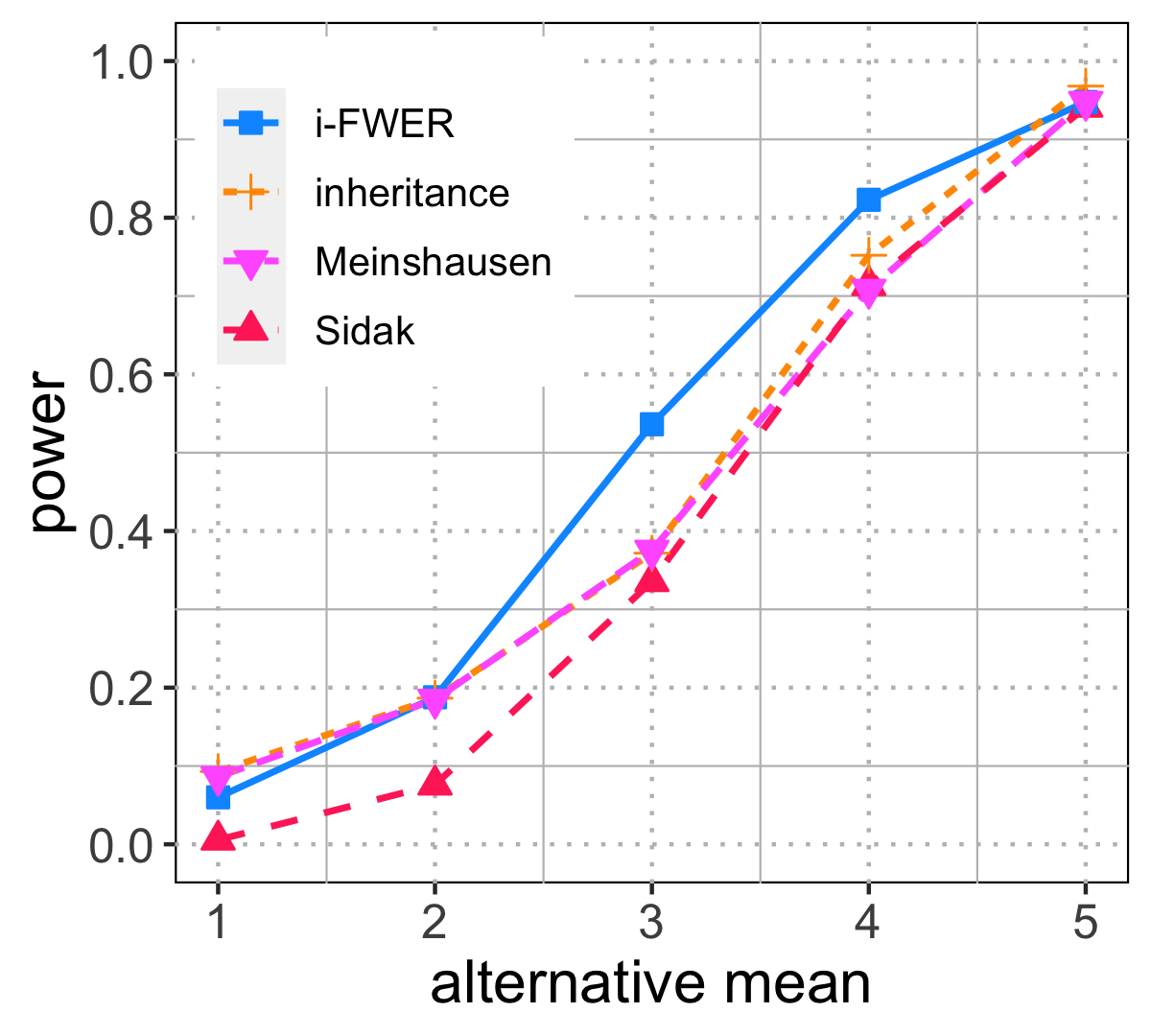}
    \end{subfigure}
    \caption{Power of the \apt under a tree structure when varying the alternative mean value. It has higher power than inheritance procedure, Meinshausen’s method, and the Sidak correction.}
    \label{fig:tree}
\end{figure}

In addition to the {\v{S}}id{\'a}k correction, we compare the \apt with two other methods for tree-structured hypotheses: Meinshausen’s method \citep{meinshausen2008hierarchical} and the inheritance procedure \citep{goeman2012inheritance}, which work under arbitrary dependence. Their idea is to pass the error budget from a parent node to its children in a prefixed manner, whereas our algorithm picks out the subtree with non-nulls based on the observed data. In our experiments, the \apt has the highest power (see Figure~\ref{fig:tree}).

\color{black}
The above results demonstrate the power of the \apt in one particular form where the masking is defined as~\eqref{eq:decomp_varyp}. However, any two functions that decompose the null $p$-values into two independent parts can, in fact, be used for masking and fit into the framework of the \apt (see the proofs of error control when using the following new masking functions in Appendix~\ref{apd:error_mask}). In the next section, we explore several choices of masking.

\section{New masking functions} \label{sec::mask}
Recall that masking is the key idea that permits interaction and controls error at the same time, by decomposing the $p$-values into two parts: masked $p$-value $g(P)$ and missing bits $h(P)$. Such splitting distributes the $p$-value information for two different purposes,  interaction and error control, leading to a tradeoff.
More information in $g(P)$ provides better guidance on how to shrink $\mathcal{R}_t$ and improves the power, while more information in $h(P)$ enhances the accuracy of estimating FWER and makes the test less conservative. This section explores several ways of masking and their influence on the power of the \apt. To distinguish different masking functions, we refer to masking~\eqref{eq:decomp_varyp} introduced at the very beginning as the ``tent'' function based on the shape of map~$g$ (see Figure~\ref{fig:tent}).

\begin{figure*}[h!]
    \centering
    \begin{subfigure}[t]{0.24\textwidth}
        \centering
        \includegraphics[trim=0cm 0.5cm 1cm 2cm, clip, width=1\linewidth]{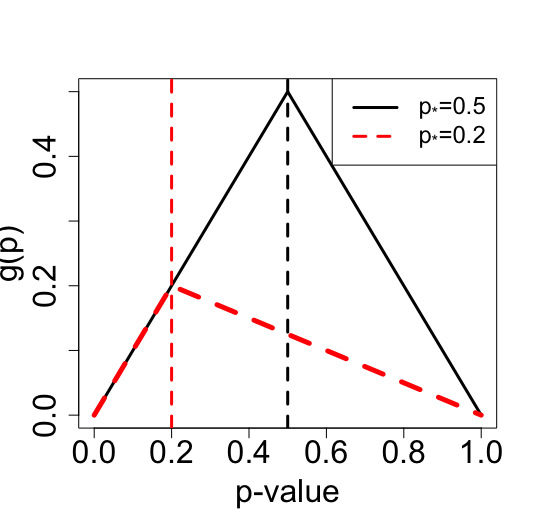}
        \caption{Tent functions when $p_*$ varies as $(0.5, 0.2)$. We need $p_* \leq \alpha$ for FWER control.}
        \label{fig:tent}
    \end{subfigure}
    \hfill
    \begin{subfigure}[t]{0.24\textwidth}
        \centering
        \includegraphics[trim=0cm 0.5cm 1cm 2cm, clip,width=1\linewidth]{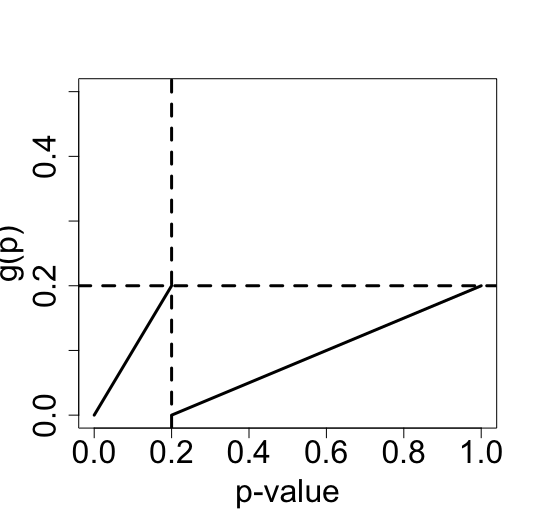}
        \caption{The railway function when $p_* = 0.2$.}
        \label{fig:railway}
    \end{subfigure}
    \hfill
    \begin{subfigure}[t]{0.24\textwidth}
        \centering
        \includegraphics[trim=0cm 0.5cm 1cm 2cm, clip,width=1\linewidth]{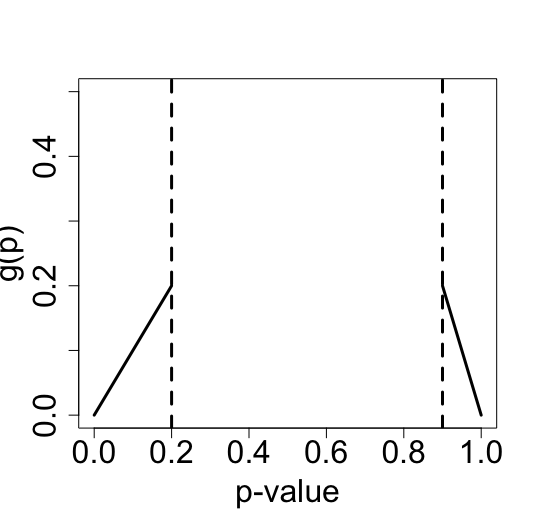}
        \caption{The gap function when ${(p_l, p_u) = (0.2, 0.9)}$.}
        \label{fig:gap}
    \end{subfigure}
    \hfill
    \begin{subfigure}[t]{0.24\textwidth}
        \centering
        \includegraphics[trim=0cm 0.5cm 1cm 2cm, clip,width=1\linewidth]{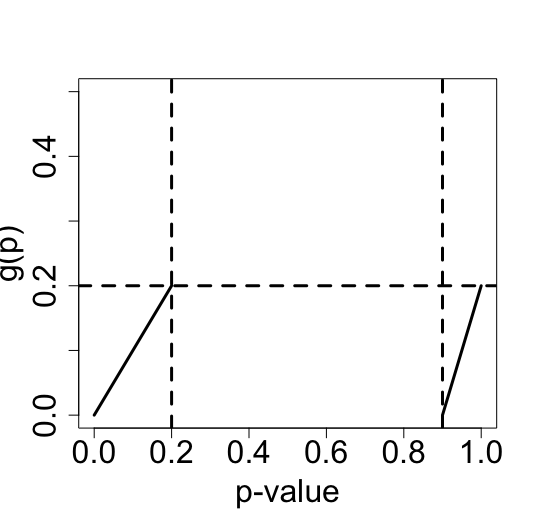}
        \caption{The gap-railway function when ${(p_l, p_u) = (0.2, 0.9)}$.}
        \label{fig:gap-railway}
    \end{subfigure}
    
    \caption{Different masking functions leaves different amount of information to $g(P)$ (and the complement part to $h(P)$).}
    \label{fig:masking}
\end{figure*}

\subsection{The ``railway'' function} \label{sec::railway}

We start with an adjustment to the tent function that flips the map $g$ when $p > p_*$, which we call the ``railway'' function (see Figure~\ref{fig:railway}). It does not change the information distribution between $g(P)$ and $h(P)$, and yet improves the power when nulls are conservative, as demonstrated later. 
Conservative nulls are often discussed under a general form of hypotheses testing for a parameter $\theta$:
\[
H_0: \theta \in \Theta_0 \quad \text{versus} \quad H_1: \theta \in \Theta_1,
\]
where $\Theta_0$ and $\Theta_1$ are two disjoint sets. Conservative nulls are those whose true parameter $\theta$ lies in the interior of $\Theta_0$. For example, when testing whether a Gaussian $N(\mu,1)$ has nonnegative mean in~\eqref{eq:normal_hp} where $\Theta_0 = \{\mu \leq 0\}$, the nulls are conservative when $\mu < 0$. The resulting $p$-values are biased toward larger values, which compared to the uniform $p$-values from nonconservative nulls should be easier to distinguish from that of non-nulls. However, most classical methods do not take advantage of it, but the \apt can, when using the railway function for masking:
\begin{align}
\label{eq:mask_railway}
h(P_i) =~& 2\cdot\one\{P_i < p_*\} - 1; \nonumber {}\\
\text{and } g(P_i) =~& \begin{cases}
P_i, & 0 \leq P_i < p_*,\\
\frac{p_*}{1 - p_*} (P_i - p_*),  & p_* \leq P_i \leq 1.
\end{cases}
\end{align}
The above masked $p$-value, compared with the tent masking~(\ref{eq:decomp_varyp}), can better distinguish the non-nulls from the conservative nulls. To see this, consider a $p$-value of 0.99. When $p_* = 0.2$, the masked $p$-value generated by the originally proposed tent function would be 0.0025, thus causing potential confusion with a non-null, whose masked $p$-value is also small. But the masked $p$-value from the railway function would be 0.1975, which is close to $0.2$, the upper bound of $g(P_i)$. Thus, it can easily be excluded by our algorithm.

We follow the setting in Section~\ref{set:cluster} for simulation %\footnote{In this experiment, the \apt with any masking is implemented with the score $S_i$ for non-null likelihood set directly as $-g(P_i)$ without modeling. This is because the railway function makes it harder for the EM algorithm to infer which hypotheses are non-nulls. Intuitively, the railway function returns similar $g(P)$ when $p$ is close to zero and when $p$ is close to some moderate value~$p_*$. Thus, it is hard to guess whether the corresponding hypothesis is a non-null with small $p$-value or a null with a moderate $p$-value. By contrast, when $g(P)$ is generated by the tent function~\eqref{eq:decomp_varyp}, the task is to distinguish between non-nulls with small $p$-values (as expected) and nulls with large $p$-value (less likely), which is easier. Nevertheless, there might exist a suitable tool instead of the mixture model and the EM algorithm to effectively infer the non-nulls. Here, we focus on the design of the masking function, noting that any modeling of the $p$-values is allowed.}
, except that the alternative mean is fixed as $\mu = 3$, and the nulls are simulated from $N(\mu_0, 1)$, where the mean value $\mu_0$ is negative so that the resulting null $p$-values are conservative. We tried $\mu_0$ as $(0, -1, -2, -3, -4)$, with a smaller value indicating higher conservativeness, in the sense that the $p$-values are more likely to be biased to a larger value. When the null is not conservative ($\mu_0 = 0$), the \apt with the railway function and tent function have similar power. As the conservativeness of nulls increases, while the power of the \apt with the tent function decreases and the {\v{S}}id{\'a}k correction stays the same, the power of the \apt with the railway function increases (see Figure~\ref{fig:railway_power}). 
%In practice, we recommend using the railway function over the tent function since the corresponding \apt does not lose power when the null $p$-values are uniform, and gains much power when the nulls are conservative.

\begin{figure}[h!]
        \centering
        \includegraphics[trim=0cm 0cm 0cm 0cm, clip, width=0.35\linewidth]{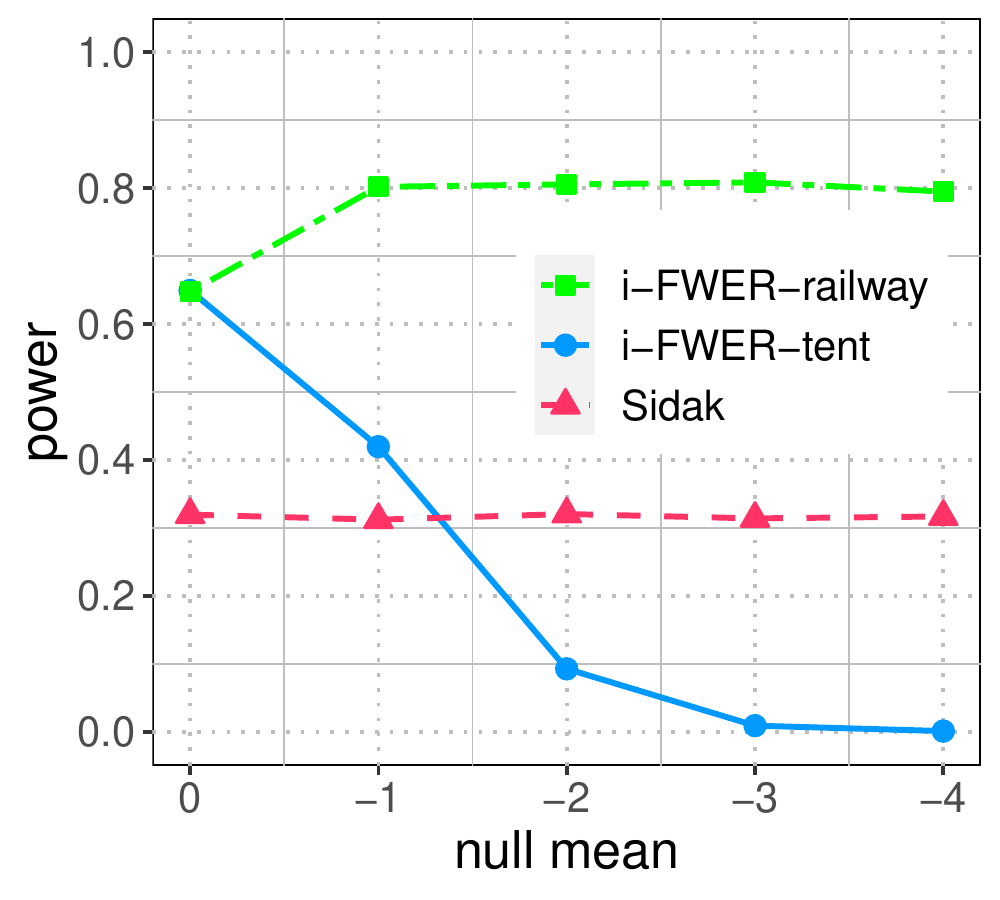}
    \caption{Power of the \apt with the tent function and the railway function, where the nulls become more conservative as the null mean decreases in $(0, -1, -2, -3, -4)$. The \apt benefits from conservative null when using the railway function.
    }
    \label{fig:railway_power}
\end{figure}

\subsection{The ``gap'' function}
Another form of masking we consider maps only the $p$-values that are close to $0$ or $1$, which is referred to as the ``gap'' function (see Figure~\ref{fig:gap}) 
%\footnote{One may suggest an immediate improvement on the current gap function, by taking the idea of railway function that flips the map when $p > p_u$. The reasons for discussing the presented gap function are (1)~we can then separately evaluate the benefit of the change in different directions compared with the original tent function; (2)~the advantage of the gap function shows when $p$-values are jointly modeled to infer which hypotheses are non-nulls (see later paragraphs for details), but the current model of $p$-values seems not suitable for the railway function. As future work, we hope to try other modelings and combine the idea of the railway function and the gap function.}
. The resulting \apt directly unmasks all the $p$-values in the middle, and as a price, never rejects the corresponding hypotheses. Given two parameters~$p_l$ and~$p_u$, the gap function is defined as
\begin{align}
\label{eq:mask_gap}
h(P_i) =~& \begin{cases}
1, & 0 \leq P_i < p_l,\\
-1, & p_u < P_i \leq 1;
\end{cases} \nonumber {}\\
\text{and }g(P_i) =~& \begin{cases}
P_i, & 0 \leq P_i < p_l,\\
\frac{p_l}{1 - p_u} (1 - P_i),  & p_u < P_i \leq 1.
\end{cases}
\end{align}

All the $p$-values in $[p_l, p_u]$ are available to the analyst from the beginning. Specifically, let $\mathcal{M} = \{i: p_l < P_i < p_u\}$ be the set of skipped $p$-values in the masking step, then the available information at step $t$ for shrinking $\mathcal{R}_{t-1}$ is
\begin{align*}
    \mathcal{F}_{t-1} 
    := \sigma \Big(&\{x_i, g(P_i)\}_{i = 1}^n, \{P_i\}_{\{i \notin \mathcal{R}_{t-1}\}}, \{P_i\}_{\{i \in \mathcal{M}\}}\Big).
\end{align*}
The \apt with the gap masking changes slightly. We again consider two subsets of $\mathcal{R}_t$:
\begin{align*}
    \mathcal{R}_t^+ :=~& \{i\in \mathcal{R}_t: h(P_i) = 1\} \equiv  \{i\in \mathcal{R}_t: P_i < p_l\},{}\\
    \mathcal{R}_t^- :=~& \{i\in \mathcal{R}_t: h(P_i) = -1\} \equiv \{i\in \mathcal{R}_t: P_i > p_u\},
\end{align*}
and reject only the hypotheses in $\mathcal{R}_t^+$. The procedure of shrinking $\mathcal{R}_t$ stops when $\widehat{\text{FWER}}_t \leq \alpha$, where the estimation changes to
\begin{align} \label{eq:gap_threshold}
    \widehat{\text{FWER}}_t = 1 - \left(1 - \frac{p_l}{p_l + 1 - p_u}\right)^{|\mathcal{R}_t^-| + 1}.
\end{align}
To avoid the case that $\widehat{\text{FWER}}_t$ is always larger than $\alpha$ and the algorithm cannot make any rejection, the parameters $p_l$ and $p_u$ need to satisfy $\frac{1-\alpha}{\alpha}p_l + p_u < 1$. The above procedure boils down to the original \apt with the tent function when $p_l = p_u = p_*$.

The ``gap'' function reveals more information to select out possible nulls and help the analyst shrink $\mathcal{R}_t$, leading to power improvement in numerical experiments. We present the power results of the \apt using different masking functions after introducing a variant of the gap function.

\begin{figure}[t]
        \centering
        \includegraphics[trim=0cm 0cm 0cm 0cm, clip, width=0.35\linewidth]{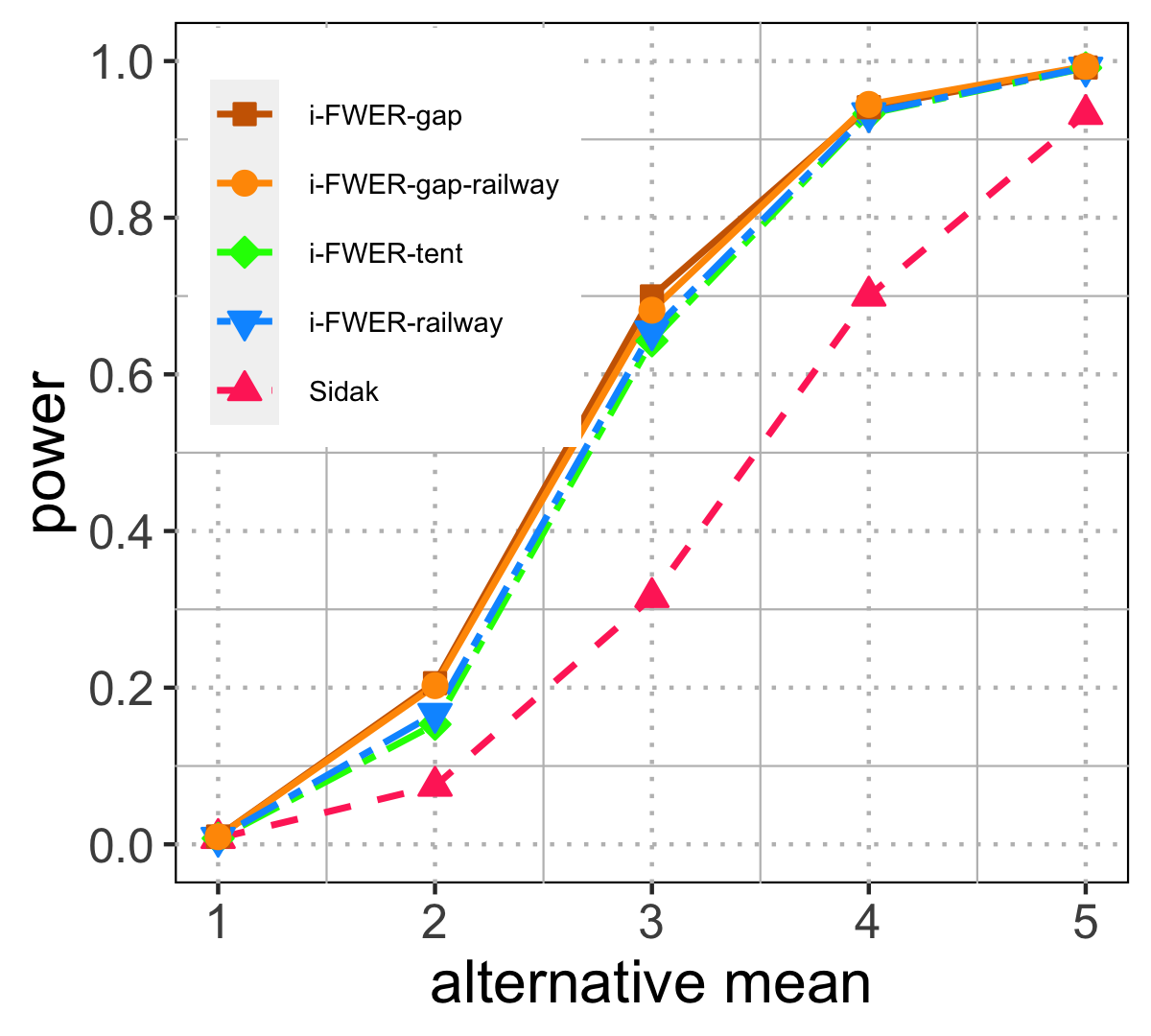}
    \caption{Power of the \apt with the tent function ($p_* = 0.1$) and the gap function ($p_l = 0.1, p_u = 0.5$). The gap function leads to slight improvement in power. Simulation follows the setting in Section~\ref{set:cluster}.}
    \label{fig:power_gap}
\end{figure}

\subsection{The ``gap-railway'' function}
Combining the idea of the gap and railway functions, we develop the ``gap-railway'' function such that the middle $p$-values are directly unmasked and the map $g$ for large $p$-values is an increasing function (see Figure~\ref{fig:gap-railway}). 
Given parameters~$p_l$ and~$p_u$, the gap-railway function is defined as
\begin{align}
\label{eq:mask_gap_railway}
h(P_i) =~& \begin{cases}
1, & 0 \leq P_i < p_l,\\
-1, & p_u < P_i \leq 1;
\end{cases} \nonumber {}\\
\text{and }g(P_i) =~& \begin{cases}
P_i, & 0 \leq P_i < p_l,\\
\frac{p_l}{1 - p_u} (P_i - p_u),  & p_u < P_i \leq 1.
\end{cases}
\end{align}
% The \apt using the gap-railway function follows a similar procedure as that using the gap function.

Comparing with the tent function with $p_* = p_l$, the \apt using the gap function additionally uses the entire $p$-values in $[p_l, p_u]$ for interaction, which leads to an increased power (see Figure~\ref{fig:power_gap}). The same pattern is maintained when we flip the mappings for large $p$-values, shown in the comparison of the railway function and the gap-railway function\footnote{The tests with the tent function and the railway function have similar power; and same for the gap function and the gap-railway function. As the null $p$-values follow an exact uniform distribution, so flipping the map $g$ for large $p$-values does not change the power.}.
This improvement also motivates why the \apt progressively unmasks $h(P_i)$, in other words, to reveal as much information to the analyst as allowed at the current step. Unmasking the $p$-values even for the hypotheses outside of the rejection set can improve the power, because they help the joint modeling of all the $p$-values, especially when there is some non-null structure.

To summarize, we have presented four types of masking functions: tent, railway, gap, gap-railway (see Figure~\ref{fig:masking}). Compared to the tent (gap) function, the railway (gap-railway) functions are more robust to conservative nulls. Compared with the tent (railway) function, the gap (gap-railway) function reveals more information to guide the shrinkage of $\mathcal{R}_t$. Note however that the railway or gap function is not \emph{always} better than the tent function. We may favor the tent function over the railway function when there are less $p$-values close to one, and we may favor the tent function over the gap function when there is considerable prior knowledge to guide the shrinkage of $\mathcal{R}_t$. 

%Different masking leads to a different distribution of the data information for interaction and error control. Whether one is better than another depends on the types of structural constraints or prior information. For example, under the clustered structure of non-nulls, if the cluster is known to be a round disk and in the center of the grid, we can shrink $\mathcal{R}_t$ only based on the coordinates $x_i$'s and use the entire $p$-values for error control. 
The above discussion has explored specific non-null structures and masking functions. A large variety of masking functions and their advantages are yet to be discovered.

\section{A prototypical application to genetic data} \label{sec:real_data}
Below, we further demonstrate the power of the \apt using a real `airway dataset', which is analyzed by Independent Hypothesis Weighting (IHW) \citep{ignatiadis2016data} and AdaPT \citep{lei2018adapt}; these are (respectively) adaptive and interactive algorithms with FDR control for independent hypotheses. We compare the number of rejections made by a variant of the IHW with FWER control and the \apt using the tent function with the masking parameter $p_*$ chosen as $\alpha/20, \alpha/10, \alpha/2$, when the targeted FWER level $\alpha$ varies in $(0.1, 0.2, 0.3)$.
% \footnote{See appendix for the rejections made by the \apt using other masking functions.}. 

The airway data is an RNA-Seq dataset targeting the identification of differentially expressed genes in airway smooth muscle cell lines in response to dexamethasone, which contains 33469 genes (hypotheses) and a univariate covariate (the logarithm of normalized sample size) for each gene. The \apt makes more rejections than the IHW for all considered FWER levels and all considered choices of $p_*$ (see Table~\ref{tab:airway}). 
% The choice of $p_* = \alpha/10$ leads to most rejections for this dataset. 

\begin{table}[ht]
\caption{Number of rejections by IHW and \apt under different FWER levels.}
\label{tab:airway}
\centering
\begin{tabular}{|c|c|c|c|c|}
\hline
\multirow{2}{*}{level $\alpha$} & \multirow{2}{*}{IHW} & \multicolumn{3}{c|}{i-FWER}\\ \cline{3-5}
 &  & $p_* = \alpha/2$ & $p_* = \alpha/10$ & $p_* = \alpha/20$ \\\hline
0.1 & 1552 & 1613 & 1681 & 1646\\ \hline
0.2 & 1645 & 1740 & 1849 & 1789\\ \hline
0.3 & 1708 & 1844 & 1925 & 1894\\ \hline
\end{tabular}
\end{table}

In hindsight, a small value for the masking parameter was more powerful in this dataset because over 1600 $p$-values are extremely small ($< 10^{-5}$), and these are highly likely to be the non-nulls. Thus, even when the masked $p$-values for all hypotheses are in a small range, such as $(0,0.01)$ when $\alpha = 0.1$ and $p_* = \alpha/10$, the $p$-values from the non-nulls still stand out because they gather below $10^{-5}$. At the same time, the smaller the $p_*$, the more accurate (less conservative) is our estimate of FWER in \eqref{eq:fwer_hat}; the algorithm can stop shrinking $\mathcal{R}_t$ earlier since more hypotheses with negative $h(P)$ are allowed to be included in the final $\mathcal{R}_t$. In practice, the choice of masking parameter can be guided by the prior belief of the strength of non-null signals: if the non-nulls have strong signal and hence extremely small $p$-values (such as the mean value $\mu \geq 5$ when testing if a univariate Gaussian has zero mean), a small masking parameter is preferred; otherwise, we recommend $\alpha/2$ to leave more information for interactively shrinking the rejection set $\mathcal{R}_t$.

\color{black}
\section{Discussion}

We proposed a multiple testing method with a valid FWER control while granting the analyst freedom of interacting with the revealed data. The masking function must be fixed in advance, but during the procedure of excluding possible nulls, the analyst can employ any model, heuristic, intuition, or domain knowledge, tailoring the algorithm to various applications. Although the validity requires an independence assumption, our method is a step forward to fulfilling the practical needs of allowing interactive human guidance to automated large-scale testing using ML in the sciences.

The critical idea that guarantees the FWER control is ``masking and unmasking''. 
%allows a human analyst to look at (partial) data information and interactively adjust the procedure of excluding possible nulls. The analyst can freely employ any model, heuristic, intuition, or domain knowledge, tailoring the algorithm to various applications. As more data information becomes available by progressive unmasking, her understanding of the data structure improves, and she can change the current algorithm at any step. While manually changing the algorithm after looking at the data is commonly known to violate the error control of most methods, our method permits interaction while ensuring FWER control, through the idea of ``masking and unmasking''.
%Masking and unmasking is a protocol that divides the observed data into two independent parts: the missing bit generated by $h$ and the masked data generated by $g$. The missing bits are used for error control, and the masked data is used to design and modify the testing algorithm interactively. It then progressively unmasks missing bits to help the interaction. 
A series of interactive tests are developed following the idea of masking: \citet{lei2018adapt} and \citet{lei2017star} proposed the masking idea and an interactive test with FDR control; \citet{duan2019interactive} developed an interactive test for the global null; this work presents an interactive test with FWER control. 
%The idea of masking can not only be applied to $p$-values, but also potentially to data directly, thus opening up new applications.
At a high level, masking-based interactive testing achieves rigorous conclusions in an exploratory framework, giving this broad technique much appeal and potential.

%The idea of masking and interactive testing is widely open for further exploration. The choice of an appropriate masking can depend on the specifics of the problems and the error metric of interest. The idea of masking not only can be applied to $p$-values, but also to the observed data, which opens possibilities for other testing problems. It is also an interesting open question about what types of confirmatory conclusions and what forms of exploratory algorithms can be combined using the idea of interactive unmasking.
\section*{Code and Data}
Code can be found in \href{https://github.com/duanby/i-FWER}{https://github.com/duanby/i-FWER}. It was tested on macOS using R (version 3.6.0) and the following packages: magrittr, splines, robustbase, ggplot2.

Data in Section~\ref{sec:real_data} is collected by \cite{himes2014rna} and available in R package \texttt{airway}. We follow \cite{ignatiadis2016data} and \cite{lei2018adapt} to analyze the data using \texttt{DEseq2} package \citep{love2014moderated}.

\section*{Acknowledgements}
We thank Jelle Goeman for his insightful comments on the connection between our proposed method and closed testing. We thank Will Fithian for related discussions. Eugene Katsevich, Ian Waudby-Smith, Jinjin Tian and Pratik Patil are acknowledged for their feedback on an early draft, and anonymous reviewers for helpful suggestions.

\bibliography{ref}
\bibliographystyle{chicago}

\newpage
\appendix
\section{Distribution of the null $p$-values} \label{apd:mirror-cons}
With tent masking, error control holds for null $p$-values whose distribution satisfies a property called \emph{mirror-conservativeness}:
\begin{equation}
\label{cond:mirror_consv}
f(a) \leq f\left(1 - \frac{1 - p_*}{p_*} a\right), \quad \text{ for all } 0 \leq a \leq p_*,
\end{equation}
where $f$ is the probability mass function of $P$ for discrete $p$-values or the density function otherwise, and $p_*$ is the parameter in Algorithm~\ref{alg:apt} (see proof in Appendix~\ref{apd:thm1}). The mirror-conservativeness is first proposed by \citet{lei2018adapt} in the case of $p_* = 0.5$. A more commonly used notion of conservativeness is that $p$-values are stochastically larger than uniform:
\[
\mathbb{P}(P \leq a) \leq a, \quad \text{ for all } 0 \leq a \leq 1,
\]
which neither implies nor is implied by the mirror-conservativeness. 

A sufficent condition of the mirror-conservativeness is that the $p$-values have non-decreasing densities. For example, consider a one-dimensional exponential family and the hypotheses to test the value of its parameter~$\theta$:
\[
H_0: \theta \leq \theta_0, \quad \text{ versus } \quad H_1: \theta > \theta_0,
\]
where $\theta_0$ is a prespecified constant. The $p$-value calculated from the uniformly most powerful test is shown to have a nondecreasing density \citep{zhao2019multiple}; thus, it satisfies the mirror-conservativeness.
The conservative nulls described in Section~\ref{sec::railway} also fall into the above category where the exponential family is Gaussian, and the parameter is the mean value. Indeed, when the $p$-values have non-decreasing densities, the \apt also has a valid error control using alternative masking functions as proposed in Section~\ref{sec::mask} (see proof in Appendix~\ref{apd:error_mask}).

\section{Proof of Theorem~\ref{thm:fwer}} \label{apd:thm1}
The main idea of the proof is that the missing bits $h(P_i)$ of nulls are coin flips with probability $p_*$ to be heads, so the number of false rejections (i.e. the number of nulls with $h(P_i) = 1$ before the number of hypotheses with $h(P_i) = -1$ reaches a fixed number) is stochastically dominated by a negative binomial distribution. There are two main challenges. First, the interaction uses unmasked $p$-value information to reorder $h(P_i)$, so it is not trivial to show that the reordered $h(P_i)$ preserve the same distribution as that before ordering. Second, our procedure runs backward to find the first time that the number of hypotheses with negative $h(P_i)$ is below a fixed number, which differs from the standard description of a negative binomial distribution.

\subsection{Missing bits after interactive ordering}
We first study the effect of interaction. Imagine that Algorithm~\ref{alg:apt} does not have a stopping rule and generates a full sequence of $\mathcal{R}_t$ for $t = 0, 1, \ldots n$, where $\mathcal{R}_0 = [n]$ and $\mathcal{R}_n = \emptyset$. It leads to an ordered sequence of $h(P_i)$: 
\[
h(P_{\pi_1}), h(P_{\pi_2}), \ldots, h(P_{\pi_n}),
\]
where $\pi_n$ is the index of the first excluded hypothesis and $\pi_j$ denotes the index of the hypothesis excluded at step $n - j + 1$, that is $\pi_{j} = \mathcal{R}_{n - j} \backslash \mathcal{R}_{n - j + 1}$.
\begin{lemma} \label{lm:seq}
Suppose the null $p$-values are uniformly distributed and all the hypotheses are nulls, then for any $j = 1, \ldots, n$,
\[
\mathbb{E}\left[\one\left(h(P_{\pi_j}) = 1\right)\right] = p_*,
\]
and $\{\one\left(h(P_{\pi_j}) = 1\right)\}_{j=1}^n$ are mutually independent.
\end{lemma}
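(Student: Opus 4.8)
The plan is to establish a single sequential invariant and read everything off it. Run Algorithm~\ref{alg:apt} without its stopping rule, so it produces the full exclusion order $i_1^*, i_2^*, \ldots, i_n^*$ (with $\pi_j = i_{n-j+1}^*$), and let $\mathcal{F}_{t-1}$ be the filtration in~\eqref{eq:sigma}. First I would record two bookkeeping facts: (i) $\mathcal{F}_0 \subseteq \mathcal{F}_1 \subseteq \cdots$ is a genuine filtration and each $\mathcal{R}_t$, hence each $i_t^*$, is $\mathcal{F}_{t-1}$-measurable (an immediate induction, since passing from $\mathcal{F}_{t-1}$ to $\mathcal{F}_t$ only adds the single $p$-value $P_{i_t^*}$, and the selection of $i_t^*$ consults only $\mathcal{F}_{t-1}$; if the analyst randomizes, augment the filtration with independent external randomness); and (ii) $P \mapsto (g(P),h(P))$ is a bijection off a null set, so that $\mathcal{F}_t = \sigma\!\left(\mathcal{F}_{t-1}, h(P_{i_t^*})\right)$, since $g(P_{i_t^*})$ is already $\mathcal{F}_0$-measurable.

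The heart of the argument is the claim, proved by induction on $t$, that conditionally on $\mathcal{F}_{t-1}$ the surviving missing bits $\{h(P_i)\}_{i \in \mathcal{R}_{t-1}}$ are i.i.d.\ $\mathrm{Bernoulli}(p_*)$ and independent of $\mathcal{F}_{t-1}$. The base case $t=1$ (where $\mathcal{R}_0=[n]$ and $\mathcal{F}_0 = \sigma(\{x_i, g(P_i)\}_{i=1}^n)$) is exactly the masking independence $g(P)\perp h(P)$ for uniform $P$, together with independence across hypotheses and the fact that the $x_i$ are fixed. For the inductive step: since $i_t^* \in \mathcal{R}_{t-1}$ is $\mathcal{F}_{t-1}$-measurable, conditioning on $\mathcal{F}_{t-1}$ the i.i.d.\ vector $\{h(P_i)\}_{i\in\mathcal{R}_{t-1}}$ splits into the single coordinate $h(P_{i_t^*})$ and the independent remainder $\{h(P_i)\}_{i\in\mathcal{R}_t}$, the latter still i.i.d.\ $\mathrm{Bernoulli}(p_*)$; since $h(P_{i_t^*}) \perp \{h(P_i)\}_{i\in\mathcal{R}_t}$ given $\mathcal{F}_{t-1}$ and $\mathcal{F}_t = \sigma(\mathcal{F}_{t-1}, h(P_{i_t^*}))$ by fact (ii), further conditioning on $h(P_{i_t^*})$ leaves the law of $\{h(P_i)\}_{i\in\mathcal{R}_t}$ unchanged, which is the invariant at step $t+1$.

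Granting the invariant, at each step $t$ the bit $h(P_{i_t^*})$ is $\mathrm{Bernoulli}(p_*)$ conditionally on $\mathcal{F}_{t-1}$, while the earlier bits $h(P_{i_1^*}),\ldots,h(P_{i_{t-1}^*})$ are all $\mathcal{F}_{t-1}$-measurable (their hypotheses were excluded before $\mathcal{R}_{t-1}$); telescoping conditional expectations from $t=n$ down to $t=1$ then shows $\mathbb{P}\!\left(\bigcap_{t}\{h(P_{i_t^*})=b_t\}\right) = \prod_t p_*^{\mathbbm{1}\{b_t=1\}}(1-p_*)^{\mathbbm{1}\{b_t=-1\}}$, i.e.\ $(h(P_{i_t^*}))_{t=1}^n$ are i.i.d.\ $\mathrm{Bernoulli}(p_*)$. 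Reversing the order gives $(h(P_{\pi_j}))_{j=1}^n$, so $\mathbb{E}[\one(h(P_{\pi_j})=1)] = p_*$ and the indicators are mutually independent. I expect the only real obstacle to be phrasing the inductive invariant correctly — in particular being careful that $i_t^*$ is genuinely adapted to $\mathcal{F}_{t-1}$ and that ``condition on $\mathcal{F}_{t-1}$, then on $h(P_{i_t^*})$'' legitimately reconstructs $\mathcal{F}_t$; once that is set up, the remaining manipulations (optional-skipping of an i.i.d.\ vector, telescoping) are routine.
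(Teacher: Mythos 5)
Your proposal is correct and rests on the same ingredients as the paper's own proof: the selection $i_t^*$ is $\mathcal{F}_{t-1}$-measurable, the still-masked bits of the surviving hypotheses are independent of $\mathcal{F}_{t-1}$ and distributed as $\mathrm{Bernoulli}(p_*)$, and the tower property ties the steps together. The difference is organizational rather than substantive: your conditional i.i.d.\ invariant plus telescoping of the joint law delivers mutual independence of all $n$ indicators in one pass, which is slightly tidier than the paper's version, whose written argument verifies the expectation step by step and establishes independence by conditioning on a single later bit (pairwise), though that argument extends verbatim to conditioning on all later bits.
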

\begin{proof}
Recall that the available information for the analyst to choose $\pi_j$ is ${\mathcal{F}_{n-j} = \sigma \Big(\{x_i, g(P_i)\}_{i=1}^n, \{P_i\}_{i \notin \mathcal{R}_{n - j}}\Big)}$. First, consider the conditional expectation:
\begin{align} \label{eq:conditional_case1}
    &\mathbb{E}\left[\one\left(h(P_{\pi_j}) = 1\right) \middle| \mathcal{F}_{n-j}\right]\nonumber{}\\
    =~& \sum_{i \in [n]} \mathbb{E}\left[\one\left(h(P_{\pi_j}) = 1\right) \middle| \pi_j = i, \mathcal{F}_{n-j}\right] \mathbb{P}\left(\pi_j = i \middle| \mathcal{F}_{n-j}\right) \nonumber{}\\
    \overset{(a)}{=}~& \sum_{i \in \mathcal{R}_{n - j}} \mathbb{E}\left[\one\left(h(P_{i}) = 1\right) \middle| \pi_j = i, \mathcal{F}_{n-j}\right] \mathbb{P}\left(\pi_j = i \middle| \mathcal{F}_{n-j}\right) \nonumber{}\\
    \overset{(b)}{=}~& \sum_{i \in \mathcal{R}_{n - j}} \mathbb{E}\left[\one\left(h(P_{i}) = 1\right) \middle| \mathcal{F}_{n-j}\right] \mathbb{P}\left(\pi_j = i \middle| \mathcal{F}_{n-j}\right)  \nonumber{}\\
    \overset{(c)}{=}~& \sum_{i \in \mathcal{R}_{n - j}} \mathbb{E}\left[\one\left(h(P_{i}) = 1\right)\right] \mathbb{P}\left(\pi_j = i \middle| \mathcal{F}_{n-j}\right)  \nonumber{}\\
     =~& p_* \sum_{i \in \mathcal{R}_{n - j}} \mathbb{P}\left(\pi_j = i \middle| \mathcal{F}_{n-j}\right) = p_*,
\end{align}
where equation~$(a)$ narrows down the choice of $i$ because $\mathbb{P}(\pi_j = i \mid \mathcal{F}_{n-j}) = 0$ for any $i \notin  \mathcal{R}_{n - j}$; equation~$(b)$ drops the condition of $\pi_j = i$ because $\pi_j$ is measurable with respect to $\mathcal{F}_{n-j}$; and equation~$(c)$ drops the condition $\mathcal{F}_{n-j}$ because by the independence assumptions in Theorem~\ref{thm:fwer}, $h(P_i)$ is independent of $\mathcal{F}_{n-j}$ for any $i \in \mathcal{R}_{n - j}$. 

Therefore, by the law of iterated expectations, we prove the claim on expected value:
\[
\mathbb{E}\left[\one\left(h(P_{\pi_j}) = 1\right)\right] = \mathbb{E}\left[ \mathbb{E}\left[\one\left(h(P_{\pi_j}) = 1\right) \middle| \mathcal{F}_{n-j}\right]\right] = p_*.
\]

For mutual independence, we can show that for any $1 \leq k < j \leq n$, $\one\left(h(P_{\pi_k}) = 1\right)$ is independent of $\one\left(h(P_{\pi_j}) = 1\right)$. Consider the conditional expectation:
\begin{align*}
    &\mathbb{E}\left[\one\left(h(P_{\pi_k}) = 1\right) \middle| \one\left(h(P_{\pi_j}) = 1\right)\right] {}\\
    =~& \mathbb{E}\left[\mathbb{E}\left[\one\left(h(P_{\pi_k}) = 1\right) \middle| \mathcal{F}_{n-k}, \one\left(h(P_{\pi_j}) = 1\right)\right]\middle| \one\left(h(P_{\pi_j}) = 1\right)\right]{}\\
    &\text{(note that } \one\left(h(P_{\pi_j}) = 1\right) \text{ is measurable with respect to } \mathcal{F}_{n-k} \text{)}{}\\
    =~& \mathbb{E}\left[\mathbb{E}\left[\one\left(h(P_{\pi_k}) = 1\right) \middle| \mathcal{F}_{n-k}\right]\middle| \one\left(h(P_{\pi_j}) = 1\right)\right]{}\\
    &\text{(use equation~\eqref{eq:conditional_case1} for the conditional expectation)}{}\\
    =~& \mathbb{E}\left[p_* \mid \one\left(h(P_{\pi_j}) = 1\right)\right] = p_*.
\end{align*}
It follows that $\one\left(h(P_{\pi_k}) = 1\right) \mid \one\left(h(P_{\pi_j}) = 1\right)$ is a Bernoulli with parameter $p_*$, same as the marginal distribution of $\one\left(h(P_{\pi_k}) = 1\right)$; thus, $\one\left(h(P_{\pi_k}) = 1\right)$ is independent of $\one\left(h(P_{\pi_j}) = 1\right)$ for any $1 \leq k < j \leq n$ as stated in the Lemma.
\end{proof}

\begin{corollary} \label{col:seq}
Suppose the null $p$-values are uniformly distributed and there may exist non-nulls. For any $j = 1, \ldots, n$,
\[
\mathbb{E}\left[\one\left(h(P_{\pi_j}) = 1 \right)\middle| \left\{\one\left(h(P_{\pi_k}) = 1 \right)\right\}_{k = j + 1}^n, \left\{\one\left(\pi_k \in \mathcal{H}_0\right) \right\}_{k = j + 1}^n, \pi_j \in \mathcal{H}_0\right] = p_*,
\]
where $\{\pi_k\}_{k=j+1}^n$ represents the hypotheses excluded before $\pi_j$. 
\end{corollary}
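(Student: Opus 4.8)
The plan is to reduce Corollary~\ref{col:seq} to Lemma~\ref{lm:seq} by conditioning on the pattern of which excluded hypotheses are nulls and on the associated missing bits, and then invoking the argument of Lemma~\ref{lm:seq} restricted to the subcollection of true null hypotheses. First I would fix the realized values of $\{\one(\pi_k \in \mathcal{H}_0)\}_{k=j+1}^n$, the event $\pi_j \in \mathcal{H}_0$, and the realized values of $\{\one(h(P_{\pi_k})=1)\}_{k=j+1}^n$, and note that, as in Lemma~\ref{lm:seq}, it suffices to compute the conditional expectation given the finer $\sigma$-field $\mathcal{F}_{n-j}$ (augmented by the conditioning events, all of which are $\mathcal{F}_{n-j}$-measurable, since the identity of excluded hypotheses up to step $n-j$, their null/non-null status, and their unmasked $p$-values — hence their missing bits — are all encoded in $\mathcal{F}_{n-j}$) and then apply the tower property. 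So the crux is to show
\[
\mathbb{E}\left[\one(h(P_{\pi_j})=1) \,\middle|\, \mathcal{F}_{n-j},\, \pi_j \in \mathcal{H}_0\right] = p_*.
\]

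Next I would run the same decomposition as in~\eqref{eq:conditional_case1}: expand over the choice $\pi_j = i$, but now the sum ranges only over $i \in \mathcal{R}_{n-j} \cap \mathcal{H}_0$ (the event $\pi_j \in \mathcal{H}_0$ together with $\mathbb{P}(\pi_j = i \mid \mathcal{F}_{n-j}) = 0$ for $i \notin \mathcal{R}_{n-j}$ forces this). Step~$(b)$ goes through verbatim — $\pi_j$ is $\mathcal{F}_{n-j}$-measurable, so the extra conditioning on $\pi_j = i$ can be dropped — and step~$(c)$ is where the independence assumptions of Theorem~\ref{thm:fwer} are used: for a true null $i \in \mathcal{R}_{n-j} \cap \mathcal{H}_0$, its $p$-value $P_i$ (hence $h(P_i)$) is independent of $\mathcal{F}_{n-j}$, because $\mathcal{F}_{n-j}$ only involves the masked values $g(P_k)$ for all $k$ (independent of $h(P_i)$ for the null $i$ since $P_i$ is uniform), the fully unmasked $p$-values of hypotheses already excluded (which exclude $i$ itself, as $i \in \mathcal{R}_{n-j}$), and the covariates. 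Crucially the other null $p$-values being unmasked does not spoil this, since nulls are mutually independent and independent of the non-nulls. Therefore $\mathbb{E}[\one(h(P_i)=1) \mid \mathcal{F}_{n-j}] = \mathbb{E}[\one(h(P_i)=1)] = p_*$ for each such $i$, and factoring $p_*$ out of the sum leaves $p_* \sum_{i \in \mathcal{R}_{n-j} \cap \mathcal{H}_0} \mathbb{P}(\pi_j = i \mid \mathcal{F}_{n-j}, \pi_j \in \mathcal{H}_0) = p_*$.

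Finally, the tower property over $\mathcal{F}_{n-j}$ (restricted to the conditioning event) collapses this to the claimed identity. I would also remark that, exactly as in the mutual-independence half of Lemma~\ref{lm:seq}, one could iterate this to get that $\one(h(P_{\pi_j})=1)$ is conditionally Bernoulli$(p_*)$ given the past missing bits and the past and present null-indicators, which is the form actually used downstream to set up the negative binomial domination.

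I expect the main obstacle to be purely bookkeeping: making sure that every object we condition on is genuinely $\mathcal{F}_{n-j}$-measurable — in particular that the null/non-null indicators of the \emph{already-excluded} hypotheses and their missing bits are determined by the unmasked $p$-values $\{P_k\}_{k \notin \mathcal{R}_{n-j}}$ together with the (fixed, nonrandom) identity of $\mathcal{H}_0$ as it enters the conditioning — and that the independence in step~$(c)$ survives the fact that \emph{other} null $p$-values have been unmasked. The latter is the real content and is handled by the ``null $p$-values are mutually independent and independent of the non-nulls'' hypothesis of Theorem~\ref{thm:fwer}; everything else is a transcription of the proof of Lemma~\ref{lm:seq} with $[n]$ replaced by $\mathcal{H}_0$ throughout the relevant sums.
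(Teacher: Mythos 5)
Your proposal is correct and matches the paper's own proof essentially step for step: both reduce to showing $\mathbb{E}[\one(h(P_{\pi_j})=1)\mid \mathcal{F}_{n-j},\pi_j\in\mathcal{H}_0]=p_*$ by noting the conditioning variables are $\mathcal{F}_{n-j}$-measurable, then rerun the decomposition of~\eqref{eq:conditional_case1} with the sum restricted to $i\in\mathcal{R}_{n-j}\cap\mathcal{H}_0$ and invoke independence of $h(P_i)$ from $\mathcal{F}_{n-j}$ for nulls still in $\mathcal{R}_{n-j}$, finishing with the tower property. No gaps to flag.
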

\begin{proof}
Denote the condition $\sigma\left(\left\{\one\left(h(P_{\pi_k}) = 1 \right)\right\}_{k = j + 1}^n, \left\{\one\left(\pi_k \in \mathcal{H}_0\right) \right\}_{k = j + 1}^n\right)$ as~$\mathcal{F}_{n - j}^h$. The proof is similar to Lemma~\ref{lm:seq}. First, consider the expectation conditional on $\mathcal{F}_{n-j}$:
\begin{align} \label{apd:eq_tbc}
    &\mathbb{E}\left[\one\left(h(P_{\pi_j}) = 1 \right) \middle| \mathcal{F}_{n - j}^h, \pi_j \in \mathcal{H}_0, \mathcal{F}_{n-j}\right]\nonumber{}\\
    =~&\mathbb{E}\left[\one\left(h(P_{\pi_j}) = 1 \right) \middle| \pi_j \in \mathcal{H}_0, \mathcal{F}_{n-j}\right] \quad (\text{since } \mathcal{F}_{n - j}^h \text{ is a subset of } \mathcal{F}_{n-j})\nonumber{}\\
    =~& \sum_{i \in [n]} \mathbb{E}\left[\one\left(h(P_{i}) = 1 \right) \mid \pi_j = i, \pi_j \in \mathcal{H}_0, \mathcal{F}_{n-j}\right] \mathbb{P}(\pi_j = i \mid \pi_j \in \mathcal{H}_0, \mathcal{F}_{n-j}) \nonumber{}\\
    =~& \sum_{i \in \mathcal{R}_{n - j} \cap \mathcal{H}_0} \mathbb{E}\left[\one\left(h(P_{i}) = 1 \right) \mid \pi_j = i, \pi_j \in \mathcal{H}_0, \mathcal{F}_{n-j}\right] \mathbb{P}(\pi_j = i \mid \pi_j \in \mathcal{H}_0, \mathcal{F}_{n-j}) \nonumber{}\\
    =~& \sum_{i \in \mathcal{R}_{n - j} \cap \mathcal{H}_0} \mathbb{E}\left[\one\left(h(P_{i}) = 1 \right) \middle| \mathcal{F}_{n-j}\right] \mathbb{P}(\pi_j = i \mid \pi_j \in \mathcal{H}_0, \mathcal{F}_{n-j}) \nonumber{}\\
    =~& p_* \sum_{i \in \mathcal{R}_{n - j} \cap \mathcal{H}_0} \mathbb{P}(\pi_j = i \mid \pi_j \in \mathcal{H}_0, \mathcal{F}_{n-j}) = p_*,
\end{align}
where we use the same technics of proving equation~\eqref{eq:conditional_case1}.

Thus, by the law of iterated expectations, we have
\begin{align*}
    &\mathbb{E}\left[\one\left(h(P_{\pi_j}) = 1 \right)\middle| \mathcal{F}_{n - j}^h, \pi_j \in \mathcal{H}_0\right]{}\\
    =~& \mathbb{E}\left[ \mathbb{E}\left[\one\left(h(P_{\pi_j}) = 1 \right) \middle| \mathcal{F}_{n - j}^h, \pi_j \in \mathcal{H}_0, \mathcal{F}_{n-j}\right] \middle| \mathcal{F}_{n - j}^h, \pi_j \in \mathcal{H}_0\right] = p_*,
\end{align*}
which completes the proof.
\end{proof}

\begin{corollary} \label{col:seq_cons}
Suppose the null $p$-values can be mirror-conservative as defined in~\eqref{cond:mirror_consv} and there may exist non-nulls, then for any~$j = 1, \ldots, n$,
\[
\mathbb{E}\left[\one\left(h(P_{\pi_j}) = 1\right) \middle| \left\{\one\left(h(P_{\pi_k}) = 1 \right)\right\}_{k = j + 1}^n, \left\{\one\left(\pi_k \in \mathcal{H}_0\right) \right\}_{k = j + 1}^n, \pi_j \in \mathcal{H}_0,  \{g(P_{\pi_k})\}_{k=1}^n\right] \leq p_*,
\]
where $\{g(P_{\pi_k})\}_{k=1}^n$ denotes $g(P)$ for all the hypotheses (excluded or not).
\end{corollary}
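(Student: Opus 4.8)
The plan is to follow the proof of Corollary~\ref{col:seq} almost verbatim, replacing the single place where uniformity of the null $p$-values was used (namely ``$h(P_i)$ is independent of $\mathcal{F}_{n-j}$'') by the weaker conclusion that mirror-conservativeness supplies. First I would isolate the following one-dimensional fact about tent masking~\eqref{eq:decomp_varyp}: if $P$ is a null $p$-value whose density (or pmf) $f$ satisfies~\eqref{cond:mirror_consv}, then $\mathbb{P}(h(P)=1 \mid g(P)) \le p_*$ almost surely. Indeed, for $a \in (0,p_*)$ the value $g(P)=a$ has exactly the two preimages $P=a$ (where $h=1$) and $P=1-\frac{1-p_*}{p_*}a$ (where $h=-1$, contributing a Jacobian $\frac{1-p_*}{p_*}$), so
\[
\mathbb{P}\big(h(P)=1 \mid g(P)=a\big) = \frac{f(a)}{\,f(a) + \frac{1-p_*}{p_*}\, f\!\left(1-\frac{1-p_*}{p_*}a\right)},
\]
and this quantity is $\le p_*$ if and only if $f(a) \le f\!\left(1-\frac{1-p_*}{p_*}a\right)$, which is precisely~\eqref{cond:mirror_consv}.

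Next I would condition, as in Corollary~\ref{col:seq}, on the full filtration $\mathcal{F}_{n-j}$ of~\eqref{eq:sigma} together with the event $\{\pi_j\in\mathcal{H}_0\}$. The point is that $\mathcal{F}_{n-j}$ already contains every masked $p$-value $\{g(P_i)\}_{i=1}^n$ (these are revealed before any exclusion), all the hypotheses excluded so far and their $p$-values, and the choice $\pi_j$ itself; hence the $\sigma$-field on which the corollary conditions is coarser than $\sigma(\mathcal{F}_{n-j}, \{\pi_j\in\mathcal{H}_0\})$, and by the tower property it is enough to show
\[
\mathbb{E}\big[\one(h(P_{\pi_j})=1) \mid \mathcal{F}_{n-j}, \pi_j\in\mathcal{H}_0\big] \le p_* \quad\text{a.s. on } \{\pi_j\in\mathcal{H}_0\}.
\]
Expanding over the value of $\pi_j$ exactly as in~\eqref{eq:conditional_case1} and~\eqref{apd:eq_tbc} (using that $\{\pi_j=i\}$ is $\mathcal{F}_{n-j}$-measurable) reduces the left side to a convex combination $\sum_{i\in\mathcal{R}_{n-j}\cap\mathcal{H}_0}\mathbb{E}[\one(h(P_i)=1)\mid\mathcal{F}_{n-j}]\,\mathbb{P}(\pi_j=i\mid\pi_j\in\mathcal{H}_0,\mathcal{F}_{n-j})$. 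For a null index $i$ still in $\mathcal{R}_{n-j}$, the independence hypotheses of Theorem~\ref{thm:fwer} (null $p$-values mutually independent and independent of the non-nulls, side information held fixed) imply that $\mathcal{F}_{n-j}$ depends on $P_i$ only through $g(P_i)$, so $\mathbb{E}[\one(h(P_i)=1)\mid\mathcal{F}_{n-j}]=\mathbb{P}(h(P_i)=1\mid g(P_i))\le p_*$ by the one-dimensional fact above; averaging and applying the tower property finishes the argument.

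The change of variables and the measure-theoretic bookkeeping are routine repetitions of Corollary~\ref{col:seq}. The step I expect to require the most care is precisely the replacement of independence by the conditional inequality: one must verify that conditioning on \emph{all} masked $p$-values $\{g(P_{\pi_k})\}_{k=1}^n$ --- rather than only $g(P_i)$, or only the $g$'s of already-excluded hypotheses --- creates no extra dependence on $h(P_i)$ for a not-yet-excluded null $i$. This is exactly where the mutual independence of the null $p$-values, and the fact that the masked values are available from the very start of the procedure, are essential; once that reduction is in hand, the mirror-conservativeness computation of the first paragraph does the rest.
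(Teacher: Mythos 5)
Your proposal is correct and follows essentially the same route as the paper: the same pointwise computation showing mirror-conservativeness gives $\mathbb{P}(h(P)=1\mid g(P)=a)\le p_*$ (your formula with the Jacobian $\tfrac{1-p_*}{p_*}$ is algebraically identical to the paper's), followed by conditioning on the finer $\sigma$-field $\mathcal{F}_{n-j}$ together with $\{\pi_j\in\mathcal{H}_0\}$, decomposing over $\{\pi_j=i\}$, using independence of the null $p$-values to reduce the conditioning to $g(P_i)$, and finishing with the tower property. No substantive differences from the paper's proof.
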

\begin{proof}
First, we claim that a mirror-conservative $p$-value $P$ satisfies that
\begin{align} \label{eq:cond_cons}
    \mathbb{E}\left[\one\left(h(P) = 1\right) \mid g(P)\right] \leq p_*,
\end{align}
since for every $a \in (0,p_*)$,
\begin{align*}
    &\mathbb{E}\left[\one\left(h(P) = 1\right) \mid g(P) = a\right]{}\\
    =~& \frac{p_* f(a)}{ p_* f(a)  + (1 - p_*) f\left(1 - \frac{1 - p_*}{p_*} a\right) }{}\\
    =~& \frac{p_* }{ p_* + (1 - p_*) f\left( 1 - \frac{1 - p_*}{p_*} a\right)/ f(a) } \leq p_*,
\end{align*}
where recall that $f$ is the probability mass function of $P$ for discrete $p$-values or the density function otherwise. The last inequality comes from the definition of mirror-conservativeness in~\eqref{cond:mirror_consv}.
The rest of the proof is similar to Corollary~\ref{col:seq}, where we first condition on $\mathcal{F}_{n-j}$:
\begin{align*}
    &\mathbb{E}\left[\one\left(h(P_{\pi_j}) = 1\right) \middle| \mathcal{F}_{n-j},\mathcal{F}_{n - j}^h, \pi_j \in \mathcal{H}_0, \{g(P_{\pi_k})\}_{k=1}^n\right]  \nonumber{}\\
    =~& \sum_{i \in \mathcal{R}_{n - i} \cap \mathcal{H}_0} \mathbb{E}\left[\one\left(h(P_i) = 1\right) \mid \mathcal{F}_{n-j}\right] \mathbb{P}\left(\pi_j = i \middle| \mathcal{F}_{n-j}, \mathcal{F}_{n - j}^h, \pi_j \in \mathcal{H}_0, \{g(P_{\pi_k})\}_{k=1}^n \right) \nonumber{}\\
    \overset{(a)}{=}~& \sum_{i \in \mathcal{R}_{n - i} \cap \mathcal{H}_0} \mathbb{E}\left[\one\left(h(P_i) = 1\right) \mid g(P_i)\right] \mathbb{P}\left(\pi_j = i \middle| \mathcal{F}_{n-j}, \mathcal{F}_{n - j}^h, \pi_j \in \mathcal{H}_0, \{g(P_{\pi_k})\}_{k=1}^n \right) \nonumber{}\\
    \leq~& p_* \sum_{i \in \mathcal{R}_{n - i} \cap \mathcal{H}_0} \mathbb{P}\left(\pi_j = i \middle| \mathcal{F}_{n-j}, \mathcal{F}_{n - j}^h, \pi_j \in \mathcal{H}_0, \{g(P_{\pi_k})\}_{k=1}^n \right) = p_*,
\end{align*}
where equation~$(a)$ simplify the condition of $\mathcal{F}_{n-j}$ to $g(P_i)$ because for any $i \in \mathcal{R}_{n - i} \cap \mathcal{H}_0$, $h(P_i)$ is independent of other information in $\mathcal{F}_{n-j}$.

Then, by the law of iterated expectations, we obtain
\begin{align*}
    &\mathbb{E}\left[\one\left(h(P_{\pi_j}) = 1\right) \middle| \mathcal{F}_{n - j}^h, \pi_j \in \mathcal{H}_0, \{g(P_{\pi_k})\}_{k=1}^n \right]{}\\
    =~& \mathbb{E}\left[ \mathbb{E}\left[\one\left(h(P_{\pi_j}) = 1\right) \middle| \mathcal{F}_{n-j}, \mathcal{F}_{n - j}^h, \pi_j \in \mathcal{H}_0,  \{g(P_{\pi_k})\}_{k=1}^n\right] \middle| \mathcal{F}_{n - j}^h, \pi_j \in \mathcal{H}_0, \{g(P_{\pi_k})\}_{k=1}^n\right] \leq p_*,
\end{align*}
thus the proof is completed.
\end{proof}

\subsection{Negative binomial distribution}
In this section, we discuss several procedures for Bernoulli trials (coin flips) and their connections with the negative binomial distribution.

\begin{lemma} \label{lm:basic_nb}
Suppose $A_1, \ldots, A_n$ are i.i.d. Bernoulli with parameter $p_*$. For $t = 1, \ldots, n$, consider the sum ${M_t = \sum_{j=1}^t A_j}$ and the filtration $\mathcal{G}_t^o = \sigma\left(\{A_j\}_{j=1}^{t}\right)$. Define a stopping time parameterized by a constant $v (\geq 1)$:
\begin{align} \label{eq:stop_nb}
    \tau^o  = \min\{0 < t \leq n: t - M_t \geq v \text{ or } t = n\},
\end{align}
then $M_{\tau^o}$ is stochastically dominated by a negative binomial distribution:
\[
M_{\tau^o} \preceq \mathrm{NB}(v, p_*).
\]
\end{lemma}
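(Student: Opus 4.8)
The plan is to realize $M_{\tau^o}$ as a truncated version of a genuine negative-binomial waiting time and then exploit monotonicity of $t \mapsto M_t$. First I would enlarge the probability space so that the i.i.d.\ Bernoulli($p_*$) sequence continues past index $n$ (equivalently, append fresh i.i.d.\ coins $A_{n+1}, A_{n+2}, \ldots$); this leaves the joint law of $(A_1, \ldots, A_n)$, and hence of $M_{\tau^o}$, unchanged. On this enlarged space set $\widetilde{\tau} := \min\{t \ge 1 : t - M_t \ge v\}$, the first time the running count of zeros among the coins reaches $v$. Since $t - M_t$ increases by exactly $0$ or $1$ at each step, at time $\widetilde{\tau}$ we have $\widetilde{\tau} - M_{\widetilde{\tau}} = v$ exactly, i.e.\ the $v$-th zero occurs at step $\widetilde{\tau}$ and $M_{\widetilde{\tau}}$ counts the number of ones strictly before the $v$-th zero. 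By the textbook description of the negative binomial as the number of successes before the $v$-th failure, with per-trial success probability $p_*$, this gives $M_{\widetilde{\tau}} \sim \mathrm{NB}(v, p_*)$; and $\widetilde{\tau} < \infty$ almost surely since the number of zeros in $A_1, \ldots, A_t$ grows like $(1-p_*)t \to \infty$ (formally by Borel--Cantelli or the strong law).

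Next I would compare the capped and uncapped stopping times. By construction $\tau^o = \min\{\widetilde{\tau}, n\}$: if the $v$-th zero arrives at or before step $n$ then $\tau^o = \widetilde{\tau}$, and otherwise the rule ``$t = n$'' triggers first, so $\tau^o = n < \widetilde{\tau}$. In either case $\tau^o \le \widetilde{\tau}$, and because $t \mapsto M_t$ is nondecreasing we get $M_{\tau^o} \le M_{\widetilde{\tau}}$ pointwise on the enlarged space. A pointwise inequality between random variables yields stochastic dominance of the larger one, so $M_{\tau^o} \preceq M_{\widetilde{\tau}} \sim \mathrm{NB}(v, p_*)$, which is the claim.

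The argument has essentially no hard step; the only points that need a little care are making the coupling explicit — truncating the experiment at $n$ trials can only decrease $M$, never increase it — and the bookkeeping that the first time $t - M_t$ reaches $v$ it hits $v$ exactly, so that $M_{\widetilde{\tau}}$ matches the standard negative-binomial waiting-time variable rather than an off-by-one variant. Everything else is the classical interpretation of $\mathrm{NB}(v, p_*)$ together with monotonicity.
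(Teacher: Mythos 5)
Your proposal is correct and is essentially the paper's own argument: extend the coins past $n$, identify the uncapped stopping time with the number of successes before the $v$-th failure (so it is exactly $\mathrm{NB}(v,p_*)$), then use $\tau^o \le \widetilde{\tau}$ together with monotonicity of $t \mapsto M_t$ to get pointwise domination and hence stochastic dominance. The extra care you take (almost-sure finiteness of $\widetilde{\tau}$ and the exact-hitting bookkeeping) is fine but adds nothing beyond the paper's proof.
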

\begin{proof}
Recall that the negative binomial $\mathrm{NB}(v,p_*)$ is the distribution of the number of success in a sequence of independent and identically distributed Bernoulli trials with probability $p_*$ before a predefined number $v$ of failures have occurred. Imagine the sequence of $A_j$ is extended to infinitely many Bernoulli trials: $A_1, \ldots, A_n, A'_{n+1}, \ldots$, where $\{A'_j\}_{j=n+1}^\infty$ are also i.i.d. Bernoulli with parameter $p_*$ and they are independent of $\{A_j\}_{j=1}^n$. Let $U$ be the number of success before $v$-th failure, then by definition, $U$ follows a negative binomial distribution $\mathrm{NB}(v, p_*)$. We can rewrite $U$ as a sum at a stopping time: $U \equiv M_{\tau'}$, where ${\tau' = \min\{t > 0}: t - M_t \geq v\}$.  By definition, $\tau^o \leq \tau'$ (a.s.), which indicates $M_{\tau^o} \leq M_{\tau'}$ because $M_t$ is nondecreasing with respect to $t$. Thus, we have proved that $M_{\tau^o} \preceq \mathrm{NB}(v, p_*)$.
\end{proof}

\begin{corollary} \label{col:nb}
Following the setting in Lemma~\ref{lm:basic_nb}, we consider the shrinking sum $\widetilde{M_t} = \sum_{j=1}^{n - t} A_j$ for $t = 0, 1, \ldots, n-1$. Let the filtration be $\widetilde{\mathcal{G}_t} = \sigma\left(\widetilde{M_t}, \{A_j\}_{j = n - t + 1}^{n}\right)$. Given a constant $v (\geq 1)$, we define a stopping time:
\begin{align} \label{eq:stop_back}
    \widetilde\tau  = \min\{0 \leq t < n: (n - t) - \widetilde{M_t} < v \text{ or } t = n - 1\},
\end{align}
then it still holds that $\widetilde{M_{\widetilde\tau}} \preceq \mathrm{NB}(v, p_*)$.
\end{corollary}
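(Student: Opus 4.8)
The plan is to reduce Corollary~\ref{col:nb} to Lemma~\ref{lm:basic_nb} by showing that the ``backward'' stopped sum $\widetilde{M_{\widetilde\tau}}$ is in fact \emph{pathwise equal} to the ``forward'' stopped sum $M_{\tau^o}$ built from the very same Bernoulli sequence $A_1,\dots,A_n$. Since Lemma~\ref{lm:basic_nb} already gives $M_{\tau^o}\preceq\mathrm{NB}(v,p_*)$, this identity immediately yields $\widetilde{M_{\widetilde\tau}}\preceq\mathrm{NB}(v,p_*)$, so no new stochastic-dominance computation is needed; the whole content of the corollary is a relabeling of Lemma~\ref{lm:basic_nb}.

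To make the identity transparent I would introduce the failure-count process $F_m := m-\sum_{j=1}^m A_j$, which is nondecreasing in $m$, write $S_m=\sum_{j=1}^m A_j$, and let $m_v$ denote the index of the $v$-th zero (failure) among $A_1,\dots,A_n$, with the convention $m_v=\infty$ when fewer than $v$ failures occur. Rewriting both stopping times through $F$: the forward rule is $\tau^o=\min\{t:F_t\ge v\}\wedge n$, so $M_{\tau^o}=S_{m_v}$ when $m_v\le n$ and $M_{\tau^o}=S_n$ otherwise; the backward rule, because $\widetilde{M_t}=S_{n-t}$ and $F_{n-t}$ is nonincreasing in $t$, stops at the \emph{largest} prefix length $m\in\{1,\dots,n\}$ with $F_m<v$, which equals $m_v-1$ when $2\le m_v\le n$, equals $n$ when fewer than $v$ failures occur, and is forced to the boundary value $m=1$ (i.e.\ $t=n-1$) in the remaining edge case $m_v=1$. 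Comparing the resulting expressions term by term gives $\widetilde{M_{\widetilde\tau}}=M_{\tau^o}$ in every case; the only point requiring a line of justification is that the $v$-th failure at position $m_v$ contributes $0$ to the success count, so that $S_{m_v-1}=S_{m_v}$, which is what aligns the forward value $S_{m_v}$ with the backward value $S_{m_v-1}$.

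The only real ``work'' here is the boundary bookkeeping — reconciling the ``$t=n$'' clause in $\tau^o$ with the ``$t=n-1$'' clause in $\widetilde\tau$ in the two degenerate situations (fewer than $v$ failures in $n$ trials, and $v=1$ with $A_1=0$) — and this must be done carefully enough to land on an exact equality rather than a one-sided bound; none of it is deep. I would also remark in passing that $\widetilde\tau$ is genuinely a stopping time with respect to $\widetilde{\mathcal{G}_t}$, since for $s\le t$ one has $\widetilde{M_s}=\widetilde{M_t}+\sum_{j=n-t+1}^{n-s}A_j$, which is $\widetilde{\mathcal{G}_t}$-measurable, although the pathwise argument above does not actually use this. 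An alternative, if one prefers to stay inside the ``forward'' framework, is to apply Lemma~\ref{lm:basic_nb} to the time-reversed sequence $B_j:=A_{n+1-j}$ and translate the backward rule into a forward one for $B$ (stopping when the failure count of $B_{1},\dots,B_t$ reaches $F-v+1$, where $F$ is the total number of failures); this works too but needs essentially the same case analysis, so I would present the direct pathwise identity.
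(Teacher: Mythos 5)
Your proposal is correct and follows essentially the same route as the paper: both establish a pathwise identity between the backward-stopped sum and the forward-stopped sum of Lemma~\ref{lm:basic_nb}, using the key observation that the $v$-th failure contributes nothing to the success count (so $S_{m_v-1}=S_{m_v}$, i.e.\ $M_{\tau^o-1}=M_{\tau^o}$) and handling the same two boundary cases (fewer than $v$ failures, and the $v$-th failure occurring at the first trial). The paper phrases this via an intermediate time-reversed stopping time $\tau^b$, but the content is identical to your case analysis with $m_v$.
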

\begin{proof}
We first replace the notion of time $t$ by $n - s$, and let time runs backward: $s = n, n-1, \ldots, 1$. The above setting can be rewritten as $\widetilde{M_t} (= \sum_{j=1}^{n - t} A_j) \equiv M_{n-t} \equiv M_s $ and $\widetilde{\mathcal{G}_{t}} = \sigma\left(M_s, \{A_j\}_{j = s + 1}^{n}\right) =: \mathcal{G}_s^b$. Define a stopping time:
\begin{align} \label{eq:back_stop}
    \tau^b  = \max\{0 < s \leq n: s - M_s < v \text{ or } s = 1\},
\end{align}
which runs backward with respect to the filtration $\mathcal{G}_s^b$. By definition, we have $n - \widetilde\tau \equiv \tau^b$, and hence $\widetilde{M_{\widetilde\tau}} \equiv M_{\tau^b}$.

Now, we show that $M_{\tau^b} \equiv M_{\tau^o}$ for $\tau^o$ defined in Lemma~\ref{lm:basic_nb}. First, consider two edge cases: (1) if $t - M_t < v$ holds for every $0 < t \leq n$, then $\tau^b = n = \tau^o$, and thus $M_{\tau^b} = M_{\tau^o}$; (2) if $t - M_t \geq v$ holds for every $0 < t \leq n$, then $\tau^b = 1 = \tau^o$, and again $M_{\tau^b} = M_{\tau^o}$. Next, consider the case where $t - M_t < v$ for some $t$, and $t - M_t \geq v$ for some other $t$. Note that by definition, $\tau^b + 1$ is a stopping time with respect to $\mathcal{G}_{t}^o$, and $\tau^b + 1 = \tau^o$. Also, note that by the definition of $\tau^o$, we have $A_{\tau^o} = 0$, so $M_{\tau^o - 1} = M_{\tau^o}$. Thus, $M_{\tau^b} = M_{\tau^o - 1} = M_{\tau^o}$. Therefore, by Lemma~\ref{lm:basic_nb}, $\widetilde{M_{\widetilde\tau}} \equiv M_{\tau^b} \equiv M_{\tau^o} \preceq NB(v, p_*)$, as stated in the above Corollary.
\end{proof}

\begin{corollary} \label{col:nb_weighted}
Consider a weighted version of the setting in Corollary~\ref{col:nb}. Let the weights $\{W_j\}_{j=1}^n$ be a sequence of Bernoulli, such that (a)~$\sum_{j=1}^n W_j = m$ for a fixed constant $m \leq n$; and (b)~$A_j \mid \sigma\left(\{A_k, W_k\}_{k = j + 1}^n, W_j = 1\right)$ is a Bernoulli with parameter~$p_*$. Consider the sum $M_t^w = \sum_{j = 1}^{n-t} W_j A_j$. Given a constant $v (\geq 1)$, we define a stopping time:
\begin{align} \label{eq:stop_weight}
    \tau^w  =~& \min\{0 \leq t < n: \sum_{j = 1}^{n-t} W_j(1 - A_j) < v \text{ or } t = n - 1\}{}\\
    \equiv~& \min\{0 \leq t < n: \sum_{j = 1}^{n-t} W_j - M_t^w < v \text{ or } t = n - 1\} \nonumber,
\end{align}
then it still holds that $M_{\tau^w}^w \preceq \mathrm{NB}(v, p_*)$.
\end{corollary}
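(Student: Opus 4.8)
The plan is to reduce the weighted statement to the unweighted Corollary~\ref{col:nb} by \emph{thinning} the sequence down to the coordinates that carry weight one. Let $\ell_1 < \ell_2 < \cdots < \ell_m$ be the (random) indices $j$ with $W_j = 1$; there are exactly $m$ of them by assumption~(a). First I would establish the key structural fact: the thinned sequence $(A_{\ell_1}, \ldots, A_{\ell_m})$ is i.i.d.\ $\mathrm{Bernoulli}(p_*)$. The natural obstacle here is that the ``past'' relative to a coordinate $\ell_i$ depends on the random location $\ell_i$ itself, so condition~(b) cannot be invoked verbatim. I would deal with this exactly as in the mutual-independence part of Lemma~\ref{lm:seq}: enumerating the weighted coordinates from the largest index downward, write $B_i := A_{\ell_{m+1-i}}$, partition on the event $\{\ell_{m+1-i} = j\} = \{W_j = 1\}\cap\{\sum_{k>j} W_k = i-1\}$ (which lies in $\sigma(\{W_k\}_{k>j})$), note that on this event $(B_1,\dots,B_{i-1})$ is $\sigma(\{A_k,W_k\}_{k>j})$-measurable, and apply condition~(b), i.e.\ $\mathbb{E}[A_j\one(W_j{=}1)\mid\sigma(\{A_k,W_k\}_{k>j})] = p_*\,\mathbb{P}(W_j{=}1\mid\sigma(\{A_k,W_k\}_{k>j}))$, on each piece. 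Summing over $j$ (the events $\{\ell_{m+1-i}=j\}_j$ partition the sample space) gives $\mathbb{E}[B_i\mid B_1,\dots,B_{i-1}] = p_*$; since each $B_i$ is $\{0,1\}$-valued, the familiar one-line induction yields mutual independence with common law $\mathrm{Bernoulli}(p_*)$, and hence so for the reversal $(A_{\ell_1},\dots,A_{\ell_m})$.

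Next I would rewrite $M^w_t$ and the stopping rule $\tau^w$ purely in terms of this thinned sequence. Put $S_t := \sum_{j=1}^{n-t} W_j$ for the number of weighted coordinates among the first $n-t$. Then $M^w_t = \sum_{i \le S_t} A_{\ell_i} = \widetilde M^{\mathrm{thin}}_{\,m - S_t}$, where $\widetilde M^{\mathrm{thin}}_{t'} := \sum_{i=1}^{m-t'} A_{\ell_i}$ is precisely the shrinking sum of Corollary~\ref{col:nb} built on the length-$m$ i.i.d.\ sequence $(A_{\ell_1},\dots,A_{\ell_m})$; likewise $\sum_{j=1}^{n-t} W_j(1-A_j) = S_t - M^w_t = (m - t') - \widetilde M^{\mathrm{thin}}_{t'}$ with $t' = m - S_t$. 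Since $t \mapsto m - S_t$ is nondecreasing and jumps by at most one at each step, every ``thinned time'' in $\{0,1,\dots,m - W_1\}$ is visited, and $\widetilde\tau^{\mathrm{thin}} \le m-1 \le m - W_1$. A short case analysis then identifies the stopping times: distinguishing whether Corollary~\ref{col:nb}'s $\widetilde\tau^{\mathrm{thin}}$ is triggered by the inequality or by its boundary $t' = m-1$, and checking that $\tau^w$ cannot be forced to stop by its own boundary $t = n-1$ before the thinned time reaches $\widetilde\tau^{\mathrm{thin}}$, one gets $M^w_{\tau^w} = \widetilde M^{\mathrm{thin}}_{\widetilde\tau^{\mathrm{thin}}}$.

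Finally, applying Corollary~\ref{col:nb} to the i.i.d.\ sequence $(A_{\ell_1},\dots,A_{\ell_m})$ (with $n$ there replaced by $m$) gives $\widetilde M^{\mathrm{thin}}_{\widetilde\tau^{\mathrm{thin}}} \preceq \mathrm{NB}(v,p_*)$, whence $M^w_{\tau^w} \preceq \mathrm{NB}(v,p_*)$, which is the claim.

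I expect the two genuinely delicate points to be (i) the measurability argument in the i.i.d.\ step, because the coordinate $\ell_i$ under consideration is itself random and one must cut on its value before condition~(b) becomes applicable, and (ii) the boundary bookkeeping in matching $\tau^w$ with $\widetilde\tau^{\mathrm{thin}}$ (in particular the edge cases $v = 1$ and $W_1 \in \{0,1\}$, where the terminal values of both processes must be checked to agree). Everything else — the reduction itself and the negative-binomial dominance — is the coupling-to-an-infinite-i.i.d.-sequence idea already used in Lemma~\ref{lm:basic_nb} and Corollary~\ref{col:nb}.
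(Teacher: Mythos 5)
Your proposal is correct, and it reaches the same destination as the paper — reduce to Corollary~\ref{col:nb} by recognizing that only the weighted coordinates matter and that they behave as i.i.d.\ $\mathrm{Bernoulli}(p_*)$ — but via a genuinely different key step. The paper never works with the subsequence $(A_{\ell_1},\dots,A_{\ell_m})$ directly: it introduces an auxiliary i.i.d.\ sequence $\{B_l\}_{l=1}^m$, proves the distributional identity $\sum_{j\le n-t} W_jA_j \overset{d}{=} \sum_{l\le m-T(t)} B_l$ by an explicit factorization of the probability mass function (conditioning on the later $\{A_k,W_k\}$ and peeling off condition~(b) coordinate by coordinate), asserts that the relevant filtrations carry the same probability measure, and then conditions on $\{W_j\}$ so that the time change $T(t)$ becomes deterministic and Corollary~\ref{col:nb} applies. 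You instead prove the stronger and cleaner intermediate statement that the thinned sequence itself is i.i.d.\ $\mathrm{Bernoulli}(p_*)$ — your conditional-expectation/partition argument, in the style of the independence part of Lemma~\ref{lm:seq}, is sound: although your parenthetical that $\{\ell_{m+1-i}=j\}$ lies in $\sigma(\{W_k\}_{k>j})$ is not literally true (the factor $\{W_j=1\}$ is not measurable there), your actual computation absorbs $\one(W_j{=}1)$ into the product form of condition~(b), i.e.\ $\mathbb{E}[A_j\one(W_j{=}1)\mid\sigma(\{A_k,W_k\}_{k>j})]=p_*\,\mathbb{P}(W_j{=}1\mid\sigma(\{A_k,W_k\}_{k>j}))$, so nothing breaks — and then you identify $M^w_{\tau^w}$ with the stopped thinned process pathwise rather than in distribution. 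What your route buys is rigor and transparency: it avoids the paper's somewhat informal ``same probability measure'' step and makes the invocation of Corollary~\ref{col:nb} immediate. What it costs is the time-change bookkeeping you flag: one must check that every thinned time up to $m-1$ is visited and handle the boundary cases ($v=1$, $W_1\in\{0,1\}$), where the two stopped sums either coincide or satisfy $M^w_{\tau^w}\le \widetilde M^{\mathrm{thin}}_{\widetilde\tau^{\mathrm{thin}}}$, which is all the stochastic dominance requires; your sketch handles this correctly.
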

\begin{proof}
Intuitively, adding the binary weights should not change the distribution of the sum $M_{\tau^w}^w = \sum_{j = 1}^{n-\tau^w} W_j A_j$, since by condition~(b), $A_j$ is still a Bernoulli with parameter $p_*$ when it is counted in the sum. We formalize this idea as follows.

Let $\{B_l\}_{l=1}^m$ be a sequence of i.i.d. Bernoulli with parameter $p_*$, and denote the sum $\sum_{l=1}^{m-s} B_l$ as $\widetilde{M_s}(B)$. Let ${T(t) = m - \sum_{j=1}^{n - t} W_j}$, then the stopping time $\tau^w$ can be rewritten as
\begin{align} \label{eq:stop_image}
    \tau^w  \equiv \min\{0 \leq t < n: m - T(t) - \widetilde{M_{T(t)}}(B) < v \text{ or } t = n - 1\},
\end{align}
because $m - T(t) = \sum_{j = 1}^{n-t} W_j$ by definition, and
\begin{align} \label{eq:same_dist}
    \widetilde{M_{T(t)}}(B) = \sum_{l=1}^{m-T(t)} B_l\overset{d}{ = }  \sum_{j = 1}^{n-t} W_j A_j = M_t^w.
\end{align}
For simple notation, we present the reasoning of equation~\eqref{eq:same_dist} when $t = 0$ (for arbitrary $t$, consider the distributions conditional on $\{A_k, W_k\}_{k = n - t + 1}^n$). That is, we show that $\mathbb{P}(\sum_{l=1}^{m} B_l = x) =  \mathbb{P}(\sum_{j=1}^{n} W_j A_j = x)$ for every $x \geq 0$. Let $\{b_l\}_{j=1}^m \in \{0,1\}^m$, then we derive that 
\[
\mathbb{P}(\sum_{l=1}^{m} B_l = x) = \sum_{\sum b_l = x} \mathbb{P}(B_l = b_l \text{ for } l = 1, \ldots, n) = \sum_{\sum b_l = x} \prod_{l = 1}^m f^B(b_l),
\]
where $f^B$ is the probability mass function of a Bernoulli with parameter $p_*$. Let $\{a_k\}_{k=1}^{n-m} \in \{0,1\}^{n-m}$, then for the weighted sum,
\begin{align*}
    &\mathbb{P}(\sum_{j=1}^{n} W_j A_j = x){}\\
    =~& \sum_{\sum b_l = x} \sum_{\sum w_j = m} \sum_{a_k} \mathbb{P}(A_j = b_l \text{ if } w_j = 1; A_j = a_k \text{ if } w_j = 0; W_j = w_j \text{ for } i = 1,\ldots, n){}\\
    =~& \sum_{\sum b_l = x} \prod_{l = 1}^m f^B(b_l) \underbrace{\sum_{\sum w_j = m} \sum_{\sum a_k}  \prod_{w_j = 0} \mathbb{P}(A_j = a_k \mid \sigma\left(\{A_k, W_k\}_{k = j + 1}^n, W_j = 0\right) \prod_{j=1}^n \mathbb{P}(W_j = w_j \mid \{A_k, W_k\}_{k = j + 1}^n)}_{C \quad (\text{a constant with respect to } x)}{}\\
    =~& C\sum_{\sum b_l = x} \prod_{l = 1}^m f^B(b_l) = C\mathbb{P}(\sum_{l=1}^{m} B_l = x),
\end{align*}
for every possible value $x \geq 0$, which implies that $\mathbb{P}(\sum_{l=1}^{m} B_l = x)$ and $\mathbb{P}(\sum_{j=1}^{n} W_j A_j = x)$ have the same value; and hence we conclude equation~\eqref{eq:same_dist}. It follows that the filtration for both the stopping time~$\tau^w$ and the sum $M^w_{t^w}$, denoted as $\sigma\left(\sum_{j=1}^{n-t}W_j, M^w_{t^w}, \{A_j, W_j\}_{j = n - t + 1}^n \right)$, has the same probability measure as ${\sigma\left(m - T(t), \widetilde{M_{T(t)}}(B), \{A_j, W_j\}_{j = n - t + 1}^n\right)}$. Thus, the sums at the stopping time have the same distribution, ${M_{\tau^w}^w \overset{d}{=} \widetilde{M_{T(\tau^w)}}(B)}$.
The proof completes if $\widetilde{M_{T(\tau^w)}}(B) \preceq \mathrm{NB}(v, p_*)$. It can be proved once noticing that stopping rule~\eqref{eq:stop_image} is similar to stopping rule~\eqref{eq:stop_back} except $T(t)$ is random because of $W_j$, so we can condition on $\{W_j\}_{j=1}^n$ and apply Corollary~\ref{col:nb}; and this concludes the proof.
\end{proof}

\begin{corollary} \label{col:nb_cons}
In Corollary~\ref{col:nb_weighted}, consider $A_j$ with different parameters. Suppose $A_j \mid \sigma\left(\{A_k, W_k\}_{k = j + 1}^n, W_j = 1\right)$ is a Bernoulli with parameter $p\left(\{A_k, W_k\}_{k = j + 1}^n\right)$ for every ${j = 1, \ldots, n}$. Given a constant $p_* \in (0, 1)$, if the parameters satisfy that $p\left(\{A_k, W_k\}_{k = j + 1}^n\right) \leq p_*$ for all~$j = 1, \ldots, n$, then it still holds that $M_{\tau^w}^w \preceq \mathrm{NB}(v, p_*)$.
\end{corollary}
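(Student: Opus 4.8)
The plan is to deduce Corollary~\ref{col:nb_cons} from Corollary~\ref{col:nb_weighted} by a monotone coupling. Intuitively, a sequence whose (adaptive) success probabilities are all at most $p_*$ can be dominated term by term by a sequence of genuine Bernoulli$(p_*)$ variables, and once the $A_j$'s are only made larger, both the stopped weighted sum and the time at which the $v$-th weighted failure occurs move in a single favourable direction. So the strategy is: (1) build a dominating sequence $\{A_j'\}$ that satisfies the hypotheses of Corollary~\ref{col:nb_weighted} exactly; (2) check that the associated stopped sum dominates $M^w_{\tau^w}$ pointwise; (3) quote Corollary~\ref{col:nb_weighted}.

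\textbf{The construction.} Process the indices in the order $j = n, n-1, \dots, 1$ in which the conditional laws are specified, and introduce i.i.d.\ $\mathrm{Unif}(0,1)$ variables $\{U_j\}_{j=1}^n$ independent of $\{A_j, W_j\}_{j=1}^n$. Set $A_j' := A_j$ when $W_j = 0$; and when $W_j = 1$, writing $p_j := p\left(\{A_k, W_k\}_{k=j+1}^n\right) \le p_*$ and $q_j := (p_* - p_j)/(1 - p_j) \in [0,1)$, set $A_j' := \max\{A_j,\ \one(U_j \le q_j)\}$. Then $A_j' \ge A_j$ pointwise, and conditionally on $\sigma\left(\{A_k, W_k, U_k\}_{k=j+1}^n,\, W_j = 1\right)$ the variables $A_j$ and $\one(U_j \le q_j)$ are independent Bernoullis with parameters $p_j$ and $q_j$, so $A_j'$ is Bernoulli with parameter $1 - (1-p_j)(1-q_j) = p_*$. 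Since each $A_k'$ is a function of $\{A_l, W_l, U_l\}_{l \ge k}$, we have $\sigma\left(\{A_k', W_k\}_{k=j+1}^n\right) \subseteq \sigma\left(\{A_k, W_k, U_k\}_{k=j+1}^n\right)$, so by the tower property the conditional law of $A_j'$ given the smaller $\sigma$-field $\sigma\left(\{A_k', W_k\}_{k=j+1}^n,\, W_j = 1\right)$ is still Bernoulli$(p_*)$. The weight constraint $\sum_j W_j = m$ is untouched, so Corollary~\ref{col:nb_weighted} applies to $\{A_j', W_j\}$: with $M'_t := \sum_{j=1}^{n-t} W_j A_j'$ and $\tau'$ the stopping time \eqref{eq:stop_weight} built from $\{A_j', W_j\}$, we get $M'_{\tau'} \preceq \mathrm{NB}(v, p_*)$.

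\textbf{Pointwise domination.} Because $1 - A_j' \le 1 - A_j$, for every $t$ we have $\sum_{j=1}^{n-t} W_j(1 - A_j') \le \sum_{j=1}^{n-t} W_j(1 - A_j)$, so the stopping condition ``$< v$'' in \eqref{eq:stop_weight} is met at a value of $t$ no larger for the primed sequence, i.e.\ $\tau' \le \tau^w$ (the $t = n-1$ fallback is consistent with this), hence $n - \tau^w \le n - \tau'$. Using $0 \le A_j \le A_j'$ and nonnegativity of the extra terms,
\[
M^w_{\tau^w} = \sum_{j=1}^{n-\tau^w} W_j A_j \ \le\ \sum_{j=1}^{n-\tau^w} W_j A_j' \ \le\ \sum_{j=1}^{n-\tau'} W_j A_j' \ =\ M'_{\tau'}.
\]
Combined with the previous step, $M^w_{\tau^w} \preceq M'_{\tau'} \preceq \mathrm{NB}(v, p_*)$, which is the claim.

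\textbf{Main obstacle.} The delicate point is that $\tau^w$ itself depends on the $A_j$'s, so one cannot simply swap $A_j$ for $A_j'$ inside Corollary~\ref{col:nb_weighted}; one must verify both that the coupled sequence $\{A_j'\}$ genuinely inherits the conditional-Bernoulli$(p_*)$ structure with respect to \emph{its own} backward filtration (the tower-property argument above, using that the primed filtration is coarser) and that the dependence of $\tau^w$ on the $A_j$'s is monotone in the right direction, so that shrinking the $A_j$'s can only decrease the stopped sum. Everything else --- measurability of $\tau^w$, the degenerate case $j = n$, the $t = n-1$ fallback --- is routine and identical in spirit to Corollary~\ref{col:nb_weighted}.
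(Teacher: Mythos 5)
Your proposal is correct and matches the paper's own argument: both construct a pointwise-dominating sequence of conditionally Bernoulli($p_*$) variables (your $A_j' = \max\{A_j,\one(U_j\le q_j)\}$ is exactly the paper's $B_j = A_j\one(A_j=1)+C_j\one(A_j=0)$), then use monotonicity of the stopped weighted sum and the stopping time to reduce to Corollary~\ref{col:nb_weighted}. The only cosmetic difference is that you calibrate the auxiliary coin from the $A$-history and pass to the coarser primed filtration by the tower property, whereas the paper calibrates it directly against the conditional expectation given the constructed $B$-history.
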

\begin{proof}
We first construct Bernoulli with parameter $p_*$ based on $A_j$ by an iterative process. Start with $j = n$. Let $C_n$ be a Bernoulli independent of $\{A_k\}_{k=1}^n$ with parameter $\frac{p_* - p_n}{1 - p_n}$, where $p_n = \mathbb{E}(A_n \mid W_n = 1).$ Construct 
\begin{align}
    B_n = A_n\one\left(A_n = 1\right) + C_n\one\left(A_n = 0\right),
\end{align}
which thus satisfies that $\mathbb{E}(B_n \mid W_n = 1) = p_*$, and that $B_n \geq A_n$ (a.s.). Now, let $j = j - 1$ where we consider the previous random variable. Let $C_j$ be a Bernoulli independent of $\{A_k\}_{k=1}^{j}$, with parameter 
\begin{align} \label{eq:para_c}
    \frac{p_* - \widetilde p\left(\{B_k, W_k\}_{k = j + 1}^n\right)}{1 - \widetilde p\left(\{B_k, W_k\}_{k = j + 1}^n\right)},
\end{align}
where $\widetilde p\left(\{B_k, W_k\}_{k = j + 1}^n\right) = \mathbb{E}\left[A_j \mid \sigma\left(\{B_k, W_k\}_{k = j + 1}^n, W_j = 1\right)\right]$ (note that the parameter for $C_j$ is well-defined since $\widetilde p\left(\{B_k, W_k\}_{k = j + 1}^n\right) \leq p_*$ by considering the expectation further conditioning on $\{A_k\}_{k = j + 1}^n$). Then, we construct $B_j$ as
\begin{align} \label{eq:construct_b}
    B_j = A_j\one\left(A_j = 1\right) + C_j\one\left(A_j = 0\right),
\end{align}
which thus satisfies that $\mathbb{E}\left[B_j \mid \sigma\left(\{B_k, W_k\}_{k = j + 1}^n, W_j = 1\right)\right] = p_*$, and that $B_j \geq A_j$ (a.s.). 

Now, consider two procedures for $\{A_j\}_{j=1}^n$ and $\{B_j\}_{j=1}^n$ with the same stopping rule~\eqref{eq:stop_weight} in Corollary~\ref{col:nb_weighted}, where the sum of~$A_j$ is denoted as $M_t^w(A)$ and the stopping time as $\tau^w_A$ (and the similar notation for $B_j$).
Since construction~\eqref{eq:construct_b} ensures that $B_j \geq A_j$ for every $j = 1, \ldots, n$, we have $M_t^w(B) \geq M_t^w(A)$ for every $t$; and hence, $\tau^w_A \geq \tau^w_B$. It follows that
\[
M_{\tau^w_A}^w(A) \leq M_{\tau^w_B}^w(A) \leq 
M_{\tau^w_B}^w(B) \preceq \mathrm{NB}(v, p_*),
\]
where the first inequality is because $M_t^w$ is nonincreasing with respect to $t$, and the last step is the conclusion of Corollary~\ref{col:nb_weighted}; this completes the proof.
\end{proof}

\subsection{Proof of Theorem~\ref{thm:fwer}.}
\begin{proof}
We discuss three cases: (1) the simplest case where all the hypotheses are null, and the null $p$-values are uniformly distributed; (2)~the case where non-nulls may exist, and the null $p$-values are uniformly distributed; and finally (3)~the case where non-nulls may exist, and the null $p$-values can be mirror-conservative.

\paragraph{Case 1: nulls only and null $p$-values uniform.}
By Lemma~\ref{lm:seq}, $\{\one\left(h\left(P_{\pi_j}\right) = 1\right)\}_{j=1}^n$ are i.i.d. Bernoulli with parameter~$p_*$. Observe that the stopping rule in Algorithm~\ref{alg:apt}, $\widehat{\text{FWER}_t} \equiv 1 - (1 - p_*)^{|\mathcal{R}_t^-| + 1} \leq \alpha$, can be rewritten as $|\mathcal{R}_t^-| + 1 \leq v$ where
\begin{align} \label{eq:v_def}
    v = \left\lfloor\frac{\log (1 - \alpha)}{\log (1 - p_*)} \right\rfloor,
\end{align}
which is also equivalent as $|\mathcal{R}_t^-| < v$. We show that the number of false rejections is stochastically dominated by $\mathrm{NB}(v, p_*)$ by Corollary~\ref{col:nb}. Let ${A_j = \one\left(h\left(P_{\pi_j}\right) = 1\right)}$ and $\widetilde{M_{t}} = \sum_{j = 1}^{n - t} \one\left(h\left(P_{\pi_j}\right) = 1\right)$. The stopping time is  
$\widetilde\tau = \min\{0 \leq t < n: |\mathcal{R}_t^-| = (n - t) - \widetilde{M_{t}} < v \text{ or } t = n -1\}$. The number of rejections at the stopping time is 
\[
|\mathcal{R}_{\widetilde\tau}^+| \equiv \sum_{j = 1}^{n -\widetilde\tau} \one\left(h\left(P_{\pi_j}\right) = 1\right) \equiv \widetilde{M_{\widetilde\tau}} \preceq \mathrm{NB}(v, p_*),
\]
where the last step is the conclusion of Corollary~\ref{col:nb}. Note that we assume all the hypotheses are null, so the number of false rejections is $|\mathcal{R}_{\widetilde\tau}^+ \cap \mathcal{H}_0| = |\mathcal{R}_{\widetilde\tau}^+| \preceq \mathrm{NB}(v, p_*)$. Thus, FWER is upper bounded:
\begin{align} \label{eq:fwer_control}
    \mathbb{P}(|\mathcal{R}_{\widetilde\tau}^+ \cap \mathcal{H}_0| \geq 1) \leq 1 - (1 - p_*)^v \leq \alpha,
\end{align}
where the last inequality follows by the definition of $v$ in~\eqref{eq:v_def}. Thus, we have proved FWER control in Case~1.

\textbf{Remark:} This argument also provides some intuition on the FWER estimator~\eqref{eq:fwer_hat}: $\widehat{\text{FWER}_t} = 1 - (1 - p_*)^{|\mathcal{R}_t^-| + 1}$. Imagine we run the algorithm for one time without any stopping rule until time $t_0$ to get an instance of $\widehat{\text{FWER}_{t_0}}$, then we run the algorithm on another independent dataset, which stops once $\widehat{\text{FWER}_t}\leq \widehat{\text{FWER}_{t_0}}$. Then in the second run, FWER is controlled at level $ \widehat{\text{FWER}_{t_0}}$.

\paragraph{Case 2: non-nulls may exist and null $p$-values are uniform.}
We again argue that the number of false rejections is stochastically dominated by $\mathrm{NB}(v, p_*)$, and in this case we use Corollary~\ref{col:nb_weighted}. Consider $A_j = \one\left(h\left(P_{\pi_j}\right) = 1\right)$ and ${W_j = \one\left(\pi_j \in \mathcal{H}_0\right)}$, which satisfies condition~(b) in Corollary~\ref{col:nb_weighted} according to Corollary~\ref{col:seq}. Let $m = |\mathcal{H}_0|$, then ${\sum_{j=1}^n W_j = m}$, which corresponds to condition~(a). Imagine an algorithm stops once
\begin{align} \label{eq:stop_imagine}
    \sum_{j = 1}^{n - t} \one\left(h\left(P_{\pi_j} \right) = - 1 \cap \pi_j \in \mathcal{H}_0\right)
    = \sum_{j = 1}^{n - t} W_j (1 - A_j) < v,
\end{align}
and we denote the stopping time as $\tau^w$. By Corollary~\ref{col:nb_weighted}, the number of false rejections in this imaginary case is
\[
\sum_{j = 1}^{n - \tau^w} \one\left(h\left(P_{\pi_j} \right) = 1 \cap \pi_j \in \mathcal{H}_0\right) = \sum_{j = 1}^{n - t} W_j  A_j
     = M_{\tau^w}^w \preceq \mathrm{NB}(v, p_*).
\]
Now, consider the actual \apt which stops when $|R_t^-| = (n - t) - \sum_{j = 1}^{n -t} \one\left(h\left(P_{\pi_j}\right) = 1\right) < v$, and denote the true stopping time as~$\tau_T^w$. Notice that at the stopping time, it holds that
\begin{align*}
    &\sum_{j = 1}^{n - \tau_T^w} \one\left(h\left(P_{\pi_j} \right) = - 1 \cap \pi_j \in \mathcal{H}_0\right){}\\ 
    \leq~& \sum_{j = 1}^{n - \tau_T^w} \one\left(h\left(P_{\pi_j} \right) = - 1\right){}\\ 
    =~& (n - \tau_T^w) - \sum_{j = 1}^{n - \tau_T^w} \one\left(h\left(P_{\pi_j}\right) = 1\right)< v,
\end{align*}
which means that stopping rule~\eqref{eq:stop_imagine} is satisfied at $\tau_T^w$. Thus, $\tau_T^w \geq \tau^w$ and $M_{\tau_T^w}^w \leq M_{\tau^w}^w$ (because $M_t^w$ is nonincreasing with respect to $t$). It follows that the number of false rejections is
\[
|\mathcal{R}_{\tau_C^w}^+ \cap \mathcal{H}_0| \equiv \sum_{j = 1}^{n - \tau_C^w} \one\left(h\left(P_{\pi_j}\right) = 1 \cap \pi_j \in \mathcal{H}_0\right) \equiv M_{\tau_C^w}^w \leq  M_{\tau^w}^w \preceq \mathrm{NB}(v, p_*).
\]
We then prove FWER control using a similar argument as~\eqref{eq:fwer_control}:
\[
\mathbb{P}(|\mathcal{R}_{\tau^w}^+ \cap \mathcal{H}_0| \geq 1) \leq 1 - (1 - p_*)^v \leq \alpha,
\]
which completes the proof of Case~2.
 
\paragraph{Case 3: non-nulls may exist and null $p$-values can be mirror-conservative.}
In this case, we follow the proof of Case~2 except additionally conditioning on all the masked $p$-values, $\{g(P_{\pi_k})\}_{k=1}^n$. By Corollary~\ref{col:seq_cons} and Corollary~\ref{col:nb_cons}, we again conclude that the number of false rejections is dominated by a negative binomial:
\[
|\mathcal{R}_{\tau_C^w}^+ \cap \mathcal{H}_0| \preceq \mathrm{NB}(v, p_*),
\]
if given $\{g(P_{\pi_k})\}_{k=1}^n$. Thus, FWER conditional on $\{g(P_{\pi_k})\}_{k=1}^n$ is upper bounded:
\[
\mathbb{P}\left(|\mathcal{R}_{\tau^w}^+ \cap \mathcal{H}_0| \geq 1 \middle| \{g(P_{\pi_k})\}_{k=1}^n\right) \leq 1 - (1 - p_*)^v \leq \alpha,
\]
which implies the FWER control by the law of iterated expectations. This completes the proof of Theorem~\ref{thm:fwer}.
\end{proof}

\section{An alternative perspective: closed testing} \label{apd:closed_testing}
This section summarizes the comments from Jelle Goeman, who kindly points out the connection between our proposed method and the \textit{closed testing} \citep{marcus1976closed}. Closed testing is a general framework that generates a procedure with FWER control given any test with Type 1 error control. Specifically, we reject $H_i$ if all possible sets of hypotheses involving~$H_i$, denoted as $U \ni i$, can be rejected by a ``local'' test for hypotheses in $U$ with Type 1 error control at level $\alpha$. 

The \apt we propose shares some commonalities with the \textit{fallback procedure} \citep{wiens2005fallback}, which can be viewed as a shortcut of a closed testing procedure. We briefly describe the commonalities and differences next. Let $v$ be a prespecified positive integer. The fallback procedure orders the hypotheses from most to least interesting, and proceeds to test them one by one at level $\alpha/v$ until it has failed to reject $v$ hypotheses. The hypothesis ordering is allowed to be data-dependent as long as the ordering is independent of the $p$-values, corresponding to ordering by the side information $x_i$ in our language. This procedure is essentially also what the \apt does except~(a) the \apt uses the {\v{S}}id{\'a}k correction instead of the Bonferroni correction;~(b) we are interested in whether rejecting each hypothesis instead of adjusting individual $p$-values, so the ordering only needs to be independent of reject/non-reject status instead of on the full $p$-values, which allows us to split each $p$-value into $h(P_i)$ and $g(P_i)$;~(c) under the assumption of independent null $p$-values, we are allowed to use the $p$-values excluded from the candidate rejection set $\mathcal{R}_t$ as independent information to create the ordering. The latter two differences enable the \apt to be interactive based on a considerably large amount of data information. 

\subsection{Alternative proof of Theorem~\ref{thm:fwer}}
The above observation leads to a simple proof of the error control guarantee without involving any martingales or negative binomial distributions, once we rewrite the \apt in the language of closed testing. 
\begin{proof}
For simplicity, we consider the nulls with only uniform $p$-values. Let $v$ be a prespecified positive integer, and define $p_* = 1 - (1 - \alpha)^{1/v}$. Imagine that the \apt does not have a stopping rule and let $\pi_n, \ldots, \pi_1$ be the order in which the hypotheses are chosen by an analyst, where each choice $\pi_t$ can base on all the information in $\mathcal{F}_{n - t}$. 

Here, we construct a closed testing procedure by defining a local test with Type 1 error control for an arbitrary subset~$U \in [n]$ of size $|U|$. Sort the hypotheses in $U$ according to the analyst-specified ordering from the last $\pi_n$ to the first chosen $\pi_1$. If the number of hypotheses in $U$ is larger than $v$, define $U_v$ as the subset of $U$ of size $v$ corresponding to the hypotheses in $U$ that are chosen last. For example, if $U = [n]$, we have $U_v = \{\pi_v, \ldots, \pi_1\}$. If $|U| \leq v$, define $U_v = U$. We reject the subset $U$ if $h(P_i) = 1$ (i.e., $P_i \leq p_*)$ for at least one $i \in U_v$. This is a valid local test, since it controls the Type 1 error when all the hypotheses in $U$ are null. To verify the error control, notice that $h(P_i)$'s are independent and follows $\text{Bernoulli}(p_*)$, and $U_v$ is independent of $\{h(P_i)\}_{i \in U_v}$ by the construction of sequence $\pi_1, . . . , \pi_n$, so the Type 1 error satisfy
\begin{align*}
    \mathbb{P}(\exists i \in U_v: h(P_i) = 1) \leq 1 - (1 - p_*)^v,
\end{align*}
which is less than~$\alpha$ by the definition of $v$ and $p_*$. Indeed, the local test corresponds to a {\v{S}}id{\'a}k correction for $v$ number of hypotheses. Through closed testing, this local test leads to a valid test with FWER control. 

Next, we show that the rejection set from the \apt, $\mathcal{R}_\tau^+$, is included in the rejection set from the above closed testing procedure. Choose any hypothesis $j \in \mathcal{R}_\tau^+$ and any set $W \ni j$. If $H_j$ is among the last $v$ hypotheses last chosen in $W$ (or if $|W| \leq v$), the local test for $W$ reject the null since $P_j \leq p_*$ by the definition of $\mathcal{R}_\tau^+$. Otherwise, the $v$ hypotheses last chosen in $W$ are all chosen after $H_j$. Since $j \in \mathcal{R}_\tau^+$ and by the definition of $\tau$, we have $|\mathcal{R}_\tau^-| \leq v - 1$. That is, there can be at most $v - 1$ hypotheses among these $v$ such that $h(P_i) = -1$, so set~$W$ is rejected by the local test as described in the previous paragraph. It follows from the definition of FWER and the error control of the larger (or equivalent) rejection set from the closed testing procedure that $\mathcal{R}_\tau^+$ has FWER control.
\end{proof}

\subsection{Improvement on an edge case}
%The perspective of closed testing provides several insights. First, it allows a much easier proof of Theorem~\ref{thm:fwer} that does not need any martingales or negative binomial distributions. 
%Second, from the construction we can also see that the method that specifies $p_*$ is not generally admissible. To be admissible, we should first choose $v$, and then write $p_* = 1-(1-\alpha)^{1/v}$, rather than defining $v$ from $p_*$. 
From the closed testing procedure constructed in the above proof, we observe that the local tests do not exhaust the $\alpha$-level for intersections of less than $v$ hypotheses. This suboptimality can be remedied, but it will only improve power for rejecting all hypotheses given that almost all are already rejected (i.e., most subsets~$U$ with $|U| > v$ are rejected by the local test).  In the \apt, such a case potentially corresponds to the case where the initial rejection set has less than $v$ hypotheses with negative $h(P_i)$, so the algorithm stops before shrinking $\mathcal{R}_0$, and reject all the hypotheses with positive $h(P_i)$. However, we might not fully use the error budget because $\widehat{\text{FWER}_0} < \alpha$. However, we might not fully use the error budget because $\widehat{\text{FWER}_0} < \alpha$. To improve power and efficiently use all the error budget, we propose randomly rejecting the hypotheses with a negative $h(P_i)$ if the algorithm stops at step $0$.

\begin{algorithm}[h!]
   \caption{The adjusted \apt}
   \label{alg:adj_apt}
\begin{algorithmic}
   \STATE {\bfseries Input:} Side information and $p$-values $\{x_i, P_i\}_{i = 1}^n$, target FWER level~$\alpha$, and parameter~$p_*$;
   \STATE {\bfseries Procedure:} 
   \STATE Initialize $\mathcal{R}_0 = [n]$;
   \IF{$\widehat{\text{FWER}}_0 \equiv 1 - (1 - p_*)^{|\mathcal{R}_0^-| + 1} \leq \alpha$}
   \STATE Obtain $n$ independent indicators from a Bernoulli distribution with probability $1 - (1 - \alpha + \widehat{\text{FWER}_0})^{1/|\mathcal{R}_0^-|}$, denoted as $\{I_i\}_{i\in [n]}$;
   \STATE Reject $\{H_i: i \in [n], h(P_i) = 1 \text{ or } I_i = 1\}$ and exit;
   \ELSE
   \FOR{$t=1$ {\bfseries to} $n$}
   \STATE 1.~Pick any $i_t^* \in \mathcal{R}_{t-1}$, using $\{x_i, g(P_i)\}_{i=1}^n$ and $\{h(P_i)\}_{i \notin \mathcal{R}_{t-1}}$;
   \STATE 2.~Exclude $i_t^*$ and update ${\mathcal{R}_t = \mathcal{R}_{t-1}\backslash \{i_t^*\}}$; 
   \IF{$\widehat{\text{FWER}}_t \equiv 1 - (1 - p_*)^{|\mathcal{R}_t^-| + 1} \leq \alpha$}
   \STATE Reject $\{H_i: i \in \mathcal{R}_t, h(P_i) = 1\}$ and exit;
   \ENDIF
   \ENDFOR
   \ENDIF
\end{algorithmic}
\end{algorithm}

Recall that the number of negative $h(P_i)$ is $|\mathcal{R}_0^-|$. For each hypothesis with a negative $h(P_i)$, we independently decide to reject it with probability $1 - (1 - \alpha_\text{re})^{1/|\mathcal{R}_0^-|}$, where $\alpha_\text{re} := \alpha - \widehat{\text{FWER}_0}$ denotes the remaining error budget after rejecting all the hypotheses with positive $h(P_i)$'s. We summarize the adjusted \apt in Algorithm~\ref{alg:adj_apt}. To see the error control guarantee of this improved algorithm, notice that
\begin{align*}
    &\mathbb{P}(\exists i \in \mathcal{H}_0: H_i \text{ is rejected}){}\\ 
    \leq~& \mathbb{P}(\exists i \in \mathcal{H}_0: h(P_i) = 1) + \mathbb{P}(\exists i \in \mathcal{H}_0: h(P_i) = -1 \text{ and } H_i \text{ is rejected}){}\\
    \leq~& \widehat{\text{FWER}_0} + \mathbb{P}(\exists i \in \mathcal{R}_0^-: H_i \text{ is rejected}){}\\
    \leq~& \widehat{\text{FWER}_0} + \alpha_\text{re} = \alpha,
\end{align*}
where $\mathbb{P}(\exists i \in \mathcal{H}_0: h(P_i) = 1) \leq \widehat{\text{FWER}_0}$ follows the argument using negative binomial distribution as in the proof of the original algorithm; and $\mathbb{P}(\exists i \in \mathcal{R}_0^-: H_i \text{ is rejected}) \leq \alpha_\text{re}$ is the result of a {\v{S}}id{\'a}k correction. 

%The analogy in the fallback procedure is to test at larger levels than $\alpha/v$ if less than $v$ hypotheses remain not rejected. corresponding to choosing a larger smaller $v$ or larger $p_*$ in the \apt. However, note that generally the parameter $v$ or $p_*$ need to be chosen before looking at the $p$-values. It might be possible to add an adjustment to the parameter in the case where $\mathcal{R}_0^- \leq v - 1$ and still maintains FWER control.

%A third perspective would be via sequential rejection, which should give an even easier proof of Theorem 1, but I am not 100\% sure that the interactive part confirms to the conditions I formulated for sequential rejection. It may need a slight reformulation of the SRP. Didn't check that. Also SRP will not give any verdict on admissibility.

\section{Sensitivity analysis}
The \apt is proved to have valid error control when the nulls are mutually independent and independent of the non-nulls. In this section, we evaluate the performance of the \apt under correlated $p$-values. Our numerical experiments construct a grid of hypotheses as described in the setting in Section~\ref{set:cluster}. The p-values are generated as
\begin{align} \label{eq:p_dep}
    P_i = 1 - \Phi(Z_i), \text{where } Z = (Z_1, \ldots, Z_n) \sim N(\mu, \Sigma),
\end{align}
where $\mu = 0$ for the nulls and $\mu = 3$ for the non-nulls. The covariance matrix $\Sigma$, which is identity matrix in the main paper, is now set to an equi-correlated matrix:
\begin{align} \label{eq:rho_mat}
    \begin{bmatrix}
1 & \rho & \cdots & \rho\\
\rho & 1 & \cdots & \rho\\
\vdots & \vdots & \vdots & \vdots\\
\rho & \rho & \cdots & 1
\end{bmatrix}.
\end{align}

\begin{figure*}[h!]
    \centering
    \begin{subfigure}[t]{0.24\textwidth}
        \centering
        \includegraphics[width=0.8\linewidth]{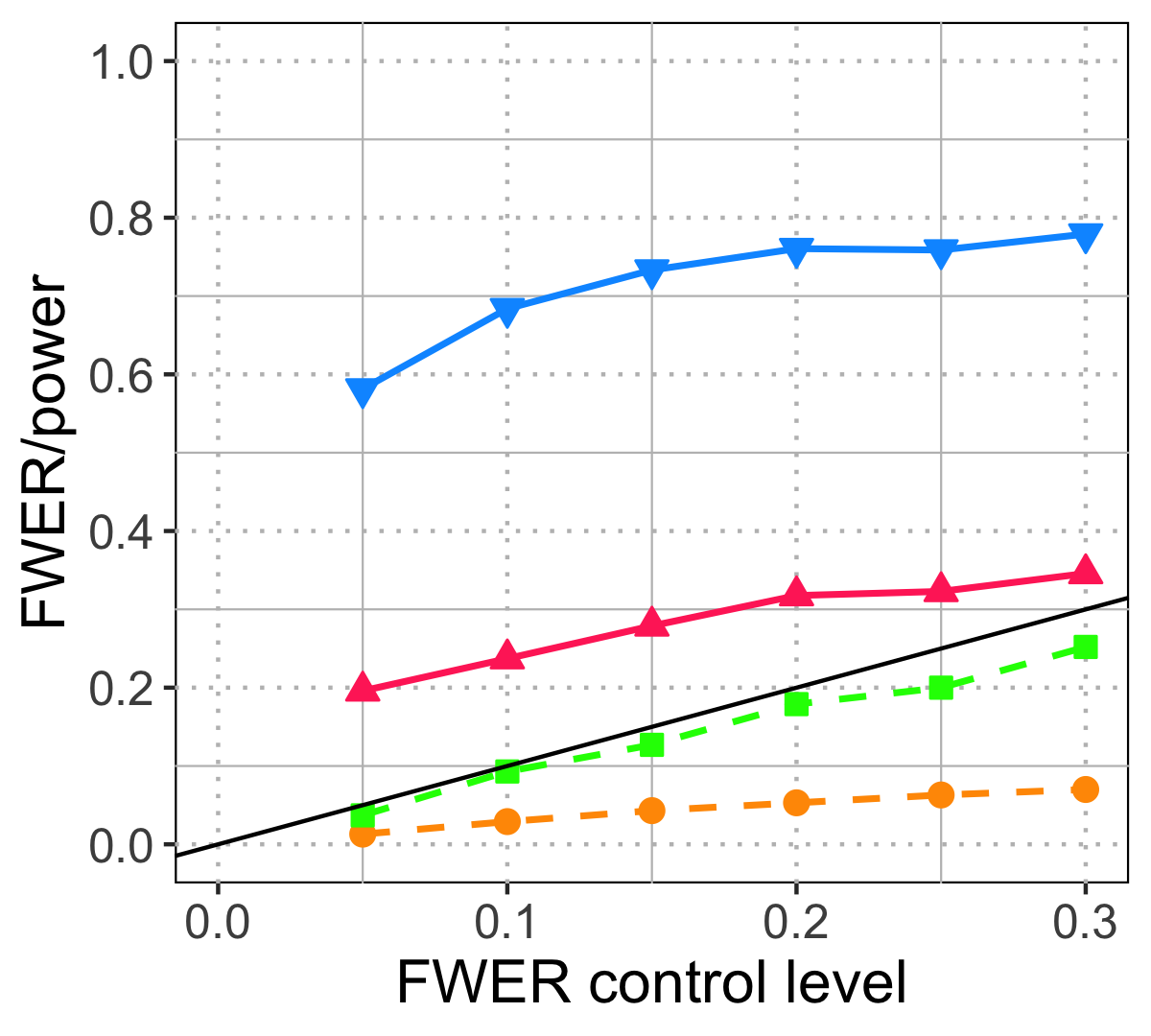}
        \caption{Positively correlated case where $\rho = 0.5$ in the covariance matrix~\eqref{eq:rho_mat}. The non-null mean value is $3$.}
        \label{fig:dep_pos}
    \end{subfigure}
    \hfill
    \begin{subfigure}[t]{0.24\textwidth}
        \centering
        \includegraphics[width=0.8\linewidth]{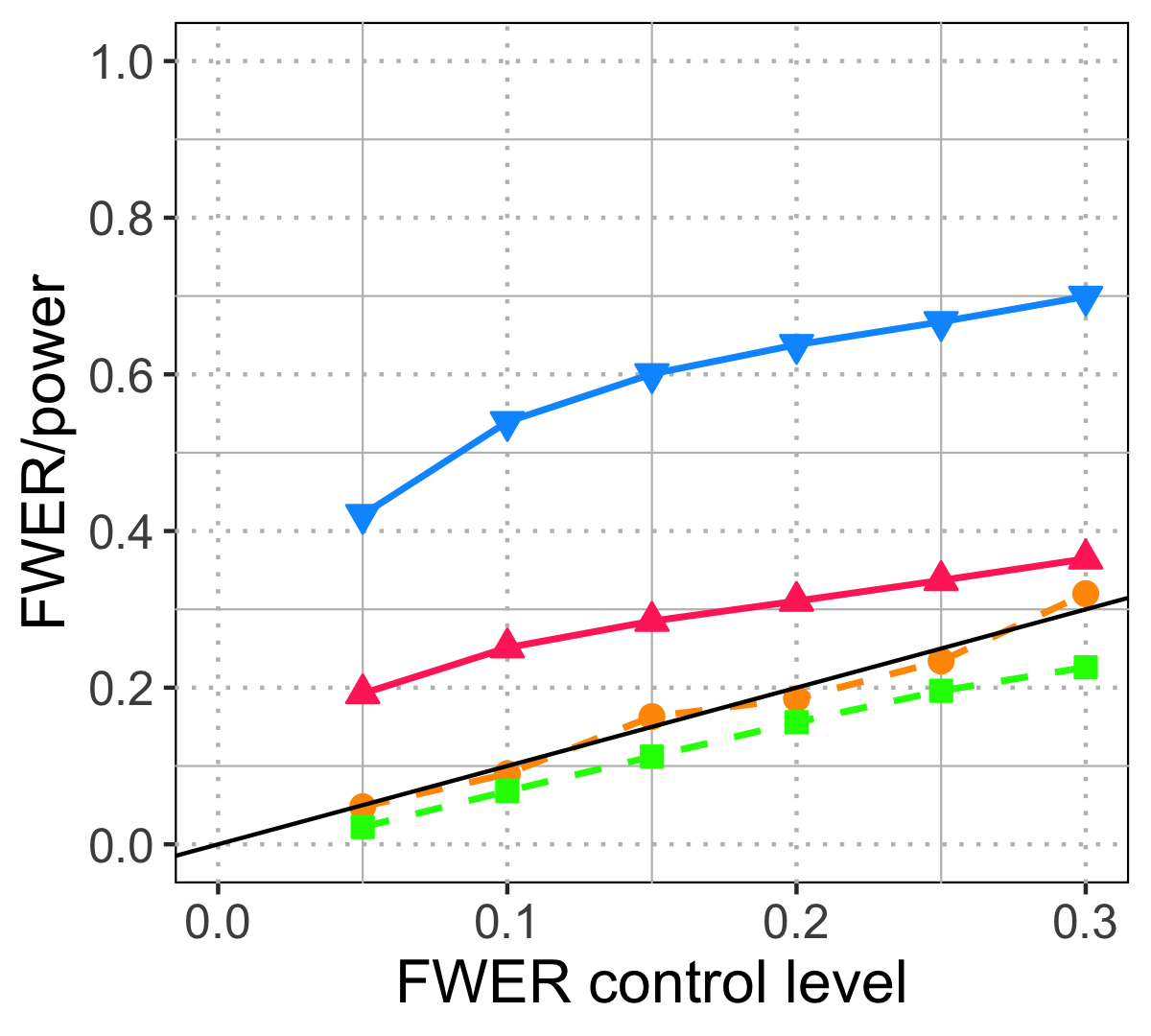}
        \caption{Negatively correlated case where $\rho = -0.5/n$ in the covariance matrix~\eqref{eq:rho_mat}. Non-null mean value is $3$.}
        \label{fig:dep_neg}
    \end{subfigure}
    \hfill
    \begin{subfigure}[t]{0.24\textwidth}
        \centering
        \includegraphics[width=0.8\linewidth]{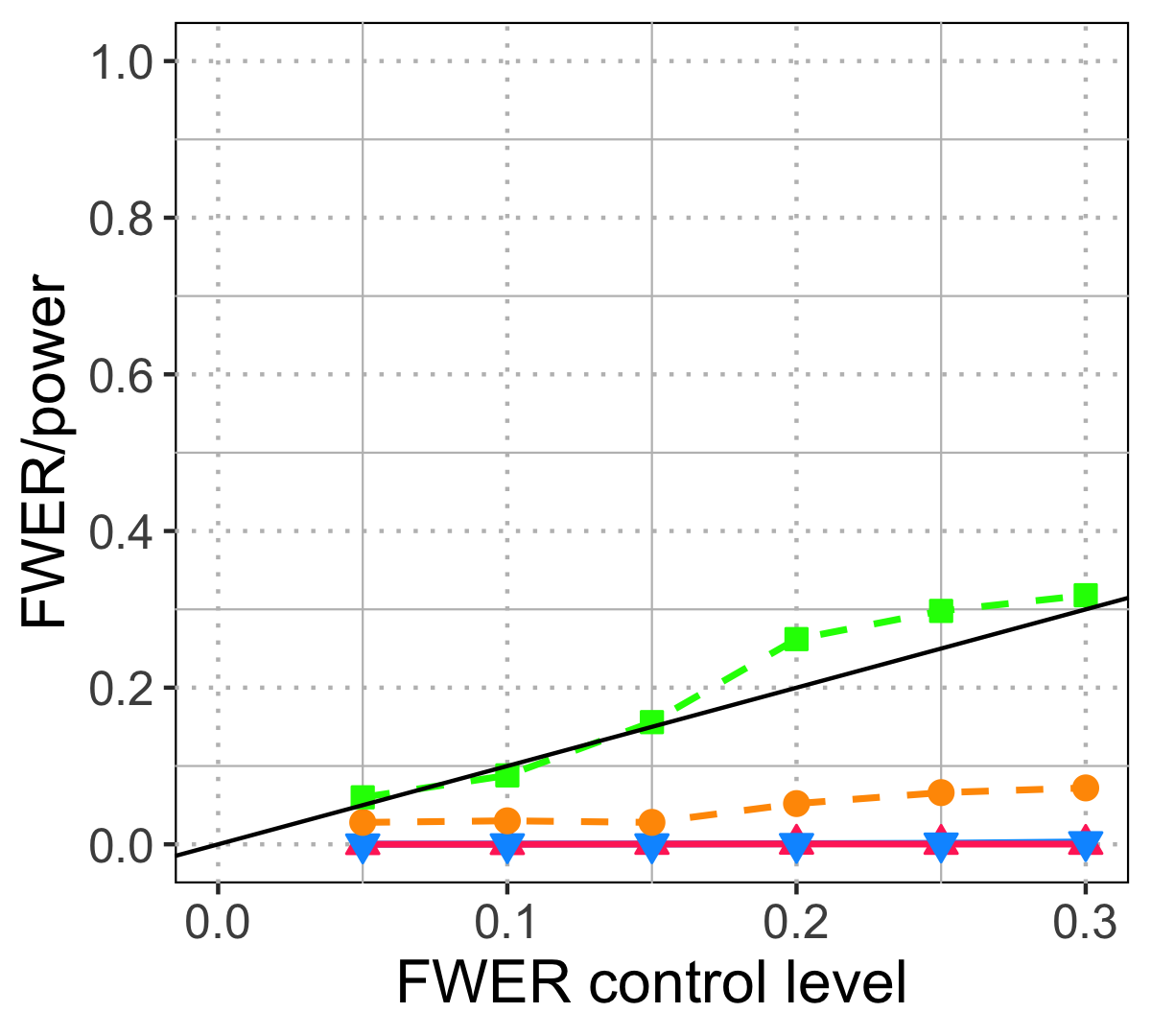}
        \caption{Negatively correlated case where $\rho = -0.5/n$ in the covariance matrix~\eqref{eq:rho_mat}. All hypotheses are nulls.}
        \label{fig:dep_pos_0}
    \end{subfigure}
    \hfill
    \begin{subfigure}[t]{0.24\textwidth}
        \centering
        \includegraphics[width=0.8\linewidth]{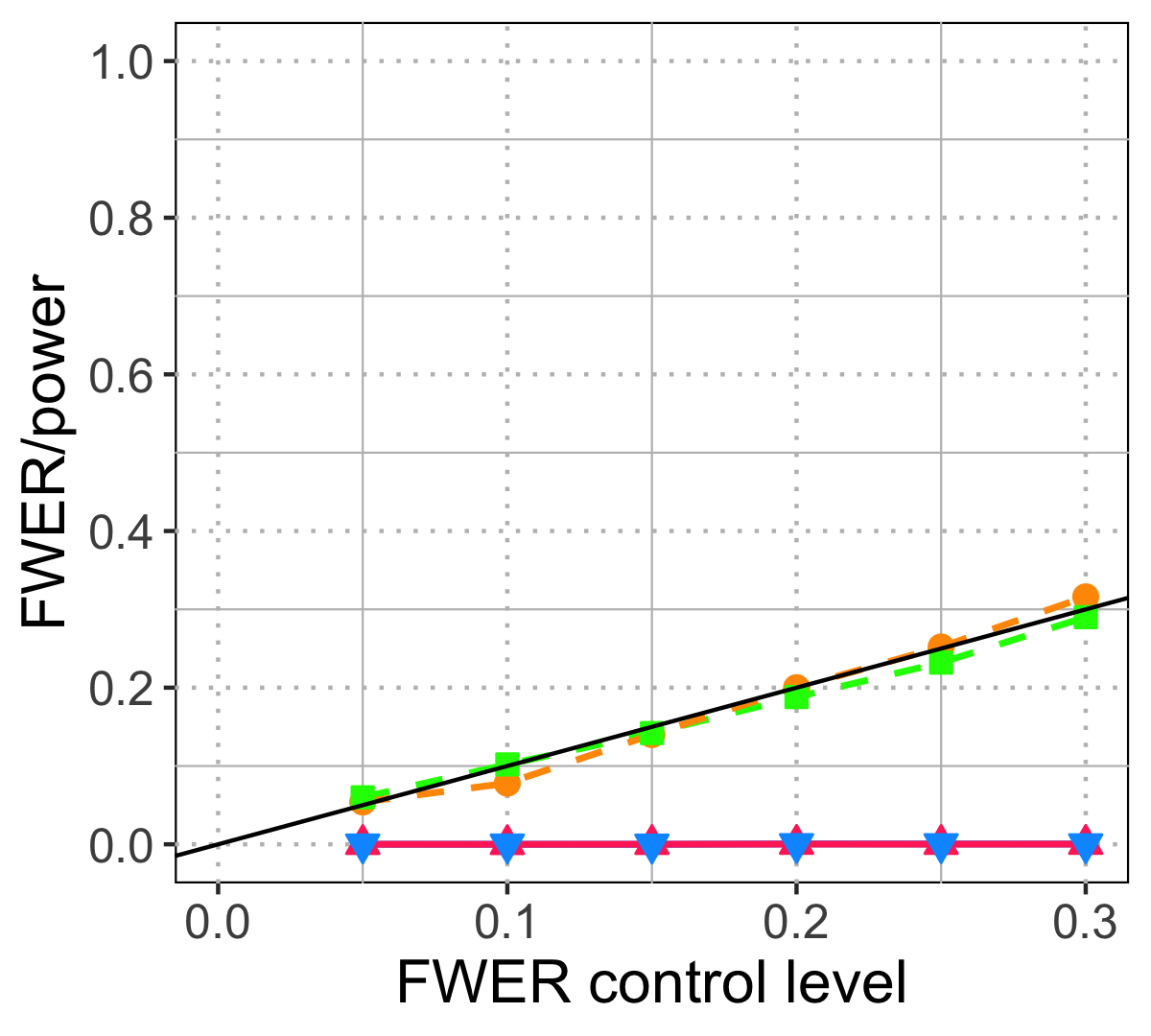}
        \caption{Negatively correlated case where $\rho = -0.5/n$ in the covariance matrix~\eqref{eq:rho_mat}. All hypotheses are nulls.}
        \label{fig:dep_neg_0}
    \end{subfigure}
\vfill
   \begin{subfigure}[t]{1\textwidth}
        \centering
        \includegraphics[width=0.6\linewidth]{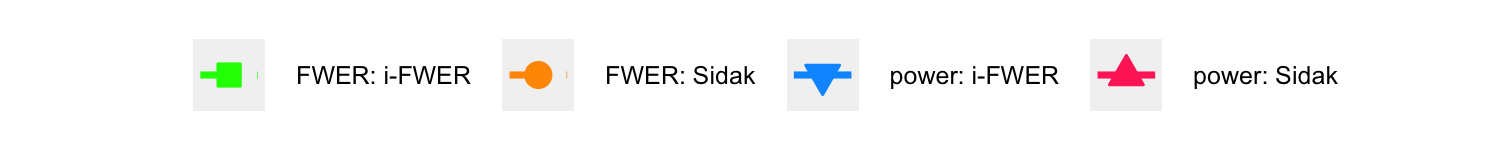}
    \end{subfigure}
    \caption{FWER and power of the \apt and the {\v{S}}id{\'a}k correction for dependent p-values generated by Gaussians as in~\eqref{eq:p_dep} with covariance matrix~\eqref{eq:rho_mat} when the targeted level of FWER control varies in $(0.05, 0.1, 0.15, 0.2, 0.25, 0.3)$. The \apt appears to control FWER below the targeted level and has relatively high power.}
    \label{fig:depend}
\end{figure*}

Under both the positively correlated case ($\rho = 0.5$) and the negatively correlated case ($-\rho = 0.5/n$ to guarantee that $\Sigma$ is positive semi-definite), the \apt seems to maintain the FWER control at most target levels even when all the hypotheses are nulls (see Figure~\ref{fig:dep_pos_0} and~\ref{fig:dep_neg_0}), and has higher power than the {\v{S}}id{\'a}k correction (see Figure~\ref{fig:dep_pos} and~\ref{fig:dep_neg}).

\section{More results on the application to genetic data}
Section~\ref{sec:real_data} presents the number of rejections of the \apt when the masking uses the tent function. We evaluate the \apt when using the other three masking functions under the same experiments, but for simplicity, we only present the result when the FWER control is at level $\alpha = 0.2$ (see Table~\ref{tab:airway_add}). Overall, the gap function leads to a similar number of rejections as the tent function, consistent with the numerical experiments. However, the railway (gap-railway) function leads to fewer rejections than the tent (gap) function, which seems counterintuitive. Upon a closer look at the $p$-values, we find that the null $p$-values are not uniform or have an increasing density (see Figure~\ref{fig:hist_airway}). As a result, when using the tent function, there are fewer masked $p$-values from the nulls that could be confused with those of the non-nulls (with huge $p$-values), compared with using the railway function where the masked $p$-values of the confused nulls are those close to the masking parameter (around $0.02$).

\begin{table}[ht]
\caption{Number of rejections by \apt using different masking functions when $\alpha = 0.2$. The tent function and the gap function leads to more rejections compared with the railway function and the gap-railway function. The parameters in the gap and gap-railway function are set to $p_l = p_*$ and $p_u = 0.5$, and we need $p_l < \alpha/2$ for the test to make any rejection under level $\alpha$.}
\label{tab:airway_add}
\centering
\begin{tabular}{|c|c|c|c|}
\hline
Masing function & $p_* = \alpha/2$ & $p_* = \alpha/10$ & $p_* = \alpha/20$ \\\hline
Tent & 1752 & 1848 & 1794\\ \hline
Railway & 1778 & 1463 & 1425\\ \hline
Gap & NA & 1802 & 1846 \\ \hline
Gap-railway & NA & 1764 & 1788\\ \hline
\end{tabular}
\end{table}

\begin{figure*}[ht]
    \centering
        \includegraphics[width=0.2\linewidth]{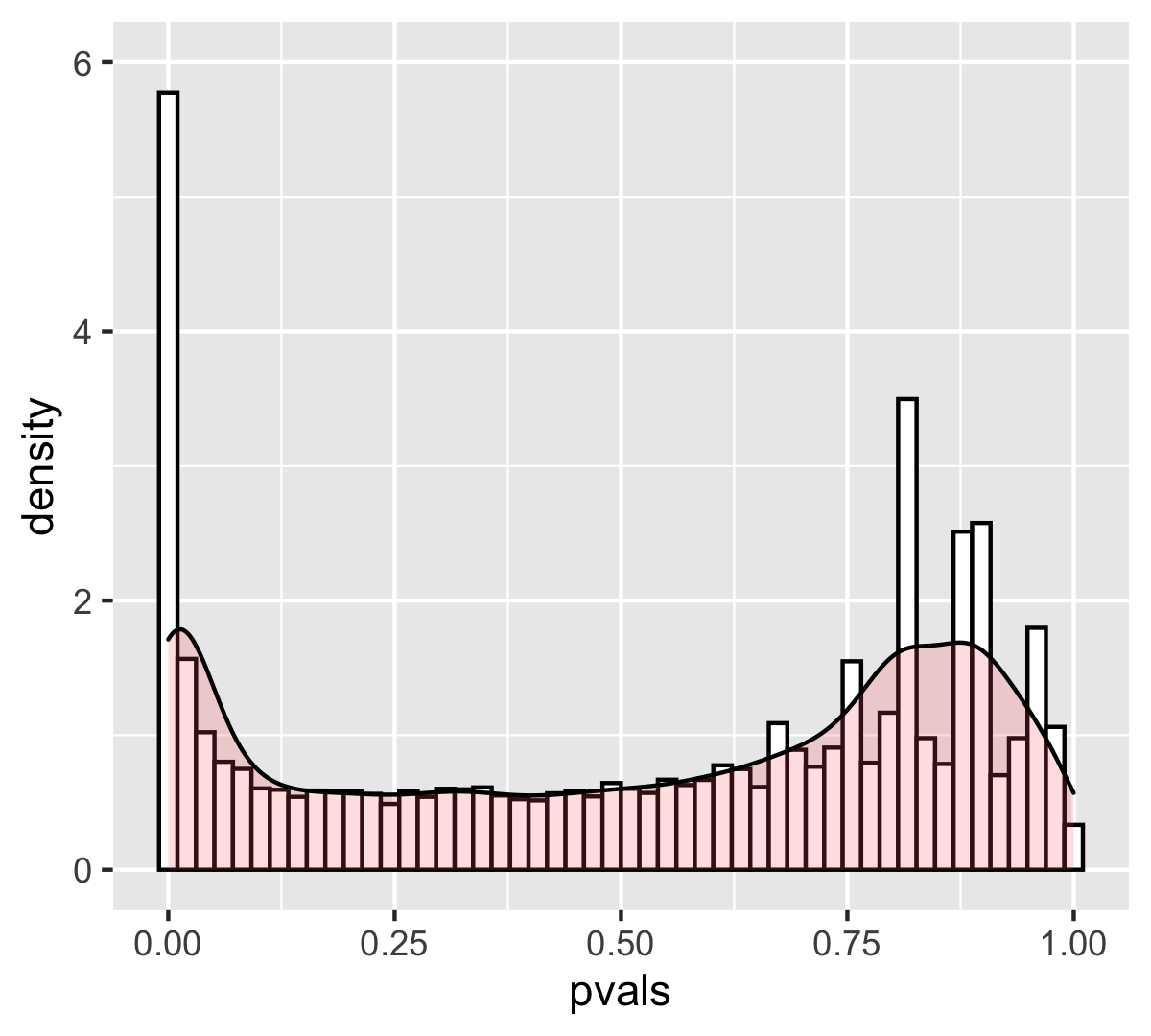}
        \caption{Histogram of $p$-values in the airway dataset. The number of $p$-values that are close to one is less than those that are close to the cutting point of the masking function (say $0.02$). Consequently, the tent (gap) function leads to more rejections than the railway (gap-railway) function.}
    \label{fig:hist_airway}
\end{figure*}

\section{Error control for other masking functions} \label{apd:error_mask}

The proof in Appendix~\ref{apd:thm1} is for the \apt with the original tent masking function. In this section, we check the error control for two new masking functions introduced in Section~\ref{sec::mask}.

\subsection{The railway function} 
We show that the \apt with the ``railway'' function~\eqref{eq:mask_railway} has FWER control, if the null $p$-values have non-decreasing densities. We again assume the same independence structure as in Theorem~\ref{thm:fwer} that the null $p$-values are mutually independent and independent of the non-nulls.  

The proof in Appendix~\ref{apd:thm1} implies that under the same independence assumption, the FWER control is valid if the null $p$-values satisfy condition~\eqref{eq:cond_cons}. When using the railway masking function, condition~\eqref{eq:cond_cons} is indeed satisfied if the null $p$-values have nondecreasing $f$ since
\begin{align*}
    \mathbb{P}(h(P) = 1 \mid g(P) = a)
    =~& \frac{p_* f(a)}{p_* f(a) + (1 - p_*)f(\frac{1 - p_*}{p_*}a + p_*)} {}\\
    =~& \frac{p_*}{p_* + (1 - p_*)f(\frac{1 - p_*}{p_*}a + p_*)/ f(a)} {}\\
    \leq~& p_*,
\end{align*}
for every $a \in (0, p_*)$. Then, we can prove the FWER control following the same argument as Appendix~\ref{apd:thm1}.

\subsection{The gap function}
The essential difference of using the gap function instead of the tent function is that here, $\one\left(h(P) = 1\right)$ for the nulls follow a Bernoulli distribution with a different parameter, $\widetilde{p} = \mathbb{P}(h(P) = 1 \mid P < p_l \text{ or } P > p_u) = \frac{p_l}{p_l + 1 - p_u}$. Specifically, we replace replace condition~\eqref{eq:cond_cons} by  
\begin{align} \label{eq:cond_gap}
    \mathbb{P}(h(P) = 1 \mid g(P) = a) \leq  \widetilde{p},
\end{align} 
for every $a \in (0, p_l)$, which holds for $p$-values with non-decreasing densities because
\begin{align*}
    \mathbb{P}(h(P) = 1 \mid g(P) = a)
    =~& \frac{p_l f(a)}{p_l f(a) + (1 - p_u)f(1 - \frac{1 - p_u}{p_l}a)} {}\\
    =~& \frac{p_l}{p_l + (1 - p_u)f(1 - \frac{1 - p_u}{p_l}a)/ f(a)} {}\\
    \leq~& \widetilde{p}.
\end{align*}

We also replace all $p_*$ by $\widetilde{p}$ and get a the new FWER estimator $\widehat{\text{FWER}_t}$ as defined in~\eqref{eq:gap_threshold}, and the error control can be proved following Appendix~\ref{apd:thm1}.

\subsection{The gap-railway function}
The proof is the same as that for the gap function except condition~\eqref{eq:cond_gap} is verified for $p$-values with non-decreasing densities differently as follow:
\begin{align*}
    \mathbb{P}(h(P) = 1 \mid g(P) = a)
    =~& \frac{p_l f(a)}{p_l f(a) + (1 - p_u)f(\frac{1 - p_u}{p_l}a + p_u)} {}\\
    =~& \frac{p_l}{p_l + (1 - p_u)f(\frac{1 - p_u}{p_l}a + p_u)/ f(a)} {}\\
    \leq~& \widetilde{p},
\end{align*}
for every $a \in (0, p_l)$.

\section{Varying the parameters in the presented masking functions} \label{apd::para}

We first discuss the original tent masking~\eqref{eq:decomp_varyp}, which represents a class of masking functions parameterized by $p_*$. Similar to the discussion in Section~\ref{sec::mask}, varying $p_*$ also changes the amount of $p$-value information distributed to $g(P)$ for interaction (to exclude possible nulls) and $h(P)$ for error control (by estimating FWER), potentially influencing the test performance. On one hand, the masking function with smaller $p_*$ effectively distributes less information to $g(P)$, in that a larger range of big $p$-values is mapped to small $g(P)$ (see Figure~\ref{fig:tent}). In such a case, the true non-nulls with small $p$-values and small $g(P)$ are less distinctive, making it difficult to exclude the nulls from $\mathcal{R}_t$. On the other hand, the rejected hypotheses in $\mathcal{R}_t^+$ must satisfy $P < p_*$, so smaller $p_*$ leads to less false rejections given the same $\mathcal{R}_t$.

Experiments show little change in power when varying the value of $p_*$ in $(0, \alpha)$ as long as it is not near zero, as it would leave little information in $g(P)$. Our simulations follow the setting in Section~\ref{set:cluster}, where the alternative mean value is fixed at $\mu = 3$. We tried seven values of $p_*$ as $(0.001, 0.005, 0.01, 0.05, 0.1, 0.15, 0.2)$, and the power of the \apt does not change much for $p_* \in (0.05, 0.2)$. This trend also holds when varying the mean value of non-nulls, the size of the grid (with a fixed number of non-nulls), and the number of non-nulls (with a fixed size of the grid). In general, the choice of $p_*$ does not have much influence on the power, and a default choice can be $p_* = \alpha/2$.  

There are also parameters in two other masking functions proposed in Section~\ref{sec::mask}. The railway function flips the tent function without changing the distribution of $p$-value information, hence the effect of varying $p_*$ should be similar to the case in the tent function. The gap function~\eqref{eq:mask_gap} has two parameters: $p_l$ and $p_u$. The tradeoff between information for interaction and error control exhibits in both values of $p_l$ and $p_u$: as $p_l$ decreases (or $p_u$ increases), more $p$-values are available to the analyst from the start, guiding the procedure of shrinking $\mathcal{R}_t$, while the estimation of FWER becomes less accurate. Whether revealing more information for interaction should depend on the problem settings, such as the amount of prior knowledge.

\section{Mixture model for the non-null likelihoods} \label{apd:em}

\paragraph{Two groups model for the $p$-values.} 
Define the $Z$-score for hypothesis $H_i$ as $Z_i = \Phi^{-1}(1 - P_i)$, where $\Phi^{-1}$ is the inverse function of the CDF of a standard Gaussian. Instead of modeling the $p$-values, we choose to model the $Z$-scores since when testing the mean of Gaussian as in~\eqref{eq:normal_hp}, $Z$-scores are distributed as a Gaussian either under the null or the alternative:
\[
H_0: Z_i \overset{d}{=} N(0,1) \quad \text{ versus } \quad H_1: Z_i \overset{d}{=} N(\mu,1),
\]
where $\mu$ is the mean value for all the non-nulls. We model $Z_i$ by a mixture of Gaussians:
\[
Z_i \overset{d}{=} (1 - q_i) N(0,1) + q_i N(\mu,1), \text{ with } q_i \overset{d}{=} \mathrm{Bernoulli}(\pi_i),
\]
where $q_i$ is the indicator of whether the hypothesis $H_i$ is truly non-null.

The non-null structures are imposed by the constraints on $\pi_i$, the probability of being non-null. In our examples, the blocked non-null structure is encoded by fitting $\pi_i$ as a smooth function of the hypothesis position (coordinates)~$x_i$, specifically as a logistic regression model on a spline basis $B(x) = (B_1(x), \ldots, B_m(x))$:
\begin{align} \label{eq:p_model_block}
    \pi_\beta(x_i) = \frac{1}{1 + \exp(- \beta^T B(x_i))},
\end{align}
 
\paragraph{EM framework to estimate the non-null likelihoods.}
\color{black}
An EM algorithm is used to train the model. Specifically we treat the $p$-values as the hidden variables, and the masked $p$-values $g(P)$ as observed. In terms of the $Z$-scores, $Z_i$ is a hidden variable and the observed variable $\Tilde{Z_i}$ is 
\begin{align*}
\Tilde{Z_i} = \begin{cases}
Z_i, & \text{ if }Z_i > \Phi^{-1}(1 - p_*),\\
t(Z_i),  & \text{ otherwise},
\end{cases}
\end{align*}
where $t(Z_i)$ depends on the form of masking. The updates needs values of its inverse function $t^{-1}(\Tilde{Z_i})$ and the derivative of $t^{-1}(\cdot)$, denoted as $\left(t^{-1}\right)'(\Tilde{Z_i})$, whose exact forms are presented below. 
\begin{enumerate}
    \item For tent masking~\eqref{eq:decomp_varyp},
    \begin{align*}
        t(Z_i) =~& \Phi^{-1}\left[1 - \frac{p_*}{1 - p_*}\Phi(Z_i)\right];{}\\
        t^{-1}(\Tilde{Z_i}) =~& \Phi^{-1}\left[\frac{1 - p_*}{p_*}\left(1 - \Phi(\Tilde{Z_i})\right)\right];{}\\
        \left(t^{-1}\right)'(\Tilde{Z_i}) =~& -\frac{1 - p_*}{p_*} \phi\left(\Tilde{Z_i}\right) \Big/ \phi\left(t^{-1}(\Tilde{Z_i})\right),
    \end{align*}
    where $\phi(\cdot)$ is the density function of standard Gaussian.
    
    \item For railway masking~\eqref{eq:mask_railway},
    \begin{align*}
        t(Z_i) =~& \Phi^{-1}\left[1 - p_* + \frac{p_*}{1 - p_*}\Phi(Z_i)\right];{}\\
        t^{-1}(\Tilde{Z_i}) =~& \Phi^{-1}\left[\frac{1 - p_*}{p_*}\left(\Phi(\Tilde{Z_i}) - 1 + p_*\right)\right];{}\\
        \left(t^{-1}\right)'(\Tilde{Z_i}) =~& \frac{1 - p_*}{p_*} \phi\left(\Tilde{Z_i}\right) \Big/ \phi\left(t^{-1}(\Tilde{Z_i})\right).
    \end{align*}
    
    \item For gap masking~\eqref{eq:mask_gap},
    \begin{align*}
        t(Z_i) =~& \Phi^{-1}\left[1 - \frac{p_l}{1 - p^u}\Phi(Z_i)\right];{}\\
        t^{-1}(\Tilde{Z_i}) =~& \Phi^{-1}\left[\frac{1 - p_u}{p_l}\left(1 - \Phi(\Tilde{Z_i})\right)\right];{}\\
        \left(t^{-1}\right)'(\Tilde{Z_i}) =~& -\frac{1 - p_u}{p_l} \phi\left(\Tilde{Z_i}\right) \Big/ \phi\left(t^{-1}(\Tilde{Z_i})\right).
    \end{align*}
    if $Z_i < \Phi^{-1}(1 - p_u)$. If $\Phi^{-1}(1 - p_u) \leq Z_i \leq \Phi^{-1}(1 - p_l)$, which corresponds to the skipped $p$-value between $p_l$ and~$p_u$, then $\Tilde{Z_i} = Z_i$.
    
    \item For gap-railway masking~\eqref{eq:mask_gap_railway},
    \begin{align*}
        t(Z_i) =~& \Phi^{-1}\left[1 - \frac{p_l}{1 - p^u}\Phi(Z_i)\right];{}\\
        t^{-1}(\Tilde{Z_i}) =~& \Phi^{-1}\left[\frac{1 - p_u}{p_l}\left(\Phi(\Tilde{Z_i}) - 1 + p_l\right)\right];{}\\
        \left(t^{-1}\right)'(\Tilde{Z_i}) =~& \frac{1 - p_u}{p_l} \phi\left(\Tilde{Z_i}\right) \Big/ \phi\left(t^{-1}(\Tilde{Z_i})\right).
    \end{align*}
    if $Z_i < \Phi^{-1}(1 - p_u)$. If $\Phi^{-1}(1 - p_u) \leq Z_i \leq \Phi^{-1}(1 - p_l)$, which corresponds to the skipped $p$-value between $p_l$ and~$p_u$, then $\Tilde{Z_i} = Z_i$.
\end{enumerate}

Define two sequences of hypothetical labels $w_i = \one\{Z_i = \Tilde{Z_i}\}$ and $q_i = \one\{H_i = 1\}$, where $H_i = 1$ means hypothesis $i$ is truly non-null ($H_i = 0$ otherwise). The log-likelihood of observing $\Tilde{Z_i}$ is
\begin{align*}
    l(\Tilde{Z_i}) =~& w_i q_i \log\left\{ \pi_i\phi\left(\Tilde{Z_i}-\mu\right)\right\}
    + w_i(1 - q_i) \log \left\{ (1 - \pi_i)\phi\left(\Tilde{Z_i}\right)\right\}{}\\
    +~& (1 - w_i)q_i \log \left\{\pi_i \phi\left(t^{-1}(\Tilde{Z_i})-\mu\right)\right\}
    + (1 - w_i)(1-q_i)\log \left\{(1 - \pi_i) \phi \left( t^{-1}(\Tilde{Z_i})\right)\right\}.
\end{align*}

The E-step updates $w_i,q_i$. Notice that $w_i$ and $q_i$ are not independent, and hence we update the joint distribution of $(w_i, q_i)$, namely
\[
\mathbb{E}[w_iq_i] =: a_i, \quad \mathbb{E}[w_i(1 - q_i)] =: b_i, \quad  \mathbb{E}[(1 - w_i)q_i] =: c_i, \quad \mathbb{E}[(1 - w_i)(1 - q_i)] =: d_i,
\]
where $a_i + b_i + c_i + d_i = 1$. To simplify the expression for updates, we denote
\[
L_i := \pi_i\phi\left(\Tilde{Z_i}-\mu\right) + (1 - \pi_i)\phi\left(\Tilde{Z_i}\right) + \left|\left(t^{-1}\right)'(\Tilde{Z_i})\right| \pi_i \phi\left(t^{-1}(\Tilde{Z_i})-\mu\right) + \left|\left(t^{-1}\right)'(\Tilde{Z_i})\right| (1 - \pi_i) \phi \left( t^{-1}(\Tilde{Z_i})\right).
\]
For the hypothesis $i$ whose $p$-value is masked, the updates are 
\begin{align*}
    a_{i, \text{new}} =~& \mathbb{E}[w_iq_i \mid \Tilde{Z_i}] =  \pi_i\phi\left(\Tilde{Z_i}-\mu\right) \Big/ L_i;{}\\
    b_{i, \text{new}} =~& \mathbb{E}[w_i(1 - q_i) \mid \Tilde{Z_i}] = (1 - \pi_i)\phi\left(\Tilde{Z_i}\right) \Big/ L_i;{}\\
    c_{i, \text{new}} =~& \mathbb{E}[(1 - w_i)q_i \mid \Tilde{Z_i}] = \left|\left(t^{-1}\right)'(\Tilde{Z_i})\right| \pi_i \phi\left(t^{-1}(\Tilde{Z_i})-\mu\right) \Big/ L_i;{}\\
    d_{i, \text{new}} =~& \mathbb{E}[(1 - w_i)(1 - q_i) \mid \Tilde{Z_i}] = \left|\left(t^{-1}\right)'(\Tilde{Z_i})\right| (1 - \pi_i) \phi \left( t^{-1}(\Tilde{Z_i})\right) \Big/ L_i.
\end{align*}
If the $p$-value is unmasked for $i$, the updates are
\begin{align*}
    a_{i, \text{new}} =~& \left(1 + \frac{(1-\pi_i)\phi\left(\Tilde{Z_i}\right)}{ \pi_i\phi\left(\Tilde{Z_i}-\mu\right)}\right)^{-1};{}\\
    b_{i, \text{new}} =~& 1 - a_{i, \text{new}}; \quad c_{i, \text{new}} = 0; \quad d_{i, \text{new}} = 0.
\end{align*}

\iffalse
For $w_i$, if the $p$-value is unmasked, $w_{i,\text{new}} = 1$; otherwise the update is
\begin{align*}
    &w_{i,\text{new}} = \mathbb{E}[w_i\mid\Tilde{Z_i}] = \mathbb{E}[w_i q_i\mid\Tilde{Z_i}] + \mathbb{E}[w_i (1 - q_i)\mid\Tilde{Z_i}]{}\\
    =~& \left(1 + \frac{\pi_i \phi\left(t^{-1}(\Tilde{Z_i})-\mu\right) + (1-\pi_i)\phi\left(t^{-1}(\Tilde{Z_i})\right)}{\pi_i \phi\left(\Tilde{Z_i}-\mu\right) + (1-\pi_i)\phi\left(\Tilde{Z_i}\right)}\right)^{-1}.
\end{align*}
For $q_i$, if the $p$-value is unmasked, the update is
\begin{align*}
    q_{i,\text{new}} =  \mathbb{E}[q_i\mid\Tilde{Z_i}] = \left(1 + \frac{(1-\pi_i)\phi\left(\Tilde{Z_i}\right)}{ \pi_i\phi\left(\Tilde{Z_i}-\mu\right)}\right)^{-1};
\end{align*}
otherwise the update is
\begin{align*}
    &q_{i,\text{new}} = \mathbb{E}[q_i \mid \Tilde{Z_i}] 
     = \mathbb{E}[w_i q_i\mid\Tilde{Z_i}] + \mathbb{E}[(1 - w_i)q_i \mid \Tilde{Z_i}]{}\\
    =~& \left(1 + \frac{ (1-\pi_i)\phi\left(\Tilde{Z_i}\right) + (1-\pi_i)\phi\left(t^{-1}(\Tilde{Z_i})\right)}{\pi_i \phi\left(\Tilde{Z_i}-\mu\right) + \pi_i \phi\left(t^{-1}(\Tilde{Z_i})-\mu\right) }\right)^{-1}.
\end{align*}
\fi

In the M-step, parameters $\mu$ and $\beta$ (in model~\eqref{eq:p_model_block} for $\pi_i$) are updated. The update for $\mu$ is
\begin{align*}
    \mu_{\text{new}} = \argmax_\mu  \sum_i l(\Tilde{Z_i}) = \frac{\sum a_i\Tilde{Z_i} + c_it^{-1}(\Tilde{Z_i})}{\sum a_i + c_i}.
\end{align*}
The update for $\beta$ is 
\[
\beta_{\text{new}} = \argmax_\beta \sum_i (a_i + c_i)\log \pi_\beta(x_i) + (1-a_i - c_i) \log(1 - \pi_\beta(x_i)),
\]
where $\pi_\beta(x_i)$ is defined in equation~\eqref{eq:p_model_block}. It is equivalent to the solution of GLM (generalized linear model) with the logit link function on data $\{a_i + c_i\}$ using covariates $\{B(x_i)\}$.

\end{document}